\NeedsTeXFormat{LaTeX2e}
\documentclass[12pt]{article}

\usepackage{hyperref}
\usepackage{latexsym,amsmath}
\usepackage{enumerate}
\usepackage{amsfonts}
\usepackage{amssymb}
\usepackage{amsthm}
\usepackage{latexsym}
\usepackage{fixmath}
\usepackage{mathtools}
\usepackage[usenames,dvipsnames]{color}
\usepackage{mathdots}
\usepackage{nicematrix}
\usepackage[capitalize]{cleveref}
\usepackage{listings}
\usepackage{aligned-overset} 
\usepackage{tikz}
\usetikzlibrary{angles,quotes}
\def\myrad{1.8}
 
\usepackage{graphicx}
\usepackage{comment}

\usepackage[T1]{fontenc}
\usepackage{fourier}
\usepackage{bbm}
\usepackage{color}
\usepackage{fullpage}

\numberwithin{equation}{section}

\renewcommand{\vec}[1]{\boldsymbol{#1}}
\newcommand\cD{\mathcal D}
\newcommand\cU{\mathcal U}
\newcommand\cV{\mathcal V}
\newcommand\vu{\vec u}
\newcommand\vw{\vec w}

\newcommand\eps{\epsilon}
\newcommand{\bck}[1]{\left\langle{#1}\right\rangle}
\newcommand\norm[1]{\left\|{#1}\right\|}
\newcommand\abs[1]{\left|{#1}\right|}

\newtheorem{definition}{Definition}[section]
\newtheorem{claim}[definition]{Claim}
\newtheorem{remark}[definition]{Remark}
\newtheorem{theorem}[definition]{Theorem}
\newtheorem{lemma}[definition]{Lemma}

\DeclareMathOperator{\sign}{sign}
\DeclareMathOperator{\slerp}{slerp}
\DeclareMathOperator{\NC}{NC}

\usepackage{graphicx} 
\usepackage{caption}
\usepackage{subcaption}

\usepackage[style=alphabetic, maxbibnames=99, maxalphanames=4,
minalphanames=4,url=false,isbn=false,doi=false,eprint=false]{biblatex}
\addbibresource{arxiv_v2_bibliography.bib}

\title{Inevitability of Polarization in Geometric Opinion Exchange}

\author{
Abdou Majeed Alidou\footnote{AIMS Rwanda. Email:\url{abdou@aims.edu.gh}.}
\and
Júlia Baligács\footnote{University of Oxford. Email:\url{jbaligacs.gmail.com}}
\and
Max Hahn-Klimroth \footnote{Goethe University Frankfurt, Email:\url{hahnklim@mathematik.uni-frankfurt.de}. Partly supported by DFG FOR 2975.}
\and
Jan Hązła\footnote{AIMS Rwanda. Email:\url{jan.hazla@aims.ac.rw}.
A.M.A.~and J.H.~were
supported by the Alexander von Humboldt Foundation German research chair
funding and the associated DAAD projects No. 57610033
and No. 57761435.}
\and
Lukas Hintze \footnote{Universität Hamburg, Email:\url{lukas.rasmus.hintze@uni-hamburg.de}.}
\and
Olga Scheftelowitsch \footnote{TU Dortmund, Email:\url{olga.scheftelowitsch@udo.edu}. Supported by DFG CO 646/5.}
}

\date{}

\begin{document}

\maketitle

\begin{abstract}
    Polarization and unexpected correlations
    between opinions on diverse topics 
    are an object of sustained attention. However, numerous theoretical models
    do not seem to convincingly explain these phenomena.
    
    This paper is motivated by a recent line of work,
    studying models where polarization can be
    explained in terms of biased assimilation and geometric interplay 
    between opinions on various topics.
    The agent opinions are represented as unit vectors on a multidimensional
    sphere and updated according to geometric rules.
    In contrast to previous work, we focus on the classical opinion exchange
    setting, where the agents update their opinions in discrete time steps,
    with a pair of agents interacting randomly at every step.
    
    Our findings are twofold. First, polarization appears to be ubiquitous
    in the class of models we study, requiring only relatively modest assumptions on the update functions,
    reflecting biased assimilation. Second, there is 
    a qualitative difference between two-dim\-en\-sional
    dynamics on the one hand, and three or more dimensions on the other.
    Accordingly, we prove almost sure polarization for a large class of update rules in two dimensions.
    Then, we prove polarization in three and more dimensions in
    more limited cases and try to shed light on central difficulties
    absent in two dimensions.
\end{abstract}

\section{Introduction}\label{sec:Intro}

It is of great interest in politics and economy to understand how individuals in groups and societies exchange and update their opinions based on influences by new events and mutual interactions.
A salient aspect of opinion evolution is the tendency towards \emph{polarization}.
Numerous dynamics are at play, and the exact definition of polarization is not always apparent. 
For example, \emph{issue alignment} refers to the correlation between opinions of different topics, i.e., if two individuals agree on one topic, they are typically more likely to also agree on some other topic.
Another aspect is \emph{issue radicalization} where individuals build strong opinions on a specific topic, sharply different from other individuals.
In this paper, by \emph{polarization}, we refer to the extreme case of issue alignment where a society splits into two groups where the individuals in the same group mostly agree but mostly disagree with individuals from the other group.
Several authors from a variety of backgrounds are increasingly observing polarization and warning about this phenomenon, e.g., in the context of the United States \cite{BG08,mason2018uncivil,french2020divided,klein2020we}.

More broadly, the desire to understand the evolution
of opinions in societies
has led to extensive theoretical research on modelling and analyzing the relevant dynamics (see, for instance, surveys~\cite{MT17, grabisch2020survey}).
While many of the suggested models, especially those that do not allow for interactions between multiple topics, do not seem to account well for polarization, a recent line of work postulated models where polarization emerges more naturally \cite{HJMR, gaitonde2021polarization}.
Those models are \emph{geometric} in the sense that opinions of individuals (also referred to as an \emph{agents}) are represented as points in the Euclidean space evolving according to geometric rules, in particular based on the scalar
products between interacting opinions.

In particular, Hązła, Jin, Mossel and Ramnarayan~\cite{HJMR} introduced a model where the agent opinions
are represented as vectors on a multidimensional unit sphere. Therein, each dimension may represent a topic or idea (e.g., climate change, gun laws), and the coordinate
value reflects the individual’s stance on the topic. 
The assumption that opinions belong to the unit sphere reflects a fixed ``budget of attention'' of agents, i.e., if an individual is very interested in one specific topic, then they have less time to form a strong and confident opinion about some other topic.
An ``interaction'' updates opinions in a way
that reflects \emph{biased assimilation}:
Agents become ``attracted'' closer to opinions that
align well with their existing beliefs and ``repulsed'' by opposite opinions.
\cite{HJMR} then showed some scenarios leading to polarization, where over time agent opinions
converged to two antipodal groups.

Gaitonde, Kleinberg and Tardos \cite{gaitonde2021polarization}
studied this and related models further.
Pursuing the question of how ubiquitous 
the kind of polarization from~\cite{HJMR} is, 
they established a more general framework
and showed polarization results 
(including ``weak polarization'' in some cases)
in several other scenarios.

However, we believe that the question of ubiquity of polarization
in geometric opinion models is still far from explored.
In this work, we analyze a general class of models 
and show polarization in a wide class of scenarios of ``pure'' opinion exchange.

First, both~\cite{HJMR} and~\cite{gaitonde2021polarization} focus
on models where opinions evolve due to ``external'' influences, in the sense that every individual experiences the same impacts, independent of their initial opinion (representing, e.g., influences by the media).
In contrast, we study the classical
opinion exchange setting, where opinions evolve
due to pairwise interactions between agents, representing, e.g., one-on-one political discussions.

Second, we aim to work with more general update rules in order to understand what properties and assumptions drive polarization.
In~\cite{HJMR}, the updates were proportional 
to the scalar product between two opinions. In particular,
in their model, opinion $\vu$ ``attracts'' in exactly the same
measure as the opposite opinion $-\vu$ ``repulses''. Therefore,
the effect of intervening with opinions $\vu$ and $-\vu$ is exactly the same.
Similarly, most of the examples studied in~\cite{gaitonde2021polarization}
have this property, which they call ``sign-invariance''. 
We consider more general update rules not relying on this exact property.
We think this captures some more realistic,
asymmetric scenarios. For example, 
it seems natural to consider scenarios where opinions 
which are similar to an agent's opinion are likely 
to have a large attraction, while opposing opinions 
elicit only mild dislike.

Our main finding is that issue alignment and polarization
are ubiquitous in the settings that we study. 
We prove almost sure convergence to antipodal pairs of opinions for a robust
class of natural update functions. It appears that
the geometric interplay between multiple topics
and a modest version of biased assimilation
are sufficient to inevitably cause a strong version 
of issue alignment.

On the technical side, our results point towards a qualitative difference in the dynamics when the model is considered in dimensions $\geq 3$ opposed to dimension $2$, i.e., when the model represents more topics.
More precisely, let $d$ denote the dimension of the model in the sense that opinions are vectors on the $(d-1)$-dimensional sphere $\mathbb{S}^{d-1}\subseteq \mathbb{R}^d$.
For $d=2$, we show polarization for a natural class of \emph{stable} ({see \Cref{def:stable_function}) update functions unless there is a trivial obstacle preventing it (\Cref{thm:2d-main}).
However, in \Cref{sec:counterexample}, we show that an important property that we rely on in the proof of the two dimensional case does not hold for a larger number of dimensions.

For $d\ge 3$, we show almost sure polarization in
case of three agents (\Cref{thm:three-ops}). We also show almost sure
polarization for another natural class of \emph{active} update functions
(\Cref{thm:active}), for which the polarization is somewhat easier to establish.
Finally, we attempt to isolate the challenge in the case of more than three
opinions and stable functions. We define a notion of \emph{$(d,n)$-stable} 
update functions, which captures the crucial needed property of
escaping ``almost orthogonal'' configurations.
We prove that such functions
induce almost sure polarization.
However, we leave for future work to show that any specific function
actually has this property.

\paragraph{Related work}\label{par:related_work}
As mentioned, the two works which are most relevant to this one
are~\cite{HJMR} and~\cite{gaitonde2021polarization}. We refer an interested 
reader to these papers for additional literature discussion.

These two works mostly study settings where the opinions evolve due to
external influences. One could surmise that the polarization arises due to 
those external ``interventions''.
Here, we argue that it is inevitable also when such influences are absent and agents only interact with each other. 
Therefore, the polarization is more likely to be a general feature of the geometric structure utilized in this class of models.
Furthermore, we believe that there is value in 
rigorously showing that polarization arises generally rather than 
as a special property of specific update functions.

On a technical level, it might appear intuitive that
polarization occurs in our setting of pure opinion exchange.
However, we face challenges not present in models discussed above
due to lack of additional ``mixing'' introduced by external influences.
In particular, as discussed throughout the paper,
for many natural update functions
it seems challenging to deal with configurations where the agents are almost, but not entirely, partitioned into two orthogonal groups.
Furthermore, in some of the proofs we rely on potential functions to a smaller extent. Therefore, our proofs use some novel arguments compared
to~\cite{HJMR} and~\cite{gaitonde2021polarization}.

The ``party model'' from~\cite{gaitonde2021polarization} 
(see equation~(4) therein)
is arguably the  most similar to our setting.
In short, it is a mixture of external influences and (averaged) pairwise interactions. We think the closest comparison is to our model with the sign update function. In contrast, we study a different setting with only pairwise interactions and more general class of update functions. In particular, we believe the "stable" functions from our paper are technically more challenging. (The "party model" is more general in that it only requires a connected graph of interactions while our results are for fully supported distributions.)

Many classical opinion exchange models have been studied that tend to induce consensus
rather than disagreement.
One of the most striking examples is Aumann's agreement theorem for Bayesian models \cite{Aum76},
see also~\cite{GK03, MT17} for Bayesian models more generally. 
Further well-known examples include models introduced by DeGroot~\cite{DeGroot}, and Friedkin
and Johnsen~\cite{friedkin1990social}.
On the other hand, there is vast literature addressing various aspects of polarization.
In contrast to our work, much of this literature is focused on one-dimensional 
opinions
(e.g.,~\cite{MKFB03, DGL13, DVSC17, yang2022hybrid, chen2022polarizing}),
opinions or actions with discrete rather than continuous values (e.g.,~\cite{Axe97, PY18})
or update rules quite different from our geometric framework
(e.g.,~\cite{BB07,mi2023polarization}).

\section{Model and dynamics}\label{sec:model_and_dynamics}
In our model, we have a total of $n$ agents, each having an opinion represented as a vector on the unit sphere \(\mathbb{S}^{d-1}\). We denote the opinion of the $i$-th agent by $\vec{u}_i$. As explained above, each dimension may represent a topic, and each coordinate of $\vec{u}_i$ reflects the individual's stance on a given topic.

We then consider the dynamics with discrete time $t=0,1,2,\ldots$, where at each step $t$, a pair $(i,j)$ is chosen i.i.d.~from $[n]\times[n]$ according to a fixed distribution $\mathcal{D}$ with full support, referred to as an \emph{interaction distribution}.
We then say that, at time $t$, agent $j$ \emph{influences} agent $i$, or more generally
that $i$ and $j$ \emph{interacted} at time $t$. 
Then agent $i$ updates its opinion, where we define its new opinion \(\vec u_i'\) to be given by 
\begin{equation}
    \vec u_i' = \frac{\vec w}{\norm{\vec w}} , \; \text{where} \;
    \vec w = \vec u_i \, + \, f\left(\bck{\vec u_i, \vec u_j}\right) \cdot \vec u_j \;
    \label{eq:06}
\end{equation}
where $f:[-1,1]\to\mathbb{R}$ is a fixed \emph{update function} (we prove in \Cref{claim:new_correlation} that, in the update functions we consider, we never have $\vec w=0$. Therefore, the new opinion is well defined.).

We refer to a tuple of all agents' opinions  $\mathcal{U}=(\vu_1,\ldots,\vu_n)\in (\mathbb{S}^{d-1})^n$ as a \emph{configuration} and let $\mathcal{U}^{(t)}$ denote the configuration at time $t$. 
The most extreme version of issue alignment is captured by the following definition.
\begin{definition}\label{def:polarization}
A configuration $\mathcal{U}$ is \emph{polarized} if for every
$i,j$ either $\vu_i=\vu_j$ or $\vu_i=-\vu_j$.

We say that a sequence of configurations $\mathcal{U}^{(t)}$ polarizes
if  $\ \lim_{t\to\infty}\mathcal{U}^{(t)}$ exists and is a polarized configuration
(where convergence is in the standard topology
in $\mathbb{R}^{d}$).
\end{definition}
Note that, according to this definition, a consensus configuration is also considered polarized. 
While this might initially seem undesirable, it is unavoidable if we want to prove polarization for arbitrary initial configurations: Indeed, if all opinions are sufficiently close in the starting configuration, they will converge to consensus.
However, we show in \Cref{lem:concentration} that, when the initial configuration is drawn from a distribution satisfying certain symmetry conditions, the agents with high probability form two antipodal groups of roughly equal size. In other words, 
``true'' polarization occurs with high probability.

In the following, we study the conditions under which
a sequence of configurations $\mathcal{U}^{(t)}$ polarizes almost surely.

\subsection{Update functions}\label{subsec:update_functions}
The update function
signifies how much agent $i$ will change their opinion based
on the preexisting alignment $\bck{\vu_i,\vu_j}$. 
The biased assimilation assumption should reflect that
    if the opinions are similar (e.g., $\bck{\vu_i, \vu_j} \approx 1$), $\vu_i$ will be ``attracted'' towards
    $\vu_j$,
    and if they are dissimilar (e.g., $\bck{\vu_i, \vu_j} \approx -1$), then $\vu_i$ is ``repelled'' away
    from $\vu_j$.
Some natural examples of update functions are
$f(A)=\eta\cdot A$ suggested in~\cite{HJMR}
and $f(A)=\eta\cdot \sign(A)$ studied in~\cite{gaitonde2021polarization}
(in both cases there is a hyperparameter
$\eta>0$).

However, there are many
other natural possibilities. For example, 
we can assume different strengths of ``positive''
and ``negative'' interactions
i.e., for
some \(\eta_{+},\eta_{-}>0\),
\begin{equation}
\label{eq:asymmetric}
    f(A)=\begin{cases}
    \eta_+\cdot A&\text{ if $A\ge 0$,}\\
    \eta_-\cdot A&\text{ if $A< 0$.}
\end{cases}
\end{equation}
Another alternative is the asymmetric sign function
$f(A)=\eta_+\cdot \mathbb{I}[A\ge 0]-\eta_-\cdot\mathbb{I}[A<0]$
(or even $f(A)=\eta_+\cdot\mathbb{I}[A\ge t]
-\eta_-\cdot\mathbb{I}[A<t]$ for some $-1<t<1$).

\paragraph{Moving directly on the sphere}\label{par:moving_directly_on_the_sphere}
Our framework also captures update rules that can be expressed 
differently, arguably in a more natural way.
For example, consider the following scheme that avoids explicit normalization
from~\eqref{eq:06}:

Let $0<\eta<1$
and $\alpha = \cos^{-1}\left(\bck{\vu_i, \vu_j}\right)$ be the angle between $\vu_i$ and $\vu_j$ (i.e., the distance on the sphere between $i$ and $j$'s opinions).
Let $\phi = \frac{\eta}{2} \sin(2\alpha)$ be the distance by which $i$ will move its opinion towards/away from $j$ (positive sign: towards; negative sign: away from; other choices
are possible but a nice property of this one is $\phi\ge 0$ iff
$\alpha\le \pi/2$ iff $\langle\vu_i,\vu_j\rangle\ge 0$).

Then, set 
\begin{equation}\label{eq:07}
    \vu_i' = (\sin(\alpha - \phi) \vu_i + \sin(\phi) \vu_j) / \sin(\alpha)\;.
\end{equation}
The formula~\eqref{eq:07}
is known as ``spherical linear interpolation'' \cite{Shoemake85}
and the new opinion $\vu_i'$ is guaranteed to be on the sphere. 
It can be easily checked that the procedure we just described gives the same updates
as~\eqref{eq:06} with the update function 
$f_{\slerp,\eta}(A)=\sin\phi/\sin(\alpha-\phi)$.

\paragraph{Stable and active functions}\label{par:stable_and_active_functions}
Our aim is to cover a broad class of update functions
while keeping the mathematical discussion tractable.
To that end, let us settle on two definitions:

\begin{definition}[Stable function]
\label{def:stable_function}
    A function \(f : [-1, 1] \to \mathbb{R}\) is \emph{stable} if
    it is continuous and if 
    $\sign{(f(A))} = \sign{(A)}$ for all $A \in [-1,1]$.
\end{definition}

Note that every stable function $f$ fulfills $f(0)=0$ and $f(A)$ is positive (respectively negative) if and only if $A$ is positive (respectively negative).
The scaled identity  
$f(A)=\eta A$,
as well as its asymmetrical version from~\eqref{eq:asymmetric}
and $f_{\slerp,\eta}$ are all stable.

\begin{definition}[Active function]
    \label{def:active_function}
    An update function $f$ is \emph{active} if
    \begin{enumerate}
        \item There exists $-1<A_0<1$ such that
        $f(A)\le 0$ for $A< A_0$ and $f(A)\ge 0$ for $A>A_0$.
        \item There is a universal lower bound $m>0$
        such that $\inf_{-1\le A\le 1}|f(A)|\ge m$.
    \end{enumerate}
\end{definition}

The sign function $f(A)=\eta\cdot(\mathbb{I}[A\ge 0]-\mathbb{I}[A<0])$ (as well as its asymmetrical variants
discussed above) is an example of an active function.

Our definitions include a requirement of at most one sign change
of $f$ (located exactly at $f(0)=0$ in case of a stable function).
This does not seem to be fundamental, but we impose these
conditions for the sake of simplicity (among others, in order
to keep the concept of ``active configurations'' simpler).
Nevertheless, we believe
they capture the most natural and mathematically interesting
cases.

\subsection{Examples}
\label{sec:examples}
While \Cref{def:polarization}
reflects complete issue alignment, it does not necessarily
capture the division into two sizable groups with opposing opinions. In particular, 
it includes consensus configurations where all opinions are equal.

If the update function is odd $f(-A)=-f(A)$, then there is a symmetry argument
(observed previously both in~\cite{HJMR} and~\cite{gaitonde2021polarization})
to establish the division into two opposing camps. 
More precisely, if $n$ opinions are sampled i.i.d.~from
a distribution which gives the same weights/density to $\vu$ and $-\vu$, then
with high probability the agents converge to a polarized configuration
with $n/2\pm O\left(\sqrt{n}\right)$ agents in each group.
For completeness we give a concentration bound
argument in \Cref{lem:concentration}.

For asymmetric update functions, we do not address this issue theoretically.
However, empirically it appears that even if the update function is asymmetric,
randomly placed agents tend to congregate into two groups, each of significant size.

\begin{figure}
   \caption{Issue radicalization with an asymmetric update rule: \(n = 100, d = 4, \eta_+ = 0.9, \eta_- = 0.1\). The initial configuration is obtained by choosing \(n\) opinions uniformly at random on the unit hyper-sphere. The update rule used is the one in \eqref{eq:asymmetric}.
    The first image shows what such initial configuration looks like. Each of \Cref{fig:asym_b,fig:asym_c,fig:asym_e,fig:asym_f} shows the configuration at time \(t = 2800\), for four independent runs of the experiments (i.e., four different initial configurations).
    Furthermore, over \(1000\) independent runs of the experiments, the histogram of the size of the polarized groups is plotted.
    As $d=4$, three coordinates are plotted
    spatially, and
    color represents the fourth coordinate.}
    \label{fig:asymmetric}
    \centering
    \begin{subfigure}[c]{0.3\textwidth}
        \centering
        \caption{\footnotesize A uniform configuration}
        \includegraphics[width=5cm]{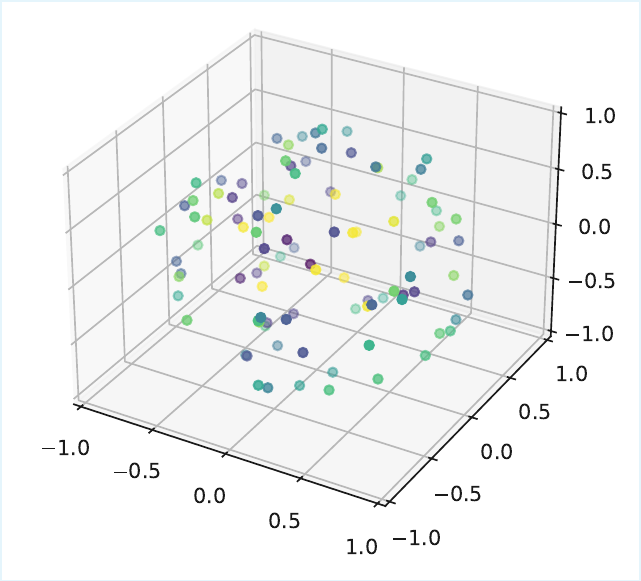}
        \label{fig:asym_a}
    \end{subfigure}
    \begin{subfigure}[c]{0.3\textwidth}
        \centering
        \caption{\footnotesize First run: $t=2800$}
        \includegraphics[width=5cm]{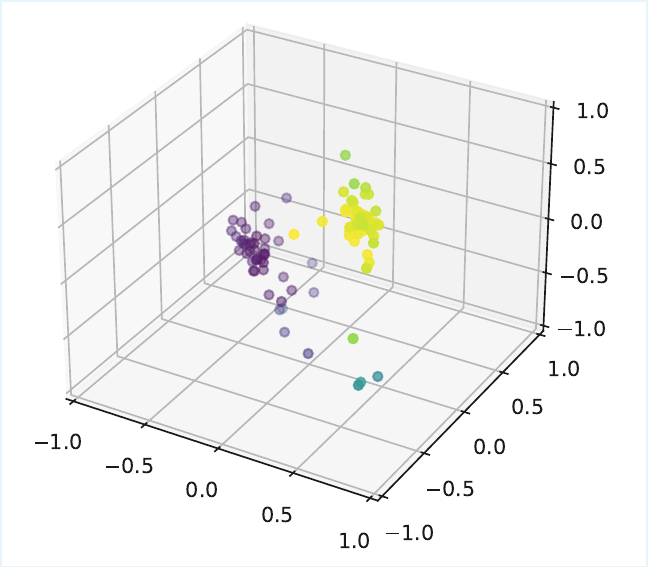}
        \label{fig:asym_b}
    \end{subfigure}
    \begin{subfigure}[c]{0.3\textwidth}
        \centering
        \caption{\footnotesize Second run: $t=2800$}
        \includegraphics[width=5cm]{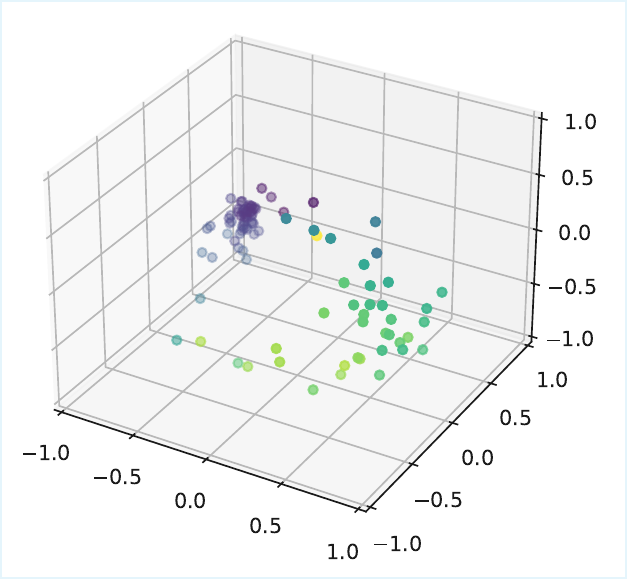}
        \label{fig:asym_c}
    \end{subfigure}
    \\
    \begin{subfigure}[c]{0.35\textwidth}
        \centering
        \caption{\footnotesize Histogram (size of clusters)}
        \includegraphics[height=4.5cm]{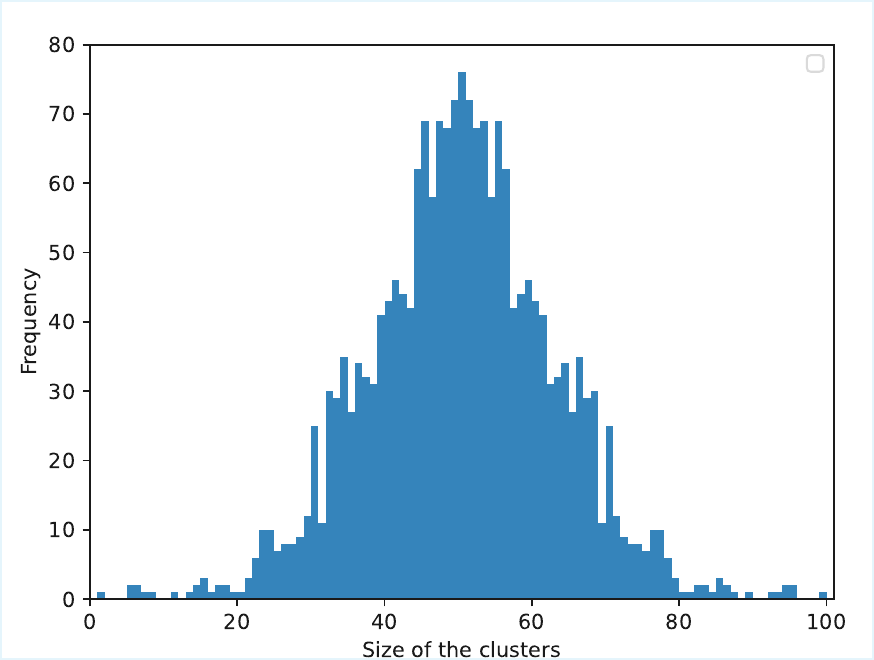}
        \label{fig:asym_d}
    \end{subfigure}
    \begin{subfigure}[c]{0.3\textwidth}
        \centering
        \caption{\footnotesize Third run: $t=2800$}
        \includegraphics[width=5cm]{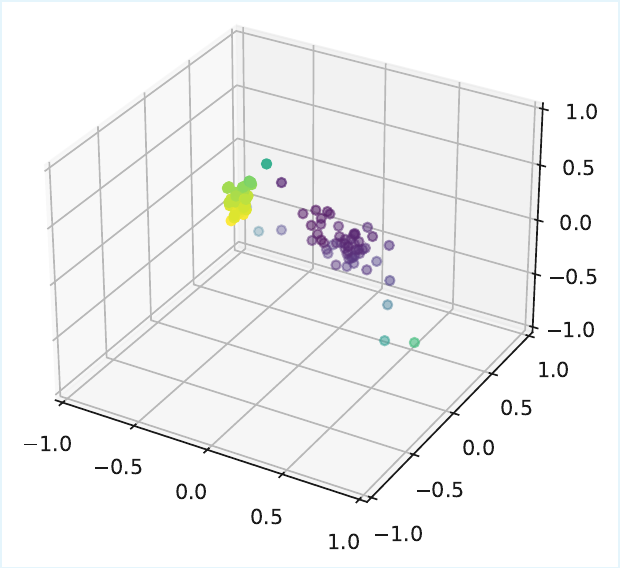}
        \label{fig:asym_e}
    \end{subfigure}
    \begin{subfigure}[c]{0.3\textwidth}
        \centering
        \caption{\footnotesize Fourth run: $t=2800$}
        \includegraphics[width=5cm]{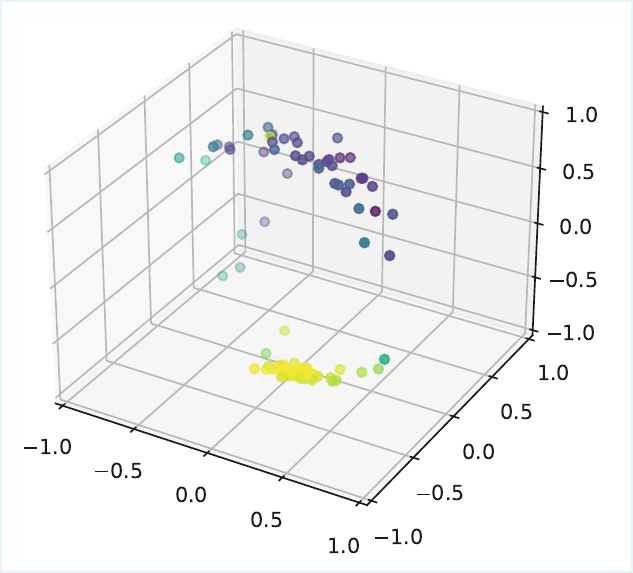}
        \label{fig:asym_f}
    \end{subfigure}
\end{figure}

For example, \Cref{fig:asymmetric} illustrates simulation 
for the update function given by~\eqref{eq:asymmetric} with~\(\eta_+ = 0.9\), and \(\eta_- = 0.1\).
Opinions of \(100\) agents were initially placed uniformly at random on the sphere for
$d=4$. An example of such random configuration is shown in \Cref{fig:asym_a}.
For four different random starting configurations, the configurations after \(2800\) time steps are shown in the image. The histogram on \Cref{fig:asym_d} displays the distribution of the size of the polarized groups, over \(1000\) runs of the experiment. We can see that despite the asymmetry of the update rule, the group size is around half of the number of opinions.

The most challenging aspect of the theoretical analysis we employ is to determine if 
the random process can get stuck in an ``almost orthogonal'' configuration.
That could correspond to a situation where the agents sort into communities
interested in disjoint topics. 
However, such almost orthogonal configurations are inherently unstable and 
ultimately evolve into polarization.
\begin{figure}
    \caption{Almost orthogonal configurations polarize: \(n = 100, d = 4\) and the update rule is \(f(A) = 0.1 A\). Initially, for every pair of opinions, their dot product \(A\) satisfies \(\abs{A} \leq 0.001\) or \(\abs{A} \geq 0.999\). The process escapes the ``almost orthogonal'' configuration and polarizes.}
    \label{fig:almost_orthogonal}
    \centering
        \begin{subfigure}[c]{0.3\textwidth}
        \centering
        \caption{\footnotesize $t=0$}
        \includegraphics[width=5cm]{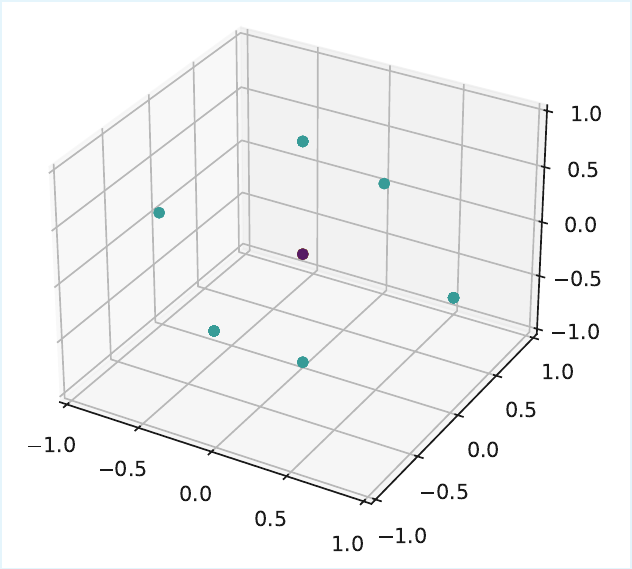}
        \label{fig:orthogonal_a}
    \end{subfigure}
    \centering
        \begin{subfigure}[c]{0.3\textwidth}
        \centering
        \caption{\footnotesize $t=17000$}
        \includegraphics[width=5cm]{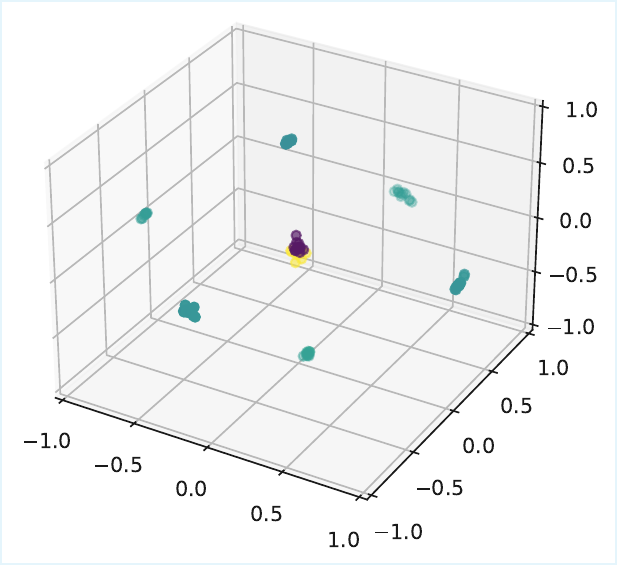}
        \label{fig:orthogonal_b}
    \end{subfigure}
    \centering
        \begin{subfigure}[c]{0.3\textwidth}
        \centering
        \caption{\footnotesize $t=19000$}
        \includegraphics[width=5cm]{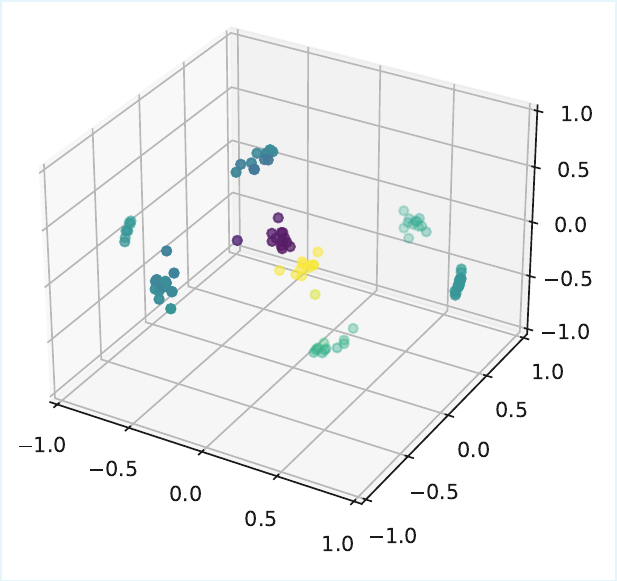}
        \label{fig:orthogonal_c}
    \end{subfigure}
    \\
    \centering
        \begin{subfigure}[c]{0.3\textwidth}
        \centering
        \caption{\footnotesize $t=20000$}
        \includegraphics[width=5cm]{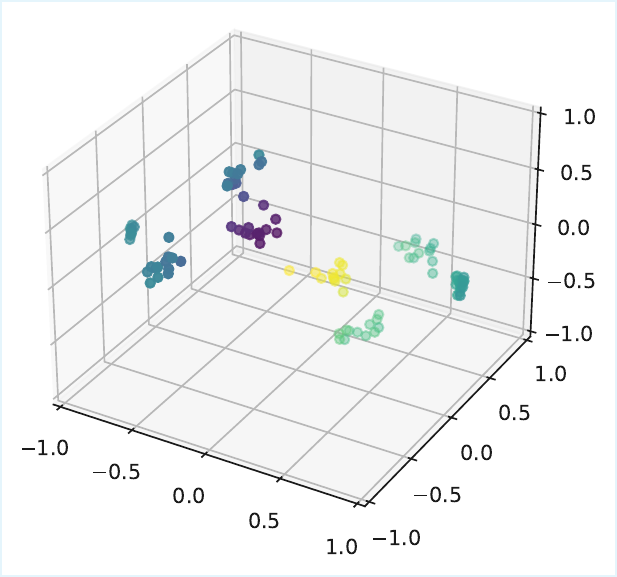}
        \label{fig:orthogonal_d}
    \end{subfigure}
    \centering
        \begin{subfigure}[c]{0.3\textwidth}
        \centering
        \caption{\footnotesize $t=21000$}
        \includegraphics[width=5cm]{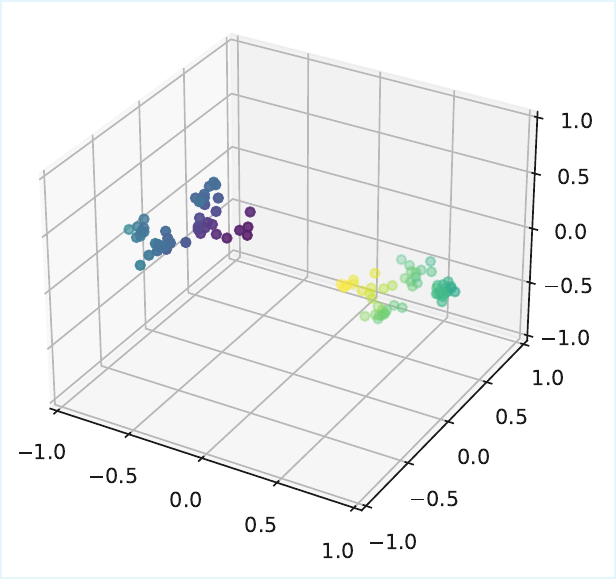}
        \label{fig:orthogonal_e}
    \end{subfigure}
    \centering
        \begin{subfigure}[c]{0.3\textwidth}
        \centering
        \caption{\footnotesize $t=23000$}
        \includegraphics[width=5cm]{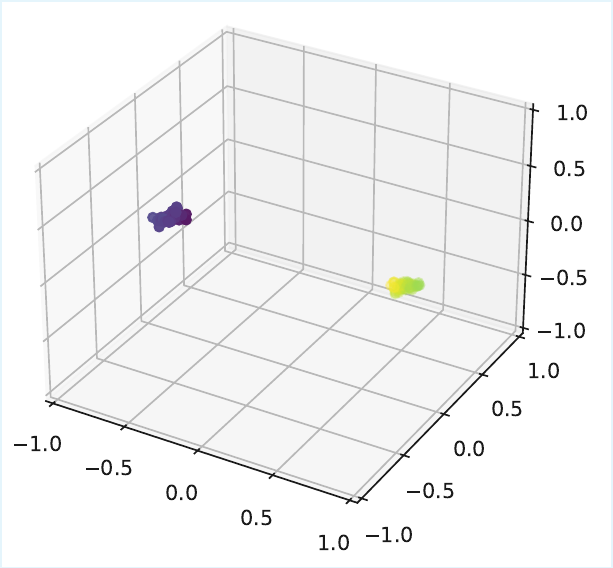}
        \label{fig:orthogonal_f}
    \end{subfigure}
\end{figure}
\Cref{fig:almost_orthogonal} illustrates simulations with the update rule \(f(A) = \eta A\) with \(\eta = 0.1\). We took \(100\) agents in \(4\) dimensions, and the starting configuration is chosen such that, for every two opinions \(\vu_i, \vu_j\), either \(\abs{\bck{\vu_i, \vu_j}} \leq 0.001\) (i.e. ``almost orthogonal'') or \(\abs{\bck{\vu_i, \vu_j}} \geq 0.999\). 
The configurations at different time steps show that the process escapes the almost orthogonal configuration, and eventually polarizes.

\subsection{Results}\label{subsec:results}
Our results address the following question: In which settings does a given
starting configuration $\mathcal{U}^{(0)}$ polarize almost surely?

First, we show polarization in the case of active update functions.
\begin{theorem}
\label{thm:active}
Let $d,n\ge 2$ and $f$ be an active update function,
$\mathcal{D}$ a fully supported interaction distribution
and $\mathcal{U}^{(0)}$ a configuration of $n$ opinions.
Then, $\mathcal{U}^{(t)}$ polarizes almost surely.
\end{theorem}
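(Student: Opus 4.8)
\medskip
\noindent\textbf{Proof plan for \Cref{thm:active}.}
The plan is to assemble the argument from three ingredients: a description of the absorbing configurations, a local stability statement around the polarized configurations, and a uniform reachability statement that crucially uses $\inf|f|\ge m>0$.

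\smallskip
\noindent\emph{Absorbing configurations.} Since $f$ is active, $|f(A)|\ge m>0$, so $f$ never vanishes; hence in~\eqref{eq:06} one has $\vu_i'=\vu_i$ only if $\vu_j=\pm\vu_i$. As $\mathcal D$ has full support, the fixed points of the dynamics are exactly the configurations with $\vu_i=\pm\vu_j$ for all $i,j$, i.e.\ the polarized ones of \Cref{def:polarization}. In particular there is no separable obstruction here: because $A_0\in(-1,1)$ and $|f(0)|\ge m$, an orthogonal pair is genuinely pushed apart, consistently with the theorem holding for every initial configuration.

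\smallskip
\noindent\emph{Herding via a one-dimensional map.} Rather than trying to straighten an arbitrary configuration by a bounded sequence of interventions (impossible in general for $d\ge3$ by \Cref{sec:counterexample}/\Cref{thm:counterexample}), I would fix an anchor $\vu_1$ and, for $i=2,\dots,n$ in turn, repeatedly apply the interaction $(i,1)$. This leaves $\vu_1$ fixed and drives $A:=\bck{\vu_i,\vu_1}$ by the one-dimensional map
\[
g(A)=\frac{A+f(A)}{\sqrt{1+2Af(A)+f(A)^2}}.
\]
A short computation gives $g(\pm1)=\pm1$, shows that $g(A)^2-A^2$ has the sign of $(1-A^2)\,f(A)\,(2A+f(A))$, that $\pm1$ are the only fixed points of $g$, and — using $|f|\ge m$ and the single sign change at $A_0$ — that on $(\max(0,A_0),1]$ the iterates increase to $1$ at a uniform geometric rate, on $[-1,\min(0,A_0))$ they decrease to $-1$ likewise, and on the remaining interval (which contains neither endpoint nor a fixed point of $g$) a uniformly bounded number of iterations suffices to leave it. Hence there is $K=K(f,r)$ such that, from \emph{any} configuration, $K$ applications of $(i,1)$ bring $\bck{\vu_i,\vu_1}$ to within $1-\cos r$ of $\{-1,1\}$; running this for $i=2,\dots,n$ leaves every opinion within angle $r$ of $\vu_1$ or of $-\vu_1$. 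Write $\mathcal N_r$ for the set of configurations covered by two antipodal spherical caps of radius $r$.

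\smallskip
\noindent\emph{Local stability of $\mathcal N_r$.} For $r$ small enough that every cross-cap inner product is $<A_0$, the set $\mathcal N_r$ is forward invariant: a within-cap update has $f>0$ and pulls $\vu_i$ along the geodesic towards $\vu_j$, hence stays in the same (geodesically convex) cap, while a cross-cap update has $f<0$ and pushes $\vu_i$ along the geodesic towards $-\vu_j$, which again lies in $\vu_i$'s cap. Inside $\mathcal N_r$ the dynamics is a consensus-type contraction: within-cap interactions shrink pair angles by a factor bounded away from $1$ and never expand the caps, so, since $\mathcal D$ has full support and every pair is hit infinitely often, each cap shrinks to a single point almost surely; a short argument (otherwise cross-cap interactions would keep displacing the two limit points) shows the two limits are antipodal. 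Thus inside $\mathcal N_r$ the process polarizes almost surely.

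\smallskip
\noindent\emph{Conclusion.} From any configuration, the next $(n-1)K$ interventions coincide with the fixed herding sequence above with probability at least $\pi_{\min}^{(n-1)K}=:p_0>0$, where $\pi_{\min}=\min_{(a,b):a\ne b}\mathcal D(a,b)$; hence from any configuration the process enters $\mathcal N_r$ within $(n-1)K$ steps with probability at least $p_0$. Partitioning time into blocks of that length and using the Markov property, the process enters $\mathcal N_r$ almost surely, and then polarizes by the previous paragraph. The main obstacle is the one-dimensional analysis: proving for \emph{every} active $f$ that the iterated map $g$ cannot get trapped away from $\pm1$ (especially inside the sign-change interval), and handling active $f$ that are discontinuous — which, in a fallback argument replacing the explicit herding sequence by a compactness covering of $(\mathbb S^{d-1})^n$, also affects the continuity that the covering relies on. The convexity of small caps and the consensus contraction inside $\mathcal N_r$ are routine.
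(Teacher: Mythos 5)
Your plan is correct and follows essentially the same route as the paper: anchor all opinions to $\vu_1$ by repeated interactions, control the correlation via the one-dimensional map $A\mapsto A'$ (your $g$ is exactly \eqref{eq:new-correlation}), land in a forward-invariant set of two small antipodal caps (the paper's $c$-strictly convex configurations with $c=|A_0|$), and finish with the block-probability/Borel--Cantelli argument plus the strictly convex contraction of \Cref{lem:polarization-of-convex-configuration}. The step you flag as the main obstacle --- a uniform escape of the iterates of $g$ from the sign-change interval, valid for discontinuous active $f$ --- is precisely what the paper establishes in \Cref{cl:contraction-active} by showing $A'\le A-\delta$ there, using only $\inf|f|\ge m>0$ and monotonicity of $A'$ in $f(A)$, so no continuity of $f$ is needed.
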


It turns out that the case of stable update functions is significantly more challenging.
In their case, we first need to identify special cases that do not polarize for trivial reasons.

\begin{definition}[Separable configuration]
\label{def:separable}
    A configuration \(\mathcal{U}\) is separable when it can be partitioned into two nonempty sets \(S\) and \(T\) such that,
    for every opinion \(\vu \in S\) and \(\vec v \in T\), we have \(\vu \perp \vec v\).
\end{definition}

Since stable update rules satisfy $f(0)=0$, a separable configuration remains separable with the same partition after any interaction. In particular, such a configuration cannot polarize.
For this reason, when considering stable update functions, we always assume that the initial configuration is non-separable.
Conversely, in \Cref{lem:non-separable-preserved} we show that non-separability is preserved under any interaction. Accordingly, we ask
the question:
Under which conditions does the process
$\cU^{(t)}$ polarize almost surely for all non-separable initial configurations? 
In particular, does it polarize almost surely
for all stable update functions?

We provide partial positive answers to this
question by showing polarization for stable update functions in two special cases.
First, we obtain polarization in the case of three agents.

\begin{theorem}
\label{thm:three-ops}
    Let $d\ge 2$, $f$ be a stable update function, the interaction distribution
    $\mathcal{D}$ have full support and $\mathcal{U}^{(0)}$ be a configuration
    of three opinions which is not separable.
    Then, $\mathcal{U}^{(t)}$ polarizes almost surely.
\end{theorem}

Second, we show polarization in the two-dimensional setting for an arbitrary number of opinions.

\begin{theorem}
\label{thm:2d-main}
Let $d=2$, \(n\geq2\),
the interaction distribution $\mathcal{D}$ have full support,
$f$ be a stable update function,
and $\mathcal{U}^{(0)}$ be any initial configuration which is not separable. 
Then, almost surely, $\mathcal{U}^{(t)}$ polarizes.
\end{theorem}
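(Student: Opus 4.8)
The plan is to pass to the projective picture and reduce everything to a one‑dimensional ``gossip'' process on $\mathbb{RP}^1$. Write $\vu_k^{(t)}=(\cos\alpha_k^{(t)},\sin\alpha_k^{(t)})$ and let $\ell_k^{(t)}\in\mathbb{RP}^1\cong\mathbb{R}/\pi\mathbb{Z}$ be the \emph{opinion line}, with $\theta_{ij}:=d(\ell_i,\ell_j)\in[0,\pi/2]$ the distance in $\mathbb{RP}^1$. First I would check that, for a stable $f$, the update \eqref{eq:06} translates into the clean statement: \emph{$\ell_i$ moves toward $\ell_j$ along the shorter geodesic by an angle $\phi\in[0,\theta_{ij})$} which depends continuously on the pair $(\vu_i,\vu_j)$ and vanishes \emph{precisely} when $\theta_{ij}\in\{0,\pi/2\}$ — this is exactly where $f(0)=0$, $\sign f=\sign(\cdot)$ and the boundedness of $f$ on $[-1,1]$ (hence no overshoot) are used. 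I would also record a ``no‑flip'' lemma: since $\langle\vw,\vu_i\rangle=1+f(A)A\ge 1$ while $\|\vw\|\le 1+\sup_{[-1,1]}|f|$, the inner product $\langle\vu_i^{(t+1)},\vu_i^{(t)}\rangle$ is bounded below by a positive constant, so the representative on a line can never switch. Consequently, if all $\ell_k^{(t)}$ converge to one common line, then the vectors $\vu_k^{(t)}$ converge and the limit is polarized; it therefore suffices to prove that, almost surely, all $\ell_k^{(t)}$ converge to a single point of $\mathbb{RP}^1$.

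Next I would isolate the \emph{clustered regime}: the event that all $\ell_k$ lie in a common open arc $I\subset\mathbb{RP}^1$ of length $<\pi/2$. This event is absorbing — on such an arc every pairwise distance is $<\pi/2$, so the shorter geodesics stay inside $I$ and, by no overshoot, so does every updated line — and the length of the smallest enclosing arc is non‑increasing. Inside $I$ one has a genuine linear order on the $\ell_k$, and a standard pairwise‑averaging argument (using that $\mathcal D$ has full support, so the two current extreme opinions interact infinitely often, each such interaction strictly shrinking the enclosing arc unless the extremes already coincide) shows the enclosing arc shrinks to a point almost surely. Hence the whole theorem reduces to: \emph{from a non‑separable configuration the process almost surely reaches the clustered regime.}

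Two structural facts about the complementary ``spread'' regime come next. First, in $d=2$ non‑separability is preserved by the dynamics: an interaction $(i,j)$ moves $\ell_i$ toward $\ell_j$, hence strictly away from the line orthogonal to $\ell_j$ and (no overshoot) never onto $\ell_j$ itself; since in two dimensions any partition witnessing \Cref{def:separable} forces one of its two sides onto a single line, a short case analysis shows a non‑separable configuration cannot become separable in one step. Thus every $\mathcal U^{(t)}$ is non‑separable, and the only configurations that are stationary for the line dynamics \emph{and} non‑separable are the polarized ones. Second — and this is the genuine difficulty — the separable configurations form a \emph{closed} set $\Sigma$ (a finite union of circles: two orthogonal lines, an assignment of each agent to one of them, and a choice of representative sign), so the fact that the $\mathcal U^{(t)}$ stay off $\Sigma$ does not by itself prevent the process from converging to a point of $\Sigma$; ruling this out is exactly the ``does the process get stuck almost orthogonal'' question illustrated in \Cref{fig:almost_orthogonal}.

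The final and hardest step is therefore to rule out convergence to, or indefinite residence near, $\Sigma$. Suppose for contradiction that with positive probability $\mathcal U^{(t)}$ stays in the spread regime for all $t$; I would argue it then converges to some $\mathcal U^\ast\in\Sigma$, with agents split into nonempty groups $S,T$ accumulating on orthogonal lines. Non‑separability of $\mathcal U^{(t)}$ forces, at every time, a cross pair $i\in S$, $j\in T$ with $\varepsilon_{ij}^{(t)}:=\pi/2-\theta_{ij}^{(t)}>0$; the interaction $(i,j)$ \emph{increases} this deviation from orthogonality (it pushes $\ell_i$ toward $\ell_j$), by an amount comparable to the move size at near‑orthogonal distance, whereas the within‑group distances contract at the rate governed by $\phi$ near $0$. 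Balancing these two rates — the instability of the near‑orthogonal cross pairs against the generic contraction everywhere else — and invoking Borel--Cantelli (or L\'evy's $0$--$1$ law) over the infinitely many times the relevant interactions fire, one concludes that $\sup_{i\in S,\,j\in T}\varepsilon_{ij}^{(t)}$ cannot tend to $0$, contradicting convergence to $\mathcal U^\ast$. Together with a reachability statement — from any configuration in the spread regime but at distance $\ge\delta$ from $\Sigma$, the probability of reaching the clustered regime within a bounded number of steps is bounded below on each such compact set — this forces the process into the clustered regime almost surely, completing the proof. I expect essentially all of the real work to lie in this last step, and in making the rate comparison robust to the fact that the move size may vanish to arbitrarily high order both at $\theta_{ij}=0$ and at $\theta_{ij}=\pi/2$, since the only regularity available is continuity of $f$.
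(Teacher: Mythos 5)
Your reduction is sound and matches the paper's architecture: the ``clustered regime'' (all opinion lines in an open arc of length $<\pi/2$) is exactly a strictly convex configuration in the sense of \Cref{def:strictly_convex}, your absorption claim is \Cref{lem:strictly-convex-preserved}, and your shrinking-arc argument is (modulo making the contraction quantitative and invoking Borel--Cantelli, as in \Cref{lem:polarization-of-convex-configuration}) the content of \Cref{sec:strictly-convex}. The preservation of non-separability is \Cref{lem:non-separable-preserved}. So everything hinges on your final step, and that is where the proposal has a genuine gap rather than just missing details.

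Two problems. First, the assertion that remaining forever in the spread regime forces convergence to a point of $\Sigma$ is not justified and is not obvious: outside the clustered regime there is no monotone quantity (the smallest enclosing arc can grow, since the shorter geodesic between two lines at distance close to $\pi/2$ may leave the current arc), so the process could a priori wander without converging, and your contradiction never gets off the ground. Second, and more seriously, the rate comparison you propose points in the wrong direction. The destabilizing push on a near-orthogonal cross pair at correlation $|A_{ij}|=\sin\varepsilon_{ij}$ has size governed by $f$ near $0$, which for a general stable $f$ (merely continuous with $f(0)=0$) can vanish to arbitrarily high order in $\varepsilon_{ij}$; the within-cluster contractions, by contrast, are governed by $f$ at arguments bounded away from $0$ and hence are geometrically fast. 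The real danger --- made explicit by the four-opinion example at angles $-\eps,\eps,\pi/2-\eps,\pi/2+\eps$ discussed at the start of \Cref{sec:3d-stable}, which lives in $d=2$ --- is that \emph{within-cluster} interactions alone can steer the two cluster limits to be exactly orthogonal along some intervention sequences; ``instability of cross pairs versus contraction elsewhere'' does not rule this out, because the slow dynamics is the one you are counting on to win. The paper's resolution (\Cref{lem:opinions_in_a_quadrant}) is entirely different and deterministic: taking the extremal cross pair $(\vu_1,\vu_2)$ maximizing $|A_{12}|$ among cross pairs, the planar relation $\beta_{1j}=(\beta_{12}\pm\beta_{2j})\bmod\pi$ forces each within-cluster spread to be at most $4(\pi/2-\beta_{12})$; hence a \emph{uniformly bounded} number of purely within-cluster interventions (where $|f|$ is bounded below) shrinks each cluster enough to fit the whole configuration into a quadrant. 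That gives a uniform positive probability of entering the clustered regime at every step, with no rate-balancing and no cross-cluster interactions at all. Without this geometric input --- which is exactly what fails for $d\ge3$ by \Cref{thm:counterexample} --- your step 3 does not close.
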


\begin{remark}
The only aspect of the distribution $\mathcal{D}$
that plays a role in the proofs is the minimum probability $p_{\mathrm{min}}=\min_{i,j}\mathcal{D}(i,j)$. Our theorems assume that $\mathcal{D}$
is fully supported, equivalently $p_{\mathrm{min}}>0$.
It seems natural and interesting future direction to extend our study
to more general distributions.

For convenience, we assume throughout that $\mathcal{D}$
is fully supported on the set $[n]\times[n]$.
In fact, since an interaction $(i, i)$ does
not change the configuration,
for the theorems above
$\mathcal{D}$ needs to be fully supported 
only on the set $\{(i,j)\in[n]\times[n]: i\ne j\}$.
\end{remark}

\subsection{Outline of proofs}
\label{sec:proof-outline}

Let us briefly discuss the main ideas used to prove our results.

\paragraph{Strictly convex configurations}
For all three theorems stated above, 
we found it useful to 
consider a class of configurations that
we call \emph{strictly convex}:
\begin{definition}[Strictly convex configuration]
    \label{def:strictly_convex}
    Let $0 \leq c < 1$. A configuration $\mathcal{U}$ is called \emph{$c$-strictly convex} if there
    exist $b_1,\ldots,b_n\in\{\pm 1\}$ such that 
    $\bck{b_i\vu_i,b_j\vu_j}>c$ for every $1\le i,j\le n$.
    When \(c = 0\), we simply say that \(\mathcal{U}\) is strictly convex.
\end{definition}

Equivalently, a configuration is strictly convex if there exists a choice
of $b_1,\ldots,b_n$ and a convex cone with apex angle less than $\pi/4$  that contains
every $b_i\vu_i$. The apex angle of a cone is the angle between its axis and any of its generating vectors. If an update function
satisfies $\sign(f(A))=\sign(A)$, it is not hard to see that for any interaction of agents,
the updated configuration remains strictly convex
with the same $b_1,\ldots,b_n$. And, since a new opinion of an agent is a linear combination
of two existing opinions, the ``width'' of the cone of the updated configuration cannot increase. More precisely,
if we let 
\begin{equation*}
\mathcal{P}(t)=1-\min_{i,j\in[n]}\bck{b_i\vu_i^{(t)},b_j\vu_j^{(t)}}\;,
\end{equation*}
then $\mathcal{P}(t+1)\le\mathcal{P}(t)$ always holds.
Furthermore, it can be shown that with probability
uniformly bounded away from zero, $\mathcal{P}(t)$
decreases by a constant factor in a constant number of steps,
hence $\lim_{t\to\infty}\mathcal{P}(t)=0$ happens almost surely.
As a result, the configuration polarizes,
where the opinions of agents with $b_i=1$ converge
to some vector $\vu^*$, and the
opinions of agents with $b_i=-1$
to the antipodal vector $-\vu^*$.
Ultimately, we show:

\begin{lemma}
\label{lem:strictly-convex-intro}
Let $d,n\ge 2$ and
$\mathcal{D}$ be a fully supported
distribution.

If the update function is stable and $\cU^{(0)}$ strictly
convex, then $\cU^{(t)}$ polarizes almost surely.

If the update function is active, recall
$A_0$ from \Cref{def:active_function}.
If $\cU^{(0)}$ is $|A_0|$-strictly convex,
then $\cU^{(t)}$ polarizes almost surely.
\end{lemma}

\Cref{lem:strictly-convex-intro} is proved
in \Cref{sec:strictly-convex}. The difference
in the case of active update functions stems from
the fact that they do not satisfy $\sign(f(A))=\sign(A)$
in general, but only for $|A|> |A_0|$.

\paragraph{Proof outlines of \Cref{thm:active} and \Cref{thm:2d-main}}

The proofs of both theorems follow a common strategy.
In both settings, active update functions for $d\ge 2$ and stable update functions for $d=2$, we establish that
a strictly convex configuration can be
reached in a constant number of steps from any allowed
initial configuration.

\begin{lemma}
\label{lem:almost_surely_c_strictly_convex}
    Let $d,n\ge 2$, and $f$ an active update function.
    There exists $K$ such that, for
    every configuration $\mathcal{U}^{(0)}$, there is a sequence
    of $K$ interactions after which $\mathcal{U}^{(K)}$ is
    $|A_0|$-strictly convex.
\end{lemma}

\begin{lemma}
    \label{lem:opinions_in_a_quadrant}
    Let $d=2$, $n\ge 2$, and $f$
    a stable update function.
    There exists~\(K\) such that, for every configuration
    \(\mathcal{U}^{(0)}\) which is
    not separable,
    there exists a sequence of $K$ interactions such that
    $\mathcal{U}^{(K)}$ is strictly convex.
\end{lemma}

Note that the constant $K$ in the lemmas above
can depend on $d,n$, and $f$. This is true
also for other similar statements in the rest of the paper.

Since any sequence of
interactions of constant length occurs with some positive constant probability, 
at every time there is a constant probability
that a strictly convex configuration will
be reached in $O(1)$ steps.
It follows that a strictly convex configuration
will be reached almost surely.
Then, polarization follows from  \Cref{lem:strictly-convex-intro}.

\Cref{lem:almost_surely_c_strictly_convex} is easier to handle,
since for active update functions the uniform lower bound $|f(A)|\ge m$ ensures
that any pair of opinions will have their
absolute correlation exceed $|A_0|$ (or any other
fixed value) after $O(1)$ interactions between them.
Therefore, the required sequence of interactions
is constructed by agent $1$ influencing each
other agent a fixed number of times.
The details are given in \Cref{sec:active}.

\Cref{lem:opinions_in_a_quadrant} seems more surprising, and its proof is more involved.
Furthermore, as discussed shortly, its statement
is false for $d\ge 3$. To understand this, it is useful
to distinguish configurations which we call \emph{$\eps$-inactive}.

\begin{definition}[\(\epsilon\)-activity]
\label{def:epsilon_activity}
    Let \(\epsilon\ge 0\). A configuration \(\mathcal{U}\) is called \(\epsilon\)-inactive if for every $i,j$ it holds
$\abs{\bck{\vu_i,\vu_j}}\le \eps$ or
$\abs{\bck{\vu_i,\vu_j}}\ge 1-\eps$.
    Otherwise, we say that the configuration is $\eps$-active.
    When \(\epsilon = 0\), we simply say active
    and inactive instead of \(0\)-active (respectively \(0\)-inactive).
\end{definition}

For a stable update function, if
$A_{ij}=\bck{\vu_i,\vu_j}\approx 0$, then
also $f(A_{ij})\approx 0$.
In that case, 
it can be easily seen from~\eqref{eq:06} that
an interaction
where $j$ influences $i$ will not move the
opinion of agent $i$ by much.
On the other hand, if
$|A_{ij}|\approx 1$, then,
since $\vu_i$ and $\vu_j$
(or $\vu_i$ and $-\vu_j$ if their correlation is
negative) are already close to
each other, and since
the updated opinion lies on the great circle
between them,
the updated opinion will not
move far away from the original one.
Therefore, every interaction in an
$\eps$-inactive configuration effects
only a slight change.

Turning back to \Cref{lem:opinions_in_a_quadrant}, 
if a non-separable initial configuration is also
$\eps$-inactive,
it seems less obvious to construct the 
required sequence of interactions of bounded length. 
In \Cref{sec:2d} we give a geometric argument to achieve that.

\paragraph{Polarization of three opinions}
Finally, \Cref{thm:three-ops}
proved in \Cref{sec:stable-preliminaries}
gives almost sure polarization for stable functions and three agents. 
Here, we introduce another approach.
We define a suitable potential function and show that it decreases almost surely to some constant value.
Importantly, the speed of decrease depends on the starting configuration.
Then we prove that, once this constant value is reached, there exists a constant-length sequence making the configuration strictly convex, so that
\Cref{lem:strictly-convex-intro} can be applied.

\subsection{Partial progress for stable functions}\label{subsec:partial_progress_for_stable_functions}

A natural follow-up question is: Which stable update functions elicit almost sure polarization for $d\ge 3$?
Unfortunately, the proof strategy for $d=2$ cannot
be employed directly.
In particular, in general the statement of \Cref{lem:opinions_in_a_quadrant} does not hold.

\begin{theorem}
\label{thm:counterexample}
    Let $d\geq 3$ and consider the update function 
    $f(\langle \vu,\vec v\rangle):=\eta \cdot \langle \vu,\vec v\rangle$ for some $\eta > 0$.
    For every $T\in\mathbb{Z}_{>0}$, there exists a configuration of three opinions
    $\mathcal{U}^{(0)}=(\vu_1^{(0)},\vu_2^{(0)},\vu_3^{(0)})$ such that
    $\mathcal{U}^{(0)}$ is not separable and, for every choice of
    $T$ interactions, the resulting configuration $\mathcal{U}^{(T)}$
    is \emph{not} strictly convex.
\end{theorem}

\Cref{thm:counterexample} is proved in
\Cref{sec:counterexample}. An important point
for our purposes is that the configurations constructed
in the proof of \Cref{thm:counterexample} satisfy
$A_{12}=A_{13}=-A_{23}=\eps$ for
some $\eps=\eps(T)\to 0$ as $T\to\infty$.
In particular, such configurations are
$\eps$-inactive. Clearly, in order to prove almost sure
polarization for~$d\ge 3$, one at least needs to show
that such configurations are almost surely ``escaped''.
This leads us to the following more general considerations.

For small
enough $\eps$, there is a natural structure to
\(\epsilon\)-inactive configurations. 
We can uniquely divide the opinions
into ``clusters'' such that all absolute correlations inside a given cluster are close to 1, and all opinions in different
clusters are almost orthogonal:

\begin{definition}[Cluster]
\label{def:cluster}
    Let \(\mathcal{U}\) be a configuration.
    A non empty set \(C \subset [n]\) is a cluster of \(\mathcal{U}\) if,
    for every \(i, j \in C\), \(\abs{A_{ij}} > 1/2\), and
    for every \(i \in C\), \(j \not\in C\), \(\abs{A_{ij}}<1/2\).
\end{definition}

Of course, in an $\eps$-inactive configuration
for $\eps<1/2$,
the correlations inside a cluster satisfy
$|A_{ij}|\ge 1-\eps$, and correlations between clusters
satisfy $|A_{ij}|\le \eps$. Furthermore, as suggested
above, the clusters of an $\eps$-inactive configuration form a partition of agents.

\begin{lemma}
\label{lem:cluster}
    Let \(0 \leq \epsilon < 1/8\) and \(\mathcal{U}\) a configuration that is \(\epsilon\)-inactive.
    Then, the clusters of~$\mathcal{U}$ form a partition
    of the agents. Furthermore,
    if $\eps<1/d$, then there are at most $d$ clusters.
\end{lemma}

\begin{lemma}
\label{lem:clusters_preserved}
    Let $f$ be a stable update function. 
    There exists $\eps_0> 0$ such that:
    \begin{itemize}
    \item If configuration $\mathcal{U}^{(t)}$ is 
    $\eps_0$-inactive, then $\mathcal{U}^{(t+1)}$ is
    $1/256$-inactive.
    \item Configurations
    \(\cU^{(t)}\) and \(\cU^{(t+1)}\) have the same clusters.
    \end{itemize}
\end{lemma}

\Cref{lem:cluster} is proved in \Cref{sec:preliminaries}
and \Cref{lem:clusters_preserved} in \Cref{sec:3d-stable}.

Clearly, if a sequence of configurations polarizes,
then for large enough $t$
it holds $\left|A_{ij}^{(t)}\right|>1/2$ for every $i,j$, in particular the configuration consists of one cluster
containing all agents.
On the other hand, in light of \Cref{lem:clusters_preserved},
for small $\eps>0$, the number of clusters of an $\eps$-inactive configuration does not change over time
as long as the configuration remains $\eps$-inactive.
Therefore, in order to prove almost sure polarization
for all non-separable initial configurations,
it is necessary to prove in particular that any
non-separable $\eps$-inactive configuration with at least two clusters almost surely eventually becomes $\eps$-active.

In \Cref{sec:3d-stable}, we make partial progress
towards analyzing polarization for stable functions and
$d\ge 3$. We prove that (with one additional caveat), the
property described above is not only necessary, but also
sufficient for almost sure polarization. In order to state
this result,
we introduce a concept of 
\emph{$(d,n,\mathcal{D})$-stable update functions} that captures the property required for polarization:

\begin{definition}
\label{def:dn-stable}
Let $d,n\ge 2$, $\mathcal{D}$ a fully supported distribution, and $f$ a stable update function.
Let $\eps_0$ be given by \Cref{lem:clusters_preserved}.
$f$ is called \emph{$(d,n,\mathcal{D})$-stable} if there exists some~$0<\eps\le\eps_0$ and $p>0$ such that:

Let $\mathcal{U}^{(0)}$ be a configuration of $n$ opinions in
$\mathbb{S}^{d-1}$ which is $\eps$-inactive and not separable, with $k\ge 2$
clusters.
Then, the following two properties hold.
First, almost surely, there exists $t$ such that~$\mathcal{U}^{(t)}$
is $\eps$-active. Second, for the earliest such time $t_1$,
with probability at least $p$,
there exist two agents $i$, $j$ such that at time \(0\) they are in different
clusters, and $\left|A_{ij}^{(t_1)}\right|>\eps$.
\end{definition}

\begin{theorem}
\label{thm:dn-stable}
Let $d,n\ge 2$, $\mathcal{D}$ a distribution with full support,
and $f$ a $(d,n,\mathcal{D})$-stable update function. 
Let $\mathcal{U}^{(0)}$ be a configuration which is not separable.
Then, $\mathcal{U}^{(t)}$ almost surely polarizes.
\end{theorem}

Note the above-mentioned caveat: The definition has an additional condition of $\left|A_{ij}^{(t_1)}\right|>\eps$,
for $i,j$ from different clusters with probability at least $p$. We refer to \Cref{sec:3d-stable} for the discussion
of this condition and the proof of \Cref{thm:dn-stable}.

Since the notion of $(d,n,\mathcal{D})$-stability is given in terms of
behavior of the random process over a potentially long time,
it is not clear how to show that any given stable update function satisfies
the definition.
Indeed, in this paper we do not prove that any specific function
is $(d,n,\mathcal{D})$-stable
for all $d,n\ge 3$, and leave this problem for future work.

\paragraph{Subsequent work}
A follow-up paper \cite{ABH25} shows (with some effort)
that $f(x)=\eta\cdot x$ is $(d,n,\mathcal{U})$-stable for
$\eta>0$, $d,n\ge 2$ and the uniform interaction
distribution. In particular, this implies
almost sure polarization for all non-separable configurations 
in that case.

\paragraph{Outline}
In the rest of the paper, we proceed as follows. 
In \Cref{sec:preliminaries}, we introduce some notation and discuss several preliminaries. Then, we proceed sequentially, proving
the results discussed above.
In \Cref{sec:strictly-convex}, we show \Cref{lem:strictly-convex-intro}
for strictly convex configurations.
In \Cref{sec:active}, we prove polarization for active functions, i.e., \Cref{thm:active}.
\Cref{sec:stable-preliminaries} contains some general properties of the
dynamics of stable functions, as well as the proof of~\Cref{thm:three-ops} for three opinions.
In \cref{sec:2d}, we then prove polarization in two dimensions (\cref{thm:2d-main}). 
Finally, moving to the case of stable functions and $d,n\ge 3$, \Cref{sec:counterexample} contains the proof for \Cref{thm:counterexample}
and \Cref{sec:3d-stable} for \Cref{thm:dn-stable}.

\section{Preliminaries}
\label{sec:preliminaries}

\subsection{Setting and dynamics}
\label{sub_sec:dynamics}
Let \(d, n \geq 2\) denote the number of dimensions and the number of agents, respectively. 
We let \([n]\) denote the set \(\{1, 2, \hdots, n\}\) and refer to agents
by indices from this set. 
As mentioned,
    a \emph{configuration} \(\mathcal{U}\) is a tuple of \(n\) opinions in \(d\) dimensions, i.e., \(\mathcal{U} = (\vu_1, \hdots, \vu_n)\) where for every \(i\), \(\, \vu_i \in \mathbb{S}^{d-1}\).
    We denote
    the correlation \(A(\vu, \vec v)\) between two vectors \(\vu\) and \(\vec v\) as \(A(\vu,\vec v):=\bck{\vu, \vec v}\).
    Similarly, the (primary) angle between two opinions $\vu$ and $\vec v$ is denoted as
    $\alpha(\vu, \vec v):=\arccos(\langle \vu,\vec v\rangle)$. 
    In the context of $n$ opinions
    $(\vu_1,\ldots,\vu_n)$, we will write
    $A_{ij}:=A(\vu_i,\vu_j)$ and $\alpha_{ij} \coloneq\alpha(\vu_i, \vu_j)$.
Note that $0\le \alpha(\vu,\vec v)\le \pi$ for all opinions $\vu,\vec v$. We will also need
a related notion that we call \emph{effective angle}:
\begin{definition}
\label{def:effective-angle}
    The \emph{effective angle} of two opinions $\vu,\vec v$ is given as 
    $\gamma(\vu,\vec v):=\min(\alpha(\vu,\vec v),\pi-\alpha(\vu,\vec v))$.
    As with the correlations and angles, we write for short $\gamma_{ij}:=\gamma(\vu_i,\vu_j)$.
\end{definition}
Note that \(\alpha(\vu, \vec v) = \alpha(\vec v, \vu)\) and \(\gamma(\vu, \vec v) = \gamma(\vec v, \vu)\).
Also, the effective angle does not change when an opinion is flipped, but the angle may change; that is
\(\gamma(-\vu, \vec v) = \gamma(\vu, \vec v)\) while \(\alpha(-\vu, \vec v) = \pi - \alpha(\vu, \vec v)\).
Since the angle function $\alpha(\cdot,\cdot)$
is just the distance along the sphere, it satisfies
the triangle inequality
$\alpha(\vu_i,\vu_k)\le\alpha(\vu_i,\vu_j)
+\alpha(\vu_j,\vu_k)$. One can show that also the effective angle satisfies the triangle inequality, since it is the induced metric on the projective space
$\mathbb{RP}^{d-1}$.

Let us formally restate the dynamics of a single instance of our model
and derive a formula for how much a correlation changes due to one interaction.

\begin{definition}
\label{def:model}
    We call an element $(i,j)\in[n]\times[n]$ an
    \emph{interaction} and say that $j$ influences $i$.
    Let \(f:[-1, 1]\to \mathbb{R}\) be an update function
    and $\mathcal{D}$ a probability distribution on $[n]\times[n]$.
    
    We consider discrete timesteps \(t \in \{1, 2, \hdots\}\),
    and an infinite sequence of i.i.d.~random variables \(I^{(1)}, I^{(2)}, \hdots\) drawn from \(\mathcal{D}\).
    For a given initial configuration of opinions \(\mathcal{U}^{(0)}\), we define the configurations \(\mathcal{U}^{(t)}\) for \(t = 1, 2, \hdots\) as follows.
    
    If \(I^{(t)} = (i, j)\), then for every agent \(k\)
    \begin{align*}
        \vu_k^{(t+1)} &= \vu_k^{(t)} \, \text{ if } k \neq i \; , \\
        \vu_i^{(t+1)} &= \frac{\vw}{\norm{\vw}} \; \text{ where } \vw = \vu_i^{(t)} + f\left(A_{ij}^{(t)}\right) \cdot \vu_j^{(t)} \; ,
    \end{align*}
    where \(\vu_k^{(t)}\) is the opinion of agent \(k\) at time \(t\) and \(A_{ij}^{(t)}\) is the correlation between \(\vu_i^{(t)}\) and \(\vu_j^{(t)}\).
\end{definition}

\begin{remark}
    From the definition of the model, we can notice the following facts. If \(I^{(t)} = (i, j)\), then
    \begin{itemize}
        \item \(\vu_i^{(t+1)}\) is a linear combination of \(\vu_i^{(t)}\) and \(\vu_j^{(t)}\),
        \item for every \(k, l \in [n]\) such that \(k \neq i\) and \(l \neq i\), we have \, \(A_{kl}^{(t+1)} = A_{kl}^{(t)}\).
    \end{itemize}
\end{remark}

While it is in general possible that $\vw=0$ and therefore $\vu_i^{(t+1)}$ is 
ill-defined, \Cref{claim:new_correlation} below implies
that for active and stable update rules $\vw=0$ is not possible.

\begin{claim}[New correlation]
    \label{claim:new_correlation}
    Consider an update function $f$ and two opinions \(\vu_i, \vu_j\) with correlation \(A\).
    Let \(A'\) denote the new correlation after $j$ influences
    $i$.
    Then, we have $\|\vw\|^2=1+2Af(A)+f(A)^2$ and consequently
    \(A'\) is given by
    \begin{equation}
        \label{eq:new-correlation}
        A' = \frac{A + f(A)}{\sqrt{1 + 2 A f(A) + f(A)^2}} \; .
    \end{equation}
\end{claim}
\begin{proof}
    Indeed, we have $\vw=\vu_i+f(A)\vu_j$ and, by bilinearity,
    $\|\vw\|^2=\langle \vw,\vw\rangle=1+2Af(A)+f(A)^2$.
    Finally, indeed
    \begin{equation*}
        A'
        = \left\langle \frac{\vw}{\|\vw\|},\vu_j\right\rangle
        = \frac{A + f(A)}{\sqrt{1 + 2 A f(A) + f(A)^2}} \; .
        \qedhere
    \end{equation*}
\end{proof}

Since $\|\vw\|^2=1+2Af(A)+f(A)^2=(A+f(A))^2+(1-A^2)$, the only way for
$\vw=0$ to occur is $A\in\{-1,1\}$ and $A=-f(A)$.
However, our definitions of stable and active functions ensure
$f(1)>0$ and $f(-1)<0$ and therefore $f(1)\ne -1$ and $f(-1)\ne 1$.

\subsection{Transitivity properties}\label{subsec:transitivity_properties}
We now move to basic transitivity properties that follow
from triangle inequality for the Euclidean norm. These properties
are used throughout later proofs.

\begin{lemma}[Transitivity of closeness]
\label{lem:transitivity_of_closeness}
    Let \(d \geq 2\), \(0\le \epsilon < 1/4\), and consider three opinions \(\vu_i, \vu_j,\allowbreak \vu_k\).
    If \(\abs{A_{ij}} \geq 1 - \epsilon\) and \(\abs{A_{ik}} \geq 1 - \epsilon\), then
    \begin{equation*}
        \abs{A_{jk}} \geq 1 - 4 \epsilon \quad \text{ and } \quad \sign{(A_{jk})} = \sign{(A_{ij})} \sign{(A_{ik})} \; .
    \end{equation*}
\end{lemma}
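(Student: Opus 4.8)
The plan is to translate the hypotheses on scalar products into statements about Euclidean distances between the opinions (or their sign-adjusted versions) and then use the triangle inequality in $\mathbb{R}^d$. First I would record the elementary identity $\norm{\vu-\vv}^2 = 2 - 2\bck{\vu,\vv}$ valid for unit vectors, so that $\abs{A_{ij}}\ge 1-\eps$ means that, after possibly flipping the sign of one of the two vectors to make the scalar product positive, the two unit vectors are within Euclidean distance $\sqrt{2\eps}$ of each other.

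Concretely, I would introduce signs $s_{ij}:=\sign(A_{ij})$ and $s_{ik}:=\sign(A_{ik})$ (nonzero since $\abs{A_{ij}},\abs{A_{ik}}\ge 1-\eps>0$). Then $\norm{s_{ij}\vu_j - \vu_i}=\sqrt{2-2\abs{A_{ij}}}\le\sqrt{2\eps}$ and likewise $\norm{s_{ik}\vu_k-\vu_i}\le\sqrt{2\eps}$. By the triangle inequality,
\begin{equation*}
\norm{s_{ij}\vu_j - s_{ik}\vu_k}\le \norm{s_{ij}\vu_j-\vu_i}+\norm{\vu_i-s_{ik}\vu_k}\le 2\sqrt{2\eps}\;,
\end{equation*}
so $2 - 2 s_{ij}s_{ik} A_{jk} = \norm{s_{ij}\vu_j - s_{ik}\vu_k}^2 \le 8\eps$, giving $s_{ij}s_{ik}A_{jk}\ge 1-4\eps$. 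Since $1-4\eps>0$ (as $\eps<1/4$), this forces $A_{jk}\ne 0$ with $\sign(A_{jk})=s_{ij}s_{ik}=\sign(A_{ij})\sign(A_{ik})$, and then $\abs{A_{jk}}=s_{ij}s_{ik}A_{jk}\ge 1-4\eps$, which is exactly both claimed conclusions.

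There is essentially no obstacle here — the statement is a clean consequence of the parallelogram-type identity and the triangle inequality, and the constant $4$ comes out naturally from the doubling of the distance bound followed by squaring. The only point requiring a little care is handling the signs correctly: one must note that $\abs{A_{ij}}\ge 1-\eps$ together with $\eps<1/4$ already guarantees $A_{ij}\ne 0$ so that $\sign(A_{ij})$ is well-defined and equals $\pm 1$, and similarly for the other pairs; the bound $1-4\eps>0$ is then what propagates well-definedness of the sign to the pair $(j,k)$.
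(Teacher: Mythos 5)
Your proof is correct and follows essentially the same route as the paper: the identity $\norm{\vu-\vv}^2=2-2\bck{\vu,\vv}$ plus the triangle inequality, yielding the constant $4$ the same way. The only (cosmetic) difference is that you absorb the signs into the vectors up front and treat all cases uniformly, whereas the paper proves the all-positive case first and then reduces the other cases to it by the same sign flips.
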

\begin{proof}
    First, assume $\sign(A_{ij})=\sign(A_{ik})=1$. In that case, we
    need to show $A_{jk}\ge 1-4\eps$. Indeed, by assumption
    $\|\vu_i-\vu_j\|^2=\langle \vu_i-\vu_j,\vu_i-\vu_j\rangle=2(1-A_{ij})\le 2\eps$.
    Similarly, $\|\vu_i-\vu_k\|^2\le 2\eps$. By the triangle inequality,
    $\|\vu_j-\vu_k\|\le\|\vu_j-\vu_i\|+\|\vu_i-\vu_k\|\le\sqrt{8\eps}$.
    Hence, $A_{jk}=1-\|\vu_j-\vu_k\|^2/2\ge1-4\eps$.

    The other cases are reduced to the first case. For example,
    let $\sign(A_{ij})=-1$ and $\sign(A_{ik})=1$. We need to
    show $A_{jk}\le-(1-4\eps)$. Then, $A(\vu_i,-\vu_j)=-A_{ij}\ge 1-\eps$.
    Applying the first case to $\vu_i,-\vu_j,\vu_k$, we have
    $A(-\vu_j,\vu_k)\ge 1-4\eps$, hence $A_{jk}\le-(1-4\eps)$.
    The remaining cases proceed similarly.
\end{proof}
As a consequence, we obtain a simple criterion for strict convexity.
\begin{claim}
    \label{claim:convexity-of-one-cluster}
    Let \(0 \leq \epsilon < 1/4\). If \(\cU\) is such that for every opinions \(\vu_i,\vu_j\) we have \(\abs{A_{ij}} \geq 1 - \epsilon\), then \(\cU\) is strictly convex.
\end{claim}
\begin{proof}
    Let $b_1=1$ and $b_i=\sign(A_{1i})$ for $i>1$.
    That ensures $\langle b_1\vu_1,b_i\vu_i\rangle\ge 1-\eps$
    for every $i$.
    For general $i,j$, we have
    $\langle b_i\vu_i,b_j\vu_j\rangle\ge 1-4\eps>0$
    by~\cref{lem:transitivity_of_closeness}.
\end{proof}

\begin{lemma}[Inactivity relations]
\label{lem:transitivity_of_inactivity}
    Let \(d \geq 2\), $\eps\ge 0$, and consider three opinions \(\vu_i, \vu_j, \vu_k\).
    If \(\abs{A_{ij}} \leq \sqrt{\epsilon}\) and \(\abs{A_{ik}} \geq 1 - \epsilon\), then \(\abs{A_{jk}} \le 8 \sqrt{\epsilon}\).

    Furthermore, in the special case \(d = 2\),
    if \(\abs{A_{ij}} \leq \epsilon\) and \(\abs{A_{ik}} \leq \epsilon\), then \(\abs{A_{jk}} \geq 1 - 2 \epsilon^2\).
\end{lemma}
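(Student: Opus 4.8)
The statement has two parts; I will prove each by the same strategy as \Cref{lem:transitivity_of_closeness}, namely translating correlations into squared Euclidean distances and applying the triangle inequality, but now I also need the ``almost orthogonal'' version, which is a little less clean because near-orthogonality is not captured by a single Euclidean distance.

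\textbf{Part 1: $|A_{ij}|\le\sqrt\eps$ and $|A_{ik}|\ge 1-\eps$ imply $|A_{jk}|\le 8\sqrt\eps$.}
First I would reduce to the case $A_{ik}\ge 1-\eps$ (if instead $A_{ik}\le-(1-\eps)$, replace $\vu_k$ by $-\vu_k$; this flips the sign of $A_{jk}$ but not its absolute value, and leaves $|A_{ik}|$ unchanged). With $A_{ik}\ge 1-\eps$ we have $\|\vu_i-\vu_k\|^2=2(1-A_{ik})\le 2\eps$, so $\|\vu_i-\vu_k\|\le\sqrt{2\eps}$. Now I estimate
\begin{equation*}
    |A_{jk}-A_{ij}|=|\langle\vu_j,\vu_k-\vu_i\rangle|\le\|\vu_j\|\cdot\|\vu_k-\vu_i\|\le\sqrt{2\eps},
\end{equation*}
using Cauchy--Schwarz and $\|\vu_j\|=1$. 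Combining with $|A_{ij}|\le\sqrt\eps$ and the triangle inequality for absolute values gives $|A_{jk}|\le\sqrt\eps+\sqrt{2\eps}\le 8\sqrt\eps$ (in fact with plenty of room; one could even replace $8$ by $3$), which is the claim. The constant $8$ is not tight and I would just verify the crude bound $\sqrt\eps+\sqrt{2\eps}=(1+\sqrt2)\sqrt\eps\le 8\sqrt\eps$.

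\textbf{Part 2: $d=2$, $|A_{ij}|\le\eps$ and $|A_{ik}|\le\eps$ imply $|A_{jk}|\ge 1-2\eps^2$.}
Here the point is that in two dimensions the unit vectors live on a circle, so I would parametrize by angles: write $\vu_i,\vu_j,\vu_k$ at angles $\varphi_i,\varphi_j,\varphi_k$, so that $A_{ij}=\cos(\varphi_i-\varphi_j)$ etc. The hypothesis $|\cos(\varphi_i-\varphi_j)|\le\eps$ says $\varphi_i-\varphi_j$ is within $\arcsin\eps$ of $\pm\pi/2$ (mod $\pi$); likewise for $\varphi_i-\varphi_k$. Hence $\varphi_j-\varphi_k=(\varphi_i-\varphi_k)-(\varphi_i-\varphi_j)$ is, modulo $\pi$, within $2\arcsin\eps$ of either $0$ or $\pi$, i.e.\ of $0$ modulo $\pi$. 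Therefore $|A_{jk}|=|\cos(\varphi_j-\varphi_k)|\ge\cos(2\arcsin\eps)$. To finish I would bound $\cos(2\arcsin\eps)=1-2\sin^2(\arcsin\eps)=1-2\eps^2$, which is exact. I should be slightly careful about the mod-$\pi$ bookkeeping — working with the effective angle $\gamma$ of \Cref{def:effective-angle} makes this clean, since $\gamma$ is exactly the circle/projective-line metric and the hypotheses read $\gamma_{ij},\gamma_{ik}\ge\pi/2-\arcsin\eps$ while the conclusion reads $\gamma_{jk}\le 2\arcsin\eps$; but since we are on $\mathbb{S}^1$ a direct angle computation is just as short.

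\textbf{Main obstacle.}
There is no deep difficulty here — both parts are one-line geometric estimates. The only thing to watch is that Part 2 genuinely uses $d=2$: in higher dimensions two vectors each nearly orthogonal to $\vu_i$ can be nearly orthogonal to each other (they span a large subspace of $\vu_i^\perp$), so the naive angle subtraction fails, which is exactly why the ``furthermore'' is restricted to the plane. I would make sure the write-up flags this rather than attempting a dimension-free argument, and I would double-check the mod-$\pi$ reduction in Part 2 (the two hypotheses put $\varphi_i-\varphi_j$ and $\varphi_i-\varphi_k$ each near one of $\{\pi/2,-\pi/2\}$, and all four sign combinations lead to $\varphi_j-\varphi_k$ near $\{0,\pi\}$, hence $|A_{jk}|$ near $1$).
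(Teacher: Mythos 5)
Your proposal is correct, and Part 1 takes a genuinely different route from the paper. For the first statement the paper compares the two Euclidean distances $\norm{\vu_j-\vu_k}$ and $\norm{\vu_j-\vu_i}$ via the triangle inequality and then converts $\abs{A_{ij}}\le\sqrt{\eps}$ into a bound on $\big|\norm{\vu_i-\vu_j}-\sqrt{2}\big|$, which after squaring yields exactly the constant $8$. Your one-line Cauchy--Schwarz estimate $\abs{A_{jk}-A_{ij}}=\abs{\bck{\vu_j,\vu_k-\vu_i}}\le\norm{\vu_k-\vu_i}\le\sqrt{2\eps}$ is cleaner, avoids the squaring step entirely, and gives the sharper constant $1+\sqrt{2}$; it is a perfectly valid replacement (the reduction to $A_{ik}\ge 1-\eps$ by flipping $\vu_k$ is the same in both). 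For the second statement the paper writes $\vu_i=(1,0)$, $\vu_j=(x_1,y_1)$, $\vu_k=(x_2,y_2)$ and bounds $\abs{x_1x_2+y_1y_2}\ge\abs{y_1y_2}-\abs{x_1x_2}\ge 1-2\eps^2$; your angle parametrization with $\cos(2\arcsin\eps)=1-2\eps^2$ is the same computation in trigonometric clothing and lands on the identical (exact) constant. Your mod-$\pi$ bookkeeping is right, and you correctly identify that the restriction to $d=2$ is what makes the second part work. The only cosmetic point: for $\eps>1$ the second conclusion is vacuous, so the use of $\arcsin\eps$ should be flagged as applying only to the nontrivial range, but this is immaterial.
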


\begin{lemma}[Transitivity of activity]
\label{lem:transitivity_of_activity}
    Let \(d \geq 2\), $\eps_1, \epsilon_2 \ge 0$, and consider three opinions \(\vu_i, \vu_j, \vu_k\).
    If \(\abs{A_{ij}} \geq \sqrt{\epsilon_1}\) and \(\abs{A_{ik}} \geq 1 - \epsilon_2\), then \(\abs{A_{jk}} \geq \sqrt{\epsilon_1} - 3 \sqrt{\epsilon_2}\).
    In particular, if \(\epsilon_1 \geq 16 \epsilon_2\), then \(\abs{A_{jk}} \geq \sqrt{\epsilon_2}\).
\end{lemma}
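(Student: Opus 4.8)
The plan is to mimic the reductions used in \Cref{lem:transitivity_of_closeness} and \Cref{lem:transitivity_of_inactivity}. Since flipping $\vu_k$ leaves both $\abs{A_{ik}}$ and $\abs{A_{jk}}$ unchanged while turning $A_{ik}$ into $-A_{ik}$, I may assume without loss of generality that $A_{ik} \ge 1 - \epsilon_2$. I may also assume $\epsilon_2 < 1$, since otherwise $\sqrt{\epsilon_1} - 3\sqrt{\epsilon_2} \le 1 - 3 < 0 \le \abs{A_{jk}}$ and there is nothing to prove; note also that $\sqrt{\epsilon_1} \le 1$ whenever the hypothesis $\abs{A_{ij}} \ge \sqrt{\epsilon_1}$ can be satisfied, since $\abs{A_{ij}} \le 1$.

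Next I would control how far $\vu_k$ is from $\vu_i$ in Euclidean distance: from $A_{ik} \ge 1 - \epsilon_2$ we get $\norm{\vu_i - \vu_k}^2 = 2(1 - A_{ik}) \le 2\epsilon_2$, hence $\norm{\vu_i - \vu_k} \le \sqrt{2\epsilon_2}$. Writing $A_{jk} = \bck{\vu_j, \vu_k} = \bck{\vu_j, \vu_i} + \bck{\vu_j, \vu_k - \vu_i} = A_{ij} + \bck{\vu_j, \vu_k - \vu_i}$ and applying Cauchy--Schwarz together with $\norm{\vu_j} = 1$ gives $\abs{A_{jk} - A_{ij}} \le \norm{\vu_k - \vu_i} \le \sqrt{2\epsilon_2}$. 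Therefore $\abs{A_{jk}} \ge \abs{A_{ij}} - \sqrt{2\epsilon_2} \ge \sqrt{\epsilon_1} - \sqrt{2\epsilon_2} \ge \sqrt{\epsilon_1} - 3\sqrt{\epsilon_2}$, which is the first claim (the constant $3$ is wasteful but convenient).

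For the ``in particular'' part, the assumption $\epsilon_1 \ge 16\epsilon_2$ gives $\sqrt{\epsilon_1} \ge 4\sqrt{\epsilon_2}$, so the bound just proved yields $\abs{A_{jk}} \ge 4\sqrt{\epsilon_2} - 3\sqrt{\epsilon_2} = \sqrt{\epsilon_2}$.

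There is no substantial obstacle here: the only points that require a moment's care are the sign reduction (so that the hypothesis $\abs{A_{ik}} \ge 1 - \epsilon_2$ may legitimately be treated as a closeness statement about $\vu_i$ and $\vu_k$) and discarding the degenerate range $\epsilon_2 \ge 1$; beyond that the argument is one use of the triangle inequality for $\norm{\cdot}$ and one application of Cauchy--Schwarz.
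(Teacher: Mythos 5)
Your proof is correct, and it takes a slightly different (and in fact sharper) route than the paper's. The paper reduces to the case where both $A_{ij}$ and $A_{ik}$ are positive, then bounds $A_{jk} = 1 - \norm{\vu_j-\vu_k}^2/2$ from below via the triangle inequality $\norm{\vu_j-\vu_k}\le\norm{\vu_j-\vu_i}+\norm{\vu_i-\vu_k}$ and expands the square, which is where the loose constant $3$ comes from. You instead flip only $\vu_k$, write $A_{jk}-A_{ij}=\bck{\vu_j,\vu_k-\vu_i}$, and apply Cauchy--Schwarz to get the Lipschitz bound $\abs{A_{jk}-A_{ij}}\le\norm{\vu_k-\vu_i}\le\sqrt{2\epsilon_2}$; combined with the reverse triangle inequality on absolute values this yields $\abs{A_{jk}}\ge\sqrt{\epsilon_1}-\sqrt{2}\sqrt{\epsilon_2}$, strictly better than the stated $\sqrt{\epsilon_1}-3\sqrt{\epsilon_2}$ and, of course, still sufficient for the ``in particular'' clause. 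Your version also avoids the paper's case analysis on the sign of $A_{ij}$, since the reverse triangle inequality works directly with $\abs{A_{ij}}$. The handling of the degenerate range $\epsilon_2\ge 1$ and the observation that $\sqrt{\epsilon_1}\le 1$ under the hypothesis are both correct. Everything checks out.
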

\Cref{lem:transitivity_of_inactivity,lem:transitivity_of_activity} are proved 
in Appendix \ref{appendix_A}. 
Let us also prove
\Cref{lem:cluster} about the clusters of $\eps$-inactive configurations.

\begin{proof}[Proof of \Cref{lem:cluster}]
Consider the relation between agents in $\mathcal{U}$ which is defined to hold 
if and only if the condition
$|A_{ij}|> 1/2$ holds. We first observe that it is an equivalence relation.
The only nontrivial part to check is transitivity.
Consider three agents \(i,j,k\) such that
$|A_{ij}|,|A_{ik}|> 1/2$, therefore by
$\eps$-inactivity $|A_{ij}|,|A_{ik}|\ge 1-\eps$.
Applying Lemma \ref{lem:transitivity_of_closeness}, 
indeed we have \(\abs{A_{jk}} \geq 1 - 4 \epsilon > 1/2\), where we used that $\epsilon<1/8$.

Since $\cU$ is $\eps$-inactive, $|A_{ij}|=1/2$ is impossible.
Therefore, each equivalence class of this relation is a cluster
and each cluster must be an equivalence class. Hence, we obtain
the desired partition.

It only remains to prove that there are at most $d$ clusters.
Assume otherwise and choose arbitrarily opinions $\vu_1,\ldots,\vu_{d+1}$, each from a different cluster.
By the Welch bounds~\cite{Welch1974}, there exist $i\neq j$ with $|A_{ij}|\geq 1/d$.
Therefore, the configuration cannot be $\eps$-inactive for any $\eps<1/d$.
\end{proof}

\subsection{Polarization into balanced groups}\label{subsec:polarization_into_balanced_groups}

In this subsection, we establish that, although consensus is a polarized configuration according to \Cref{def:polarization}, it occurs with vanishing probability if the update function is odd and the initial configuration is chosen according to a distribution fulfilling a symmetry condition.
This argument was already pointed out in~\cite[Remark 3.1]{HJMR}
in the setting of their model with external influences.
Here, we adapt it to our setting and state an explicit concentration bound.

\begin{definition}[Symmetric Distribution]
    A distribution \(\Gamma\) on \(\mathbb{S}^{d-1}\) is symmetric if for every subset \(A \subset \mathbb{S}^{d-1}\), we have \(\Gamma(A) = \Gamma(-A)\).
\end{definition}
\begin{lemma}
\label{lem:concentration}
    Let \(f\) be an odd update function, \(\cD\) the interaction distribution, \(\Gamma\) a symmetric distribution on \(\mathbb{S}^{d-1}\).
    Draw \(n\) opinions \(\vu_1, \hdots, \vu_n\) i.i.d. from the distribution \(\Gamma\) and let \(\cU^{(0)} = (\vu_1, \hdots, \vu_n)\).
    Let \(\cU^{(\infty)} = \lim_{t \to \infty} \cU^{(t)}\) when the limit exists.
    Then, for every \(c > 0\),
    \begin{equation*}
        \mathbf{P}_{\Gamma, \cD} \left[\abs{\;\left|\left\{i : \vu_i^{(\infty)} = \vu_1^{(\infty)}\right\}\right| - \frac{n+1}{2}} \geq c \sqrt{n-1} \; \biggr| \; \cU^{(t)} \text{ polarizes} \right] \leq 2 \exp(-2 c^2)\; .
    \end{equation*}
\end{lemma}

The bound in \Cref{lem:concentration} implies that under an odd update function, when opinions that are chosen from a symmetric distribution on \(\mathbb{S}^{d-1}\) polarize, then with high probability they form two antipodal groups of similar sizes.
Note that the expectation $(n+1)/2$
is slightly larger than $n/2$ since the first opinion is always equal to itself.

\begin{proof}
    Consider \(n\) independent random variables \(U_i \in \mathbb{S}^{d-1}\) with distribution \(\Gamma\),
    and \(n\) independent Rademacher random variables \(B_i \in \{-1, +1\}\) with uniform distribution.
    First, note that the random variables \(B_i U_i\) are also independent and distributed according to \(\Gamma\), because \(\Gamma\) is symmetric.

    Also, given that the function \(f\) is odd,
    for any particular initial configuration \(\cU^{(0)} = (\vu_1, \hdots, \allowbreak\vu_i, \hdots, \vu_n)\) and any sequence of interactions \(I^{(1)}, I^{(2)},\hdots\) such that \(\lim_{t\to\infty}\cU^{(t)}\) exists and is polarized,
    it holds that for \(\cU^{(0)} = (\vu_1, \hdots, -\vu_i, \hdots, \vu_n)\) and the same sequence of interactions the limit also exists and is a polarized configuration.
    The only difference is that if in the first case \(\vu_i^{(\infty)} = \vec v\), then in the second case \(\vu_i^{(\infty)} = -\vec v\).
    So, with \(\cV^{(0)} = (B_1 \vu_1, \hdots, B_n \vu_n)\), if the random
    processes are coupled using the same interactions
    $I^{(1)},I^{(2)},\ldots$, then $\cU^{(t)}$ polarizes if and only if
    $\cV^{(t)}$ polarizes. In particular, whether $\cV^{(t)}$ polarizes 
    does not depend on the Rademacher variables $B_i$.

    Accordingly, condition on some choice of $\cU^{(0)}$ and interactions
    $I^{(t)}$ such that $\mathcal{U}^{(t)}$ and $\mathcal{V}^{(t)}$ polarize,
    so that the only remaining randomness is over $(B_i)$.
    Let \(S = \left\{i : \vu_i^{(\infty)} = \vu_1^{(\infty)}\right\}\) 
    and \(S' = \left\{i : B_i \vu_i^{(\infty)} = B_1 \vu_1^{(\infty)}\right\}\). 
    We will now show
    \begin{equation*}
    \Pr\left[|S'|-\frac{n+1}{2}\ge c\sqrt{n-1}\right]\le 2 \exp(-2 c^2)\;.
    \end{equation*}
    The statement of the lemma then immediately follows by averaging over
    $\cU^{(0)}$ and $I^{(t)}$. As we have
    \begin{equation*}
        \abs{S'} = \sum_{i \in S} \mathbb{I}_{\left[B_i = B_1\right]} + \sum_{i \not\in S} \mathbb{I}_{\left[B_i = -B_1\right]} \; .
    \end{equation*}
    and as always \(1 \in S\), \(1 \in S'\),
    the random variable \(\abs{S'}-1\) is a binomial with parameters \((n-1, 1/2)\), therefore we obtain the desired inequality from Hoeffding's inequality, 
    \begin{equation*}
        \mathbf{P}_{\Gamma, \cD} \left[\abs{\abs{S'} -1 -\frac{n-1}{2}} \geq c \sqrt{n-1} \right] \leq 2 \exp(-2c^2) \; . \qedhere
    \end{equation*}
\end{proof}

\section{Strictly convex polarization: Proof of \Cref{lem:strictly-convex-intro}}
\label{sec:strictly-convex}
In this section, we prove \Cref{lem:strictly-convex-intro}, that
is, we show that strictly convex
configurations polarize almost surely.
In order for our proof to be applicable
for both active and stable update functions,
we introduce a definition:

\begin{definition}
    Let $c\ge 0, 0\le\beta<1$.
    An update function $f$ is
    \emph{$(c,\beta)$}-contractive if,
    for every $A$ with $|A|>c$:
    \begin{enumerate}
        \item $\sign(f(A))=\sign(A)$.
        \item $1-|A'|\le\beta\cdot(1-|A|)$,
        where $A' = (A + f(A))/\sqrt{1 + 2 A f(A) + f(A)^2}$
    is the new correlation (see \Cref{claim:new_correlation}).
    \end{enumerate}
\end{definition}

We will now establish the following sequence of results:

\begin{lemma}[Strictly convex polarization]
\label{lem:convergence-of-c-strictly-convex-configuration}
    Let $c\ge 0,0\le\beta<1$ and $\mathcal{U}^{(0)}$ be $c$-strictly convex. If the update function is $(c,\beta)$-contractive,
    then $\mathcal{U}^{(t)}$ polarizes almost surely.
\end{lemma}

\begin{claim}[Contraction for active functions]
\label{cl:contraction-active}
Let $f$ be an active update function.
Recall that there exists some $A_0$ with $|A_0|<1$ such
that $f(A)>0$ for $A>A_0$ and 
$f(A)<0$ for $A<A_0$. 
There exists $0<\beta<1$ such that the following holds. 
Let $A$ be a correlation between two vectors
and let $A'$ be the updated correlation after one
interaction between them. If $A>A_0$, then
\(1-A'\le\beta(1-A)\).
Similarly, if $A<A_0$, then $A'-(-1)\le\beta(A-(-1))$.

In particular, $f$ is $(|A_0|,\beta)$-contractive.
\end{claim}

\begin{claim}[Contraction for stable functions]
\label{claim:contraction_stable}
    Let $f$ be a stable update function.
    For every $\delta>0$, there exists $\beta=\beta(\delta) \in (0,1)$ such that 
    $f$ is $(\delta,\beta(\delta))$-contractive.
\end{claim}

It is easy to see that \Cref{lem:strictly-convex-intro}
follows from
\Cref{lem:convergence-of-c-strictly-convex-configuration}, \Cref{cl:contraction-active},
and \Cref{claim:contraction_stable}.
(In the stable case, note that if $\cU^{(0)}$
is strictly convex, then it is $\delta$-strictly
convex for some $\delta>0$.) We prove
\Cref{lem:convergence-of-c-strictly-convex-configuration} in \Cref{sec:proof-for-c-strictly-convex}, 
\Cref{cl:contraction-active} in
Appendix~\ref{appendix_A}, 
and
\Cref{claim:contraction_stable}
in \Cref{sec:contraction-stable-proof}.

\subsection{Proof of \Cref{lem:convergence-of-c-strictly-convex-configuration}}
\label{sec:proof-for-c-strictly-convex}

Recall the outline from \Cref{sec:proof-outline}.
As mentioned there,
we will utilize a potential function
    $\mathcal{P}(t)=1-\min_{i,j\in[n]}\left|A_{ij}^{(t)}\right|$.
    \Cref{lem:convergence-of-c-strictly-convex-configuration} follows from the
    following intermediate results.
    
    \begin{lemma}
        \label{claim:minimum-correlation-non-decreasing}
        Assume \(\mathcal{U}^{(t)}\) is \(c\)-strictly convex, $f$ is $(c,\beta)$-contractive and that, at time step $t$, agent $i$ is influenced by agent $j$. Then, for every $k\neq i$, 
        \begin{equation}
            \label{eq:agent-potential-non-decreasing}
            \min_{1\leq l \leq n} |A_{kl}^{(t+1)}| \geq \min_{1\leq l \leq n} |A_{kl}^{(t)}|.
        \end{equation}
        Furthermore, $\mathcal{P}(t+1)\le
        \mathcal{P}(t)$ and the configuration
        $\mathcal{U}^{(t+1)}$ is also
        $c$-strictly convex with the same
        values of $b_1,\ldots,b_n$.
    \end{lemma}

    \begin{lemma}
        \label{cl:n-choose-two}
        Let $d,n\ge 2$, $\mathcal{D}$ be a fully supported distribution and $f$
        a $(c,\beta)$-contractive
        update function. There exists $p>0$,
        such that, for any
        $c$-strictly convex configuration $\cU^{(0)}$
        and $t=\binom{n}{2}$,
        \begin{equation}
            \Pr \left[ \mathcal{P}(t) \leq \beta \cdot
            \mathcal{P}(0)\right] \geq p.
        \end{equation}
    \end{lemma}

    \begin{lemma}
        \label{lem:potential-implies-polarization}
        Let $(\mathcal{U}^{(t)})_t$ be a sequence of configurations
        such that:
        \begin{enumerate}
            \item 
        $\mathcal{U}^{(0)}$ is $c$-strictly convex.
        \item The update function
        $f$ is $(c,\beta)$-contractive for
        some $\beta<1$.
        \item
        $\lim_{t\to\infty}\mathcal{P}(t)=0$. 
        \end{enumerate}
        Then,
        $\mathcal{U}^{(t)}$ polarizes.
    \end{lemma}

Let us show \Cref{lem:convergence-of-c-strictly-convex-configuration}
assuming that \Cref{lem:potential-implies-polarization,claim:minimum-correlation-non-decreasing,cl:n-choose-two} hold. Then, we will
prove those lemmas in turn.

\begin{proof}[Proof of \Cref{lem:convergence-of-c-strictly-convex-configuration}]
    Consider a sequence of Bernoulli random
    variables $X_0,\ldots,X_k,\ldots$, where $X_k$ is the indicator
    of event $\mathcal{P}\left({(k+1)\binom{n}{2}}\right)\le \beta\cdot \mathcal{P}\left(k\binom{n}{2}\right)$.
    By \Cref{claim:minimum-correlation-non-decreasing},
    the configuration $\mathcal{U}^{(t)}$ remains
    $c$-strictly convex at all times.
    \Cref{cl:n-choose-two} implies that $\Pr[X_k=1]\ge p$ for some constant $p>0$
    and for every $k$, even conditioned on $(X_i)_{i<k}$.
    Applying, e.g., the second Borel--Cantelli lemma, almost surely
    $X_k=1$ happens for infinitely many values of $k$. 
    Since, by \Cref{claim:minimum-correlation-non-decreasing}, $\mathcal{P}(t+1)\le\mathcal{P}(t)$
    holds in every case, if $X_k=1$ for infinitely
    many $k$, then $\lim_{t\to\infty}
    \mathcal{P}(t)=0$.
    But that in turn
    implies the polarization of $\mathcal{U}^{(t)}$ by
    \Cref{lem:potential-implies-polarization}.
\end{proof}

\begin{proof}[Proof of \Cref{claim:minimum-correlation-non-decreasing}]
    Let $b_1,\ldots,b_n$ be the values from
    the definition of strict convexity for
    configuration $\mathcal{U}^{(t)}$.
    Assume that, at time step $t$, agent $i$ is influenced by agent $j$ and let $k\neq i$.
    Observe that due to the definition 
    of $(c,\beta)$-contractivity
    for every $i,j$ 
    we have the property $\sign(A_{ij}^{(t)})=\sign(f(A_{ij}^{(t)}))$.
    
    When \(l \neq i\), we have $A_{kl}^{(t+1)} = A_{kl}^{(t)}$. 
    Otherwise, observe that
    \begin{equation}
    \label{eq:sign-convex-configuration}
        \sign \left( f(A_{ij}^{(t)}) A_{jk}^{(t)} \right) = 
        \sign\left(A_{ij}^{(t)}\right)
        \sign\left(A_{jk}^{(t)}\right)
        =(b_i b_j) \cdot (b_j b_k) = b_i b_k = \sign \left( A_{ik}^{(t)} \right).
    \end{equation}
    Since 
    \begin{equation*}
        A_{ik}^{(t+1)}
        = \bck{\frac{\vu_i^{(t)} + f(A_{ij}^{(t)}) \vu_j^{(t)} }{\norm{\vu_i^{(t)} + f(A_{ij}^{(t)}) \vu_j^{(t)} }},   \vu_k^{(t)}} 
        = \frac{ A_{ik}^{(t)}+f(A_{ij}^{(t)}) A_{jk}^{(t)}}{\sqrt{1+2A_{ij}^{(t)}f(A_{ij}^{(t)})+f^2(A_{ij}^{(t)})}}\;,
    \end{equation*}
    it follows from~\eqref{eq:sign-convex-configuration}
    that $\sign(A_{ik}^{(t+1)})=\sign(A_{ik}^{(t)})$
    and furthermore
    \begin{align*}
        \left|A_{ik}^{(t+1)}\right|
        &\ge
        \frac{\left| A_{ik}^{(t)} \right|+\left| f(A_{ij}^{(t)})\right| \cdot \left| A_{jk}^{(t)}\right| }{\sqrt{1+2A_{ij}^{(t)}f(A_{ij}^{(t)})+f^2(A_{ij}^{(t)})}}\\
        &\geq \frac{1}{1+\left|f(A_{ij}^{(t)})\right|} \left( \left| A_{ik}^{(t)} \right|+\left| f(A_{ij}^{(t)})\right| \cdot \left| A_{jk}^{(t)}\right|  \right)\\
        &\geq \frac{1}{1+\left|f(A_{ij}^{(t)})\right|} \left(\min_{1\leq l \leq n} \left| A_{kl}^{(t)} \right| + \left|f(A_{ij}^{(t)})\right| \min_{1\leq l \leq n} \left| A_{kl}^{(t)} \right| \right)
        = \min_{1\leq l \leq n} \left| A_{kl}^{(t)} \right|,
    \end{align*}
    which completes the proof of \eqref{eq:agent-potential-non-decreasing}.
    In particular,
    \begin{equation}
        \label{eq:minimum-correlation-non-decreasing}
        \min_{k,l\in [n]} \left| A_{kl}^{(t+1)} \right| = \min_{k\neq i} \min_{l \in [n]}\left| A_{kl}^{(t+1)} \right|
        \overset{\text{\eqref{eq:agent-potential-non-decreasing}}}{\geq}
        \min_{k\neq i} \min_{l \in [n]}\left| A_{kl}^{(t)} \right| = \min_{k,l\in [n]} \left| A_{kl}^{(t)} \right|
    \end{equation}
    and $\mathcal{P}(t+1)\le\mathcal{P}(t)$
    follows.

    Finally, the calculations above imply 
    $\sign(A_{ij}^{(t+1)})=\sign(A_{ij}^{(t)})$
    and 
    $|A_{ij}^{(t+1)}|\ge 1-\mathcal{P}(t+1)
    \ge 1-\mathcal{P}(t)>c$ for every $i,j$,
    therefore also
    \(\bck{b_i \vu_i^{(t+1)}, b_j \vu_j^{(t+1)}} > c\),
    and the configuration remains $c$-strictly convex
    with the same $b_1,\ldots,b_n$.
\end{proof}

\begin{proof}[Proof of \Cref{cl:n-choose-two}]
    Since the interaction distribution $\mathcal{D}$ has full support, it suffices to give a sequence of interactions of length $\binom{n}{2}$ such that
    $\mathcal{P}\left(\binom{n}{2}\right)\le
    \beta\cdot\mathcal{P}(0)$.
    Since by \Cref{claim:minimum-correlation-non-decreasing} the configuration remains $c$-strictly convex, and by $(c,\beta)$-contractivity,
    if agent $i$ is influenced by agent $j$ at time $t$, then 
    \begin{equation}\label{eq:using-claim-2}
    1-|A_{ij}^{(t+1)}|\leq \beta\left( 1- |A_{ij}^{(t)}|\right).
    \end{equation}
    We claim that the following sequence of interactions is as desired: For all $i \in \{1,\dots,n-1\}$ in increasing order, let agent $i$ influence all agents $j$ with $j > i$.
    Say that, in the procedure above, agent $i$ 
    influences other agents from time $t_1$ to $t_2:=t_1+n-i$.
    Since \eqref{eq:using-claim-2} applies to every agent $j>i$, we have 
    \begin{equation*}
    1-\min_{j:j> i} |A_{ij}^{(t_2)}| \leq \beta\left( 1- \min_{j:j>i}|A_{ij}^{(t_1)}|\right)
    \le \beta\left(1-\min_{j,k}|A_{jk}^{(0)}|\right)\;.
    \end{equation*}
    By \Cref{claim:minimum-correlation-non-decreasing}
    applied to agents $\{i,\ldots,n\}$, the value 
    $\min_{j:j>i} |A_{ij}^{(t)}|=\min_{j:j\ge i}|A_{ij}^{(t)}|$ does not decrease
    for $t\ge t_2$ because agent~$i$ is not influenced again. Therefore,
    \begin{equation*}
    1-\min_{i,j} \left| A_{ij}^{\left(\binom{n}{2}\right)}\right| = 1-\min_{1 \leq i \leq n} \min_{j:j> i} \left|A_{ij}^{\left(\binom{n}{2}\right)} \right| 
    \leq \beta\left( 1- \min_{i,j} \left|A_{ij}^{(0)} \right|\right),
    \end{equation*}
    which as explained before proves the claim.
\end{proof}

\begin{proof}[Proof of \Cref{lem:potential-implies-polarization}]
    Let $b_1,\ldots,b_n$ be signs from the definition of strict convexity
    and let $1\le i\le n$. We will first show that
    sequence $\left(\vu_{i}^{(t)}\right)_t$ is Cauchy, and therefore converges.
    By way of contradiction, assume that the sequence is not Cauchy.
    That is, there exists $\eps>0$ such that for all $T$
    there exist $t_1,t_2>T$ with
    $\left\|\vu_i^{(t_1)}-\vu_i^{(t_2)}\right\|>\eps$.
    Since the norm is a continuous function of the angle, 
    there exists $\eps'>0$ such that it also holds
    $\alpha\left(\vu_i^{(t_1)},\vu_i^{(t_2)}\right)>\eps'$.
    
    However, by assumption on convergence of $\mathcal{P}(t)$,
    for $t_1$ large enough we have
$\alpha\big(b_j\vu_j^{(t_1)},\allowbreak b_{j'}\vu_{j'}^{(t_1)}\big)<\eps'/2$ for every $1\le j,j'\le n$.
    Due to $\sign(f(A))=\sign(A)$ for $|A|>c$,
    at any future time $t_2>t_1$, vector $b_i\vu_i^{(t_2)}$ lies in the projection onto the sphere of the convex
    hull of $(b_1\vu_1^{(t_1)},\ldots,\allowbreak b_n\vu_n^{(t_1)})$.
    It follows $\alpha\left(\vu_i^{(t_2)},\vu_i^{(t_1)}\right)<\eps'/2$, which is a contradiction.
    Accordingly, sequence $\left(\vu_{i}^{(t)}\right)_t$ is Cauchy
    and converges.
    
    Furthermore, since $\mathcal{P}(t)$ converges to zero, and using \Cref{claim:minimum-correlation-non-decreasing},
    for every $i<j$
    it holds \\
    $\lim_{t\to\infty}A\left(b_i\vu_i^{(t)},b_j\vu_j^{(t)}\right)=1$. Therefore the limits of sequences $b_i\vu_i^{(t)}$
    are all equal and $\mathcal{U}^{(t)}$
    polarizes.
\end{proof}

\subsection{Proof of \Cref{claim:contraction_stable}}
\label{sec:contraction-stable-proof}
\begin{subequations}
Consider two opinions 
with correlation \(A\) and effective angle \(\gamma\).
In this section let us use the notation
\(A'\) and \(\gamma'\)
for the new correlation and effective angle
after an interaction between them.
First, recall from \Cref{claim:new_correlation} that the new correlation is given by
    \begin{equation}
        \label{eq:73}
        A' = \frac{A + f(A)}{\sqrt{1 + 2 A f(A) + f(A)^2}} \; .
    \end{equation}    

\begin{claim}
\label{cor:A_is_non_decreasing}
    If \(f\) is stable, then $\sign(A')=\sign(A)$. Furthermore,
    \begin{align}
        \label{eq:74}
        A \in \{-1, 0, 1\} &\implies A' = A \; ,
        \\
        \label{eq:75}
        A \not\in \{-1, 0, 1\} &\implies 1 > \abs{A'} > \frac{|A+f(A)|}{1+|f(A)|}> \abs{A} > 0 \; .
    \end{align}
\end{claim} 
\end{subequations}
\begin{proof}
    \(f\) is stable, so by definition \(\sign(A) = \sign(f(A))\), which implies with \eqref{eq:73} that \(\sign(A') = \sign(A)\). Let's then show \eqref{eq:74} and \eqref{eq:75}.
    The case $A\in\{-1,0,1\}$ is easy to check directly. Otherwise,
    we know that \(-1 < A < 1\), and \(A f(A) < \abs{f(A)}\). Therefore \(1 + 2 A f(A) + f(A)^2 < \left(1 + \abs{f(A)}\right)^2\). Hence,
    \begin{equation*}
        \abs{A'}
        > \frac{\abs{A + f(A)}}{1 + \abs{f(A)}}
        = \frac{\abs{A} + \abs{f(A)}}{1 + \abs{f(A)}}
        >\frac{|A|(1+|f(A)|)}{1+|f(A)|}=|A|\; .
    \end{equation*} 
    Furthermore, from the definition of the model an interaction between two non co-linear opinions cannot make them co-linear, so \(\abs{A'} < 1\).
\end{proof}

\begin{proof}[Proof of~\cref{claim:contraction_stable}]
    We define $\beta:=\max \left\{ 1/(1+|f(x)|): x \in [-1,-\delta] \cup [\delta,1]\right\}$. Observe that the maximum exists because $f$ is continuous and $[-1,-\delta] \cup [\delta,1]$ is compact. Moreover, observe that $\beta<1$ because, since $f$ is stable, we have $f(x)\neq 0$ for $x \neq 0$.
    We obtain
    \begin{equation*}
        |A'| \overset{\text{Cl.~\ref{cor:A_is_non_decreasing}}}{\geq} \frac{|A + f(A)|}{1 + |f(A)|} 
        \overset{\text{sign}(A)=\text{sign}f(A)}{=} \frac{|A| + |f(A)|}{1 + |f(A)|} 
        = 1- \frac{1-|A|}{1+|f(A)|} \geq 1-\beta(1-|A|).
    \end{equation*}
    This implies that \(f\) is
    $(\delta,\beta)$-contractive.
\end{proof}

For future use, we give a similar contraction result for the effective angle.
The proof is included in
Appendix~\ref{appendix_A}.

\begin{claim}[Contraction of the effective angle]
\label{claim:3}
    Let $f$ be a stable update function.
    For every $\delta>0$, there exists \(0 < c_0 < 1\) such that, if \(\gamma \leq \frac{\pi}{2}-\delta\), then \(\gamma' \leq c_0 \gamma\).
\end{claim}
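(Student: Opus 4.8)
The plan is to reduce the statement about the effective angle $\gamma$ to the statement about the correlation $A$ already proved in Claim~\ref{claim:2}, using the relation $|A| = |\cos\alpha| = \cos\gamma$ together with the elementary bounds~\eqref{eq:cosine_upper} and~\eqref{eq:cosine_lower}. First I would record that, since $\gamma \in [0,\pi/2]$ and $\gamma = \min(\alpha,\pi-\alpha)$, we have $\cos\gamma = |\cos\alpha| = |A|$, and likewise $\cos\gamma' = |A'|$; moreover $\gamma \mapsto \cos\gamma$ is a decreasing bijection from $[0,\pi/2]$ to $[0,1]$, so $\gamma' \le c_0\gamma$ is equivalent to $\cos\gamma' \ge \cos(c_0\gamma)$, i.e.\ to a suitable lower bound on $|A'|$.

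Next I would set $\delta' := \cos(\tfrac{\pi}{2}-\delta) = \sin\delta > 0$, so that the hypothesis $\gamma \le \tfrac{\pi}{2}-\delta$ translates into $|A| = \cos\gamma \ge \delta'$. Applying Claim~\ref{claim:2} with this $\delta'$ yields a constant $c \in (0,1)$ with $1 - |A'| \le c\,(1-|A|)$, equivalently $1 - \cos\gamma' \le c\,(1-\cos\gamma)$. Now I would use~\eqref{eq:cosine_lower}: there is $c^* > 0$ such that $1-\cos x \ge \tfrac{x^2}{2} - x^4 \ge \tfrac{x^2}{2}(1 - 2x^2) \ge \tfrac{x^2}{4}$ for all $0 \le x \le x_0$, where $x_0 := \min(c^*, 1/2)$; and~\eqref{eq:cosine_upper} gives $1-\cos x \le \tfrac{x^2}{2}$ always. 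To apply the lower bound to $\gamma'$ I first need $\gamma' \le x_0$: from $1 - \cos\gamma' \le c(1-\cos\gamma) \le c\cdot\tfrac{1}{2}(\tfrac{\pi}{2})^2$ and monotonicity of $x \mapsto 1-\cos x$ on $[0,\pi]$, one gets $\gamma' \le \gamma'_{\max}$ for an explicit $\gamma'_{\max} < \pi/2$ depending only on $c$; if $\gamma'_{\max} \le x_0$ we are in good shape, and otherwise we can simply shrink $c$ further (replacing it by $\min\{c, c''\}$ for a constant $c''$ chosen so that $c''\cdot\tfrac12(\tfrac\pi2)^2 = 1-\cos x_0$) without affecting the conclusion $1-|A'|\le c(1-|A|)$, since shrinking $c$ only strengthens that inequality.

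With $\gamma' \le x_0$ secured, chaining the inequalities gives
\begin{equation*}
    \frac{(\gamma')^2}{4} \;\le\; 1-\cos\gamma' \;\le\; c\,(1-\cos\gamma) \;\le\; \frac{c}{2}\,\gamma^2 \; ,
\end{equation*}
hence $\gamma' \le \sqrt{2c}\,\gamma$. Thus the claim holds with $c_0 := \max\{\sqrt{2c}, 1/2\}$ if $\sqrt{2c} < 1$; if $\sqrt{2c} \ge 1$, we again shrink $c$ at the outset (it is harmless, as noted) so that $2c < 1$, giving $c_0 = \sqrt{2c} \in (0,1)$ as required. The only mildly delicate point is the bookkeeping needed to guarantee $\gamma'$ lands in the regime $[0,x_0]$ where the quadratic lower bound~\eqref{eq:cosine_lower} is valid; this is handled uniformly by the observation that we are always free to decrease the constant $c$ coming from Claim~\ref{claim:2}. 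Everything else is a routine translation between $|A|$ and $\gamma$ via the cosine.
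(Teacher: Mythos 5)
There is a genuine gap, and it comes from a single recurring move: you treat the contraction constant $c$ from Claim~\ref{claim:2} as something you are ``free to decrease.'' That is backwards. Claim~\ref{claim:2} \emph{guarantees} $1-|A'|\le c(1-|A|)$ for one specific $c\in(0,1)$ (namely $\max\{1/(1+|f(x)|)\}$ over the relevant compact set); replacing $c$ by a smaller constant asserts a \emph{stronger} contraction that has not been proved. You rely on this illegitimate shrinking twice: first to force $\gamma'\le x_0$ so that the quadratic lower bound on $1-\cos$ applies, and second to arrange $2c<1$ so that your final constant $\sqrt{2c}$ is below $1$. Neither step survives once the shrinking is disallowed. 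The second failure is the more serious one: for a stable $f$ with small values, $c$ can be arbitrarily close to $1$, so $\sqrt{2c}>1$ and your chain $\tfrac{(\gamma')^2}{4}\le c\cdot\tfrac{\gamma^2}{2}$ yields no contraction at all. The factor of $2$ you lose between the upper bound $1-\cos x\le \tfrac{x^2}{2}$ and your crude lower bound $1-\cos x\ge \tfrac{x^2}{4}$ is exactly what kills the argument.

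The paper's proof repairs both points. Near $\gamma=0$ it uses the sharper lower bound $1-\cos\gamma'\ge \tfrac{1-\delta}{2}(\gamma')^2$, valid for $\gamma'$ below a threshold depending on $\delta$, and chooses $\delta<1-c$ so that the resulting ratio $\sqrt{c/(1-\delta)}$ is genuinely below $1$ --- no factor of $2$ is lost. The requirement that $\gamma'$ be small is obtained for free from $\gamma'\le\gamma\le\gamma_0$ (\Cref{cor:A_is_non_decreasing}), not by strengthening the contraction. The remaining range $\gamma\in[\gamma_0,\tfrac{\pi}{2}-\delta]$ is handled by a separate compactness/continuity argument: $\gamma'/\gamma$ is a continuous function strictly less than $1$ on a compact interval, hence bounded away from $1$. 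Your reduction of the angle statement to the correlation statement via $\cos\gamma=|A|$ is exactly the right idea and matches the paper; what is missing is either the $(1-\delta)$-sharpened cosine bound or the case split with a compactness argument, one of which is needed to obtain a ratio strictly below $1$.
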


\section{Active update functions: Proof of \Cref{thm:active}}
\label{sec:active}

In this section we prove
\Cref{lem:almost_surely_c_strictly_convex} and \Cref{thm:active}.

\begin{proof}[Proof of \Cref{lem:almost_surely_c_strictly_convex}]
        The strategy of the proof is to make agent \(1\) influence every other agent until all opinions are so close to each other that the configuration is \(c\)-strictly convex,
        for $c=|A_0|$ where $A_0$
        is taken from \Cref{def:active_function}.
        To that end, choose some \(\epsilon < \min(1 -c,1/256)\), and let \(\epsilon_0 = \epsilon / 4\).
        If for an opinion \(\vu_i\) it holds \(A_{1i} = A_0\), then after one interaction where 1 influences $i$ it holds \(A_{1i} \neq A_0\) and the proof reduces to one of the other cases. If \(A_{1i} > A_0\), then from \Cref{cl:contraction-active}, in at most \(T = \lceil\log_{\beta}(\epsilon_0/2)\rceil\) steps of agent \(1\) influencing \(i\) we will have \(A_{1i} \geq 1 - \epsilon_0\). Similarly, if \(A_{1i} < A_0\), then 
        in at most \(T\) steps of agent \(1\) influencing \(i\) we will have \(A_{1i} \leq - 1 + \epsilon_0\).
        
        This implies that if we make agent \(1\) influence every other agent \(T+1\) times each, then after \(K = (n-1)(T+1)\) interactions we will have
        \(\abs{A_{1i}^{(K)}} \geq 1 - \epsilon_0\) for every \(i \in [n]\). From \Cref{lem:transitivity_of_closeness}, for every \(i, j \in [n]\), \(\abs{A_{ij}^{(K)}} \geq 1 - 4\epsilon_0 \geq 1 - \epsilon\).
        Therefore, by \Cref{claim:convexity-of-one-cluster}, \(\cU^{(K)}\) is strictly convex. Furthermore, for every \(i, j\), \(\abs{A_{ij}^{(K)}} \geq 1 - \epsilon > c\). Therefore, \(\cU^{(K)}\) is \(c\)-strictly convex.
\end{proof}

\begin{proof}[Proof of \Cref{thm:active}]
From \Cref{lem:almost_surely_c_strictly_convex},
    at any time $t$, independently of the past,
    if configuration $\mathcal{U}^{(t)}$ is not $|A_0|$-strictly convex,
    then
    \begin{equation*}
    \Pr\left[\mathcal{U}^{(t+K)}\text{ is $|A_0|$-strictly convex}\;\vert\; \mathcal{U}^{(t)}\right]\ge p_{\min}^{K}\;,
    \end{equation*}
    where \(p_{\min}>0\) is the minimum probability in \(\mathcal{D}\). 
    That implies that almost surely, 
    eventually $\mathcal{U}^{(t)}$ becomes $|A_0|$-strictly convex.
    Then, by \Cref{lem:strictly-convex-intro},
    $\mathcal{U}^{(t)}$ polarizes.
\end{proof}

\section{Stable update functions and proof of \Cref{thm:three-ops}}
\label{sec:stable-preliminaries}
From now on, we study polarization under stable update functions.
We begin the discussion in \Cref{sec:properties_of_stable_update_functions}
with observing that interactions under stable
functions preserve active, separable, and strictly
convex properties.
In \Cref{sec:three-opinions}, we show that three opinions
polarize almost surely (\Cref{thm:three-ops}). 
Finally, in \Cref{sec:more_than_three_opinions} we
show that every $\eps$-active configuration
eventually becomes $\eps$-inactive.
This result is used in
\Cref{sec:2d} and \Cref{sec:3d-stable}.

\subsection{Properties of stable update functions}
\label{sec:properties_of_stable_update_functions}

In this section we give a few properties of various types of configurations.
Recall active, separable and strictly convex configurations from \Cref{def:epsilon_activity,def:separable,def:strictly_convex}.
We show that those properties are
preserved over time. The first claim follows immediately from
\Cref{cor:A_is_non_decreasing}.

\begin{claim}
\label{claim:4}
    Let $f$ be a stable update function.
    $\mathcal{U}^{(t)}$ is active if and only if $\mathcal{U}^{(t+1)}$ is active.
\end{claim}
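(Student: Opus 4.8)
The plan is to condition on the interaction $I^{(t)}=(i,j)$ and split according to the value $A:=A_{ij}^{(t)}$. By the remark following \Cref{def:model}, a step only alters $\vu_i$, so every correlation $A_{kl}^{(t)}$ with $k,l\neq i$ is preserved; hence whether the configuration is active is governed entirely by the pair $(i,j)$ together with how $\vu_i$ moves.

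First I would treat the degenerate case $A\in\{-1,0,1\}$ and show the step is a no-op, i.e.\ $\vu_i^{(t+1)}=\vu_i^{(t)}$ and therefore $\mathcal{U}^{(t+1)}=\mathcal{U}^{(t)}$, after which the equivalence is immediate. This uses only the sign conditions from \Cref{def:stable_function}: if $A=0$ then $f(0)=0$ and $\vw=\vu_i^{(t)}$; if $A=1$ then $\vu_i^{(t)}=\vu_j^{(t)}$ and $\vw=(1+f(1))\vu_i^{(t)}$ with $1+f(1)>0$; if $A=-1$ then $\vu_i^{(t)}=-\vu_j^{(t)}$ and $\vw=(1-f(-1))\vu_i^{(t)}$ with $1-f(-1)>0$. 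In each case normalization in \eqref{eq:06} returns $\vu_i^{(t)}$.

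Next I would treat the case $0<|A|<1$, where the pair $(i,j)$ already witnesses that $\mathcal{U}^{(t)}$ is active in the sense of \Cref{def:epsilon_activity}; it then suffices to show it still witnesses activity of $\mathcal{U}^{(t+1)}$, since then both configurations are active and the equivalence holds. By \Cref{cor:A_is_non_decreasing}, $\sign(A_{ij}^{(t+1)})=\sign(A)\neq 0$, so $A_{ij}^{(t+1)}\neq 0$; and squaring \eqref{eq:new-correlation} yields $1-(A_{ij}^{(t+1)})^2=\frac{1-A^2}{\norm{\vu_i^{(t)}+f(A)\vu_j^{(t)}}^2}$, whose right-hand side is strictly positive because the denominator $1+2Af(A)+f(A)^2=\norm{\vu_i^{(t)}+f(A)\vu_j^{(t)}}^2$ vanishes only when $|A|=1$. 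Hence $A_{ij}^{(t+1)}\notin\{-1,0,1\}$, so $\mathcal{U}^{(t+1)}$ is active. Combining the two cases proves the equivalence.

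I do not expect a serious obstacle here; the only point requiring care is the degenerate case, where one must check that the normalization does not flip $\vu_i$ — in particular when $A=-1$, where positivity of the coefficient $1-f(-1)$ relies on $f(-1)<0$. Everything in the active case is then immediate from \Cref{cor:A_is_non_decreasing} and \Cref{claim:1}.
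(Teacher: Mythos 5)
Your proof is correct and follows essentially the same route as the paper, which simply observes that the claim is immediate from \Cref{cor:A_is_non_decreasing}: your case $0<|A|<1$ is exactly that corollary (together with the identity $1-(A')^2=(1-A^2)/\norm{\vw}^2$ from \Cref{claim:1}), and your degenerate case makes explicit the no-op behaviour that the paper leaves implicit. The extra care you take there — checking that the normalization in \eqref{eq:06} returns $\vu_i^{(t)}$ exactly, using $1+f(1)>0$ and $1-f(-1)>0$ — is a worthwhile detail, since the corollary by itself only controls the correlation $A_{ij}$ and not the other correlations $A_{ik}$ involving the moved opinion.
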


\begin{lemma}
\label{lem:non-separable-preserved}
    Let $f$ be a stable update function.
    A configuration \(\mathcal{U}^{(t+1)}\) is separable if and only if \(\mathcal{U}^{(t)}\) is separable.
\end{lemma}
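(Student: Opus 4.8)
The plan is to prove both directions by analyzing how a single interaction $I^{(t)}=(i,j)$ can affect orthogonality relations among the opinions. Recall that separability of $\mathcal U^{(t)}$ means there is a partition $[n]=S\sqcup T$ into nonempty parts with $A_{k\ell}^{(t)}=0$ whenever $k\in S$, $\ell\in T$. The key structural fact is the second bullet of the Remark following \Cref{def:model}: only the correlations $A_{ik}^{(t)}$ (equivalently $A_{ki}^{(t)}$) involving the updated agent $i$ can change, while all other pairwise correlations are frozen. So the whole question reduces to understanding the single quantity $A_{ij}$ and, more precisely, to the fact that $\langle b_i\vu_i',b_k\vu_k\rangle=0$ is possible only in controlled ways.

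First I would prove the ``if'' direction: if $\mathcal U^{(t)}$ is separable via $(S,T)$, then $\mathcal U^{(t+1)}$ is separable. Since $\vu_i^{(t)}\in S$ or $\vu_i^{(t)}\in T$, assume WLOG $i\in S$. For $\ell\in T$ we have $A_{i\ell}^{(t)}=0$, and in particular $A_{ij}^{(t)}=0$ if $j\in T$. If $j\in S$, then $\vu_i^{(t+1)}$ is a linear combination of $\vu_i^{(t)}$ and $\vu_j^{(t)}$, both orthogonal to every $\vu_\ell^{(t)}$ with $\ell\in T$, so $A_{i\ell}^{(t+1)}=0$ for all $\ell\in T$; all other correlations are unchanged, so the same partition $(S,T)$ witnesses separability of $\mathcal U^{(t+1)}$. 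If instead $j\in T$, then since $f$ is stable we have $f(A_{ij}^{(t)})=f(0)=0$, so $\vw=\vu_i^{(t)}$ and $\vu_i^{(t+1)}=\vu_i^{(t)}$, hence $\mathcal U^{(t+1)}=\mathcal U^{(t)}$ is separable. This uses $f(0)=0$ in an essential way, which is exactly where stability enters.

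Next I would prove the ``only if'' direction: if $\mathcal U^{(t+1)}$ is separable, so is $\mathcal U^{(t)}$. This is the step I expect to require the most care, because a priori an interaction could \emph{create} orthogonality that was not there before. Suppose $\mathcal U^{(t+1)}$ is separable via $(S,T)$, with $i\in S$ WLOG. If $j\in S$ as well, then for every $\ell\in T$ we have $0=A_{i\ell}^{(t+1)}=\langle \vu_i^{(t+1)},\vu_\ell^{(t)}\rangle$, and since $\vu_i^{(t+1)}$ is a nonzero multiple of $\vu_i^{(t)}+f(A_{ij}^{(t)})\vu_j^{(t)}$, we get $A_{i\ell}^{(t)}+f(A_{ij}^{(t)})A_{j\ell}^{(t)}=0$ for all $\ell\in T$; combined with $A_{j\ell}^{(t+1)}=A_{j\ell}^{(t)}=0$ (as $j\in S$, $\ell\in T$) this forces $A_{i\ell}^{(t)}=0$ for all $\ell\in T$, so $(S,T)$ already witnesses separability at time $t$. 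The delicate case is $j\in T$: then $A_{ij}^{(t+1)}=0$, i.e.\ $\langle\vu_i^{(t+1)},\vu_j^{(t)}\rangle=0$. By \Cref{cor:A_is_non_decreasing}, $\sign(A_{ij}^{(t+1)})=\sign(A_{ij}^{(t)})$ and moreover $A_{ij}^{(t)}\in\{-1,0,1\}\iff A_{ij}^{(t+1)}=A_{ij}^{(t)}$; since $|A_{ij}^{(t)}|\le 1$ and $A_{ij}^{(t+1)}=0$, the only consistent possibility is $A_{ij}^{(t)}=0$. But then $f(A_{ij}^{(t)})=0$, so $\vu_i^{(t+1)}=\vu_i^{(t)}$ and $\mathcal U^{(t+1)}=\mathcal U^{(t)}$, which is therefore separable.

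Assembling these cases gives the lemma. The one subtlety to double-check in writing it up is that the partition $(S,T)$ can legitimately differ between times $t$ and $t+1$; the argument above handles this by always transporting a fixed $(S,T)$ at one time to the same $(S,T)$ at the other time, which works because only agent $i$'s correlations move. I would also note explicitly that stability is used only through $f(0)=0$ and through \Cref{cor:A_is_non_decreasing} (which itself only needs $\sign(f(A))=\sign(A)$), so the statement is really about the sign structure of $f$ near $0$ rather than any quantitative property.
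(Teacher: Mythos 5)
Your proof is correct and follows essentially the same route as the paper's: the forward direction uses $f(0)=0$ to show that cross-part interactions are no-ops and that within-part linear combinations stay in the same part, and the reverse direction uses \Cref{cor:A_is_non_decreasing} to rule out newly created orthogonality. The only cosmetic difference is that you argue directly with the partition $(S,T)$ of agents, where the paper phrases the converse as a proof by contradiction via a subspace $V$ and its orthogonal complement; your version is if anything slightly cleaner, since the explicit computation $A_{i\ell}^{(t)}+f(A_{ij}^{(t)})A_{j\ell}^{(t)}=0$ with $A_{j\ell}^{(t)}=0$ replaces the paper's ``only linear combination lying in $V$'' argument and avoids the implicit nonemptiness bookkeeping of the subspace formulation.
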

\begin{proof}
    Let us prove the first implication.
    Assume that $\mathcal{U}^{(t)}$ is separable.
    From \Cref{def:separable}, there exists a subspace \(V \subset \mathbb{R}^d\) such that for every opinion \(\vu_i\), either \(\vu_i \in V\) or \(\vu_i \in V^T\), where \(V^T\) is the orthogonal complement of \(V\).

    Let the configuration at time $t+1$ be obtained by \(j\) influencing \(i\).
    Given that \(\vu_i^{(t+1)}\) is a linear combination of \(\vu_i^{(t)}\) and \(\vu_j^{(t)}\), we have that
    \begin{itemize}
        \item if \(\vu_i^{(t)}, \vu_j^{(t)} \in V\) (respectively \(\vu_i^{(t)}, \vu_j^{(t)} \in V^T\)), then \(\vu_i^{(t+1)} \in V\) (respectively \(\vu_i^{(t+1)} \in V^T\)),
        \item if \(\vu_i^{(t)} \in V\) and \(\vu_j^{(t)} \in V^T\) (or vice versa), then \(A_{ij}^{(t)} = 0\), so the interaction has no effect and \(\mathcal{U}^{(t+1)} = \mathcal{U}^{(t)}\).
    \end{itemize}
    We conclude that \(\mathcal{U}^{(t+1)}\) is also separable.

    Conversely,
    let's show that if \(\mathcal{U}^{(t+1)}\) is separable, then so is \(\mathcal{U}^{(t)}\). Suppose for the sake of contradiction that \(\mathcal{U}^{(t+1)}\) is separable and \(\mathcal{U}^{(t)}\) is not. So there exists \(V\) such that for every \(i \in [n]\), either \(\vu_i^{(t+1)} \in V\) or \(\vu_i^{(t+1)} \in V^T\). Given that \(\cU^{(t)}\) is not separable, if at time \(t\), \(j\) influenced \(i\), it means \(\vu_i^{(t)} \not\in V\) and \(\vu_i^{(t)} \not\in V^T\), while \(\vu_j^{(t)} \in V\) or \(\vu_j^{(t)} \in V^T\). Furthermore, the opinion of $i$ mush
    have moved, so $A_{ij}^{(t)}\ne 0$.

    Assume $\vu_j^{(t)}\in V$.
    An interaction between \(i\) and \(j\) cannot make their opinions orthogonal (\Cref{cor:A_is_non_decreasing}) so \(\vu_i^{(t+1)} \not\in V^T\). 
    But
    $\vu_{i}^{(t+1)}$ is a non-trivial linear combination
    of $\vu_i^{(t)}\notin V$ and $\vu_{j}^{(t)}\in V$, so it cannot
    lie in $V$ either, a contradiction.
    A similar argument works for the case \(\vu_j^{(t)} \in V^T\).
\end{proof}

\begin{claim}
\label{cl:stable-convex-preserved}
Let $f$ be stable and $\mathcal{U}^{(t)}$ strictly convex for
some $b_1,\ldots,b_n$. Then, $\mathcal{U}^{(t+1)}$ is strictly convex
for the same $b_1,\ldots,b_n$.
\end{claim}
\begin{proof}
Since $\mathcal{U}^{(t)}$ is strictly convex, it
is $\delta$-strictly convex for some $\delta>0$. The result follows
by \Cref{claim:contraction_stable} and \Cref{claim:minimum-correlation-non-decreasing}.
\end{proof}

\subsection{Polarization of three opinions: Proof of \Cref{thm:three-ops}}
\label{sec:three-opinions}

Let $f$ be a fixed stable update function.
Let $\mathcal{U}^{(0)}$ be 
a starting configuration of three opinions
for some $d\ge 2$ 
which
is not separable (since opinions always
stay in the span of initial opinions, in fact we could assume w.l.o.g.~that $d=3$, however, we do not use this assumption in
the proof).
Recall the effective angle from \Cref{def:effective-angle}.
Given configuration $\mathcal{U}^{(t)}$ of three opinions at time $t$, let
\begin{equation*}
P^{(t)}:=\gamma_{12}^{(t)}+\gamma_{13}^{(t)}+\gamma_{23}^{(t)}\;.
\end{equation*}

Suppose \(\cU^{(t)}\) is not separable. 
The first step in the proof will be showing that $P^{(t)}$ never increases
with $t$:
\begin{lemma}\label{lem:potential-monotone}
Almost surely, $P^{(t+1)}\le P^{(t)}$. 
\end{lemma}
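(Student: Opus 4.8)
The plan is to show that $P^{(t+1)}\le P^{(t)}$ \emph{deterministically} at every step (the ``almost surely'' only reflects the randomness of which interaction occurs). So suppose that at the given step agent $i$ is influenced by agent $j$, and let $k$ be the third index. Only $\vu_i$ moves, so $\gamma_{jk}$ is unchanged and it suffices to prove $\gamma_{ij}^{(t+1)}+\gamma_{ik}^{(t+1)}\le\gamma_{ij}^{(t)}+\gamma_{ik}^{(t)}$. The degenerate cases $A_{ij}^{(t)}\in\{-1,0,1\}$ are trivial since then $\vu_i^{(t+1)}=\vu_i^{(t)}$, so I would assume $0<|A_{ij}^{(t)}|<1$ and set $\sigma:=\sign(A_{ij}^{(t)})$ and $\delta:=\gamma_{ij}^{(t)}-\gamma_{ij}^{(t+1)}$; by \Cref{cor:A_is_non_decreasing} one has $|A_{ij}^{(t+1)}|\ge|A_{ij}^{(t)}|$ and $\sign(A_{ij}^{(t+1)})=\sigma$, so in particular $\delta\ge 0$.

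The main step is a geometric claim: the update drags $\vu_i$ along the short great-circle arc from $\vu_i^{(t)}$ \emph{toward} the near representative $\sigma\vu_j^{(t)}$ of agent $j$'s opinion line (and stops short of it), through angular distance exactly $\delta$. I would prove this as follows. Since $f$ is stable, $f(A_{ij}^{(t)})=c\,\sigma$ with $c:=|f(A_{ij}^{(t)})|>0$, so $\vu_i^{(t+1)}$ is the normalization of $\vu_i^{(t)}+c\,(\sigma\vu_j^{(t)})$, a strictly positive combination of the two unit vectors $\vu_i^{(t)}$ and $\sigma\vu_j^{(t)}$, which lie at angle $\arccos\langle\vu_i^{(t)},\sigma\vu_j^{(t)}\rangle=\arccos|A_{ij}^{(t)}|=\gamma_{ij}^{(t)}<\pi$; hence $\vu_i^{(t+1)}$ lies on the short arc strictly between them. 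Along that arc the angular distances add, $\alpha(\vu_i^{(t)},\vu_i^{(t+1)})+\alpha(\vu_i^{(t+1)},\sigma\vu_j^{(t)})=\gamma_{ij}^{(t)}$, and since $\sign(A_{ij}^{(t+1)})=\sigma$ the middle term equals $\arccos|A_{ij}^{(t+1)}|=\gamma_{ij}^{(t+1)}$; therefore $\alpha(\vu_i^{(t)},\vu_i^{(t+1)})=\delta$, and as $\delta\le\gamma_{ij}^{(t)}\le\pi/2$ this is also the effective angle $\gamma(\vu_i^{(t)},\vu_i^{(t+1)})$.

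With that in hand the conclusion is immediate from the triangle inequality for the effective angle (the claim following \Cref{def:effective-angle}): $\gamma_{ik}^{(t+1)}=\gamma(\vu_i^{(t+1)},\vu_k^{(t)})\le\gamma(\vu_i^{(t+1)},\vu_i^{(t)})+\gamma(\vu_i^{(t)},\vu_k^{(t)})=\delta+\gamma_{ik}^{(t)}$. Adding this to $\gamma_{ij}^{(t+1)}=\gamma_{ij}^{(t)}-\delta$ gives $\gamma_{ij}^{(t+1)}+\gamma_{ik}^{(t+1)}\le\gamma_{ij}^{(t)}+\gamma_{ik}^{(t)}$, hence $P^{(t+1)}\le P^{(t)}$.

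The only genuine obstacle is the geometric claim of the second paragraph — that the update pulls $\vu_i$ toward, and not past, agent $j$'s opinion line — and this is precisely where stability of $f$ is used (it gives both the correct sign of the coefficient and $\sign(A_{ij}^{(t+1)})=\sign(A_{ij}^{(t)})$). Everything else is the triangle inequality together with \Cref{cor:A_is_non_decreasing}. I would also remark why the argument is special to three opinions: with $n\ge 4$ agents the single update of $\vu_i$ would pour a gain of up to $\delta$ into each of the several angles $\gamma_{ik}$ against only one loss of $\delta$ in $\gamma_{ij}$, so the analogous sum of effective angles need not be monotone.
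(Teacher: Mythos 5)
Your proof is correct and follows essentially the same route as the paper's: only the influenced opinion moves, it travels along the great-circle arc toward the near representative of the influencer by angular distance exactly $\delta=\gamma_{ij}^{(t)}-\gamma_{ij}^{(t+1)}$, and a triangle inequality then bounds the possible increase of the third effective angle by $\delta$. The only (cosmetic) differences are that you explicitly justify the displacement identity $\alpha(\vu_i^{(t)},\vu_i^{(t+1)})=\delta$, which the paper uses without comment, and that by invoking the triangle inequality for $\gamma$ rather than for $\alpha$ you avoid the paper's case split on whether $\alpha_{ik}^{(t)}\le\pi/2$.
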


\begin{proof}
Assume without loss of
generality that agent 2 influences agent 1. Since only opinion 1 moves, we have to
show $\gamma_{12}^{(t+1)}+\gamma_{13}^{(t+1)}\le \gamma_{12}^{(t)}+\gamma_{13}^{(t)}$.
First, let us show two inequalities
\begin{subequations}
\begin{align}
\gamma_{12}^{(t+1)}+\alpha_{13}^{(t+1)}&\le\gamma_{12}^{(t)}+\alpha_{13}^{(t)}\;,
\label{eq:09}\\
\gamma_{12}^{(t+1)}+\alpha_{13}^{(t)}&\le\gamma_{12}^{(t)}+\alpha_{13}^{(t+1)}\;.
\label{eq:10}
\end{align}
\end{subequations}
To establish them, recall
that the angular distance $\alpha(\cdot,\cdot)$ is a metric on $\mathbb{S}^{d-1}$,
in particular it obeys the triangle inequality. Now, observe that indeed
\begin{align*}
\alpha_{13}^{(t+1)}
&=\alpha(\vu_1^{(t+1)},\vu_3^{(t)})
\le\alpha(\vu_1^{(t+1)},\vu_1^{(t)})+\alpha(\vu_1^{(t)},\vu_3^{(t)})
=\gamma_{12}^{(t)}-\gamma_{12}^{(t+1)}+\alpha_{13}^{(t)}\;,\\
\alpha_{13}^{(t)}
&=\alpha(\vu_1^{(t)},\vu_3^{(t)})
\le\alpha(\vu_1^{(t)},\vu_1^{(t+1)})+\alpha(\vu_1^{(t+1)},\vu_3^{(t)})
=\gamma_{12}^{(t)}-\gamma_{12}^{(t+1)}+\alpha_{13}^{(t+1)}\;.
\end{align*}
Inequalities~\eqref{eq:09} and~\eqref{eq:10} follow by rearranging.
Finally, if $\alpha_{13}^{(t)}\le\pi/2$, then
$\gamma_{12}^{(t+1)}+\gamma_{13}^{(t+1)}\le
\gamma_{12}^{(t+1)}+\alpha_{13}^{(t+1)}\le
\gamma_{12}^{(t)}+\alpha_{13}^{(t)}=
\gamma_{12}^{(t)}+\gamma_{13}^{(t)}$.
On the other hand, if $\alpha_{13}^{(t)}>\pi/2$, then
$\gamma_{12}^{(t+1)}+\gamma_{13}^{(t+1)}\le\gamma_{12}^{(t+1)}+\pi-\alpha_{13}^{(t+1)}
\le\gamma_{12}^{(t)}+\pi-\alpha_{13}^{(t)}=\gamma_{12}^{(t)}+\gamma_{13}^{(t)} $.
\end{proof}

Since $P^{(t)}$ is bounded and nonincreasing with $t$, 
the limiting random variable $P^{\infty}$ is well defined.
One could think that we are almost done, but this is not quite the case.
The state space is uncountable and therefore
it is possible that the potential decreases
``too slowly'' to converge to zero.
In particular, \Cref{lem:potential-monotone}
applies to
a separable configuration
with $\vu_1\perp\{\vu_2,\vu_3\}$, but for that configuration
it will obtain $\lim_{t \to\infty} P^{(t)}=\pi$.
We also need to handle configurations that do polarize, 
but where every interaction makes only small progress. 
To achieve that, we will need an additional
step.

First, let us show that for large enough angles we do
make constant progress:

\begin{claim} \label{claim:52}
For every $\delta>0$, there exists $k$ such that the following holds.
If $1.2\pi\le P^{(t)}\le 3\pi/2-\delta$, then
there exists a sequence of $k$ interactions such that
$P^{(t+k)}\le P^{(t)}-0.05\pi$.
\end{claim} 
\begin{proof}
Let us rename the opinions if necessary so that
$\gamma_{12}^{(t)}\le\gamma_{13}^{(t)}\le\gamma_{23}^{(t)}$.
Since $1.2 \pi \leq P^{(t)} \leq 3\pi/2-\delta$, it follows $0.2\pi\le\gamma_{12}\le\pi/2-\delta/3$
and $\gamma_{23}\ge 0.4\pi$. In particular,
$\gamma_{12}$ is strictly bounded away from $\pi/2$ and zero. Take $c_0$ from \cref{claim:3} invoked for $\delta/3$. 
This means we can let agent 1 influence agent 2 a bounded number of steps $k_a=\log(1-0.15\pi/\gamma_{12})\big/\log(c_{0})$ until $\gamma_{12}$ decreases by $0.15\pi$. Since $\gamma_{23}$ is at least $0.4\pi$ it can only increase by $0.1\pi$ so that the potential decreases by at least $0.05\pi$.
\end{proof}
Second, we show that any other configuration
can become strictly convex after a constant number of interactions:
\begin{claim}
There exists $k$ such that, for every configuration that is not separable and with
$P^{(t)}\le 1.2\pi$, there is a sequence of $k$ interactions
that make $\mathcal{U}^{(t+k)}$ strictly convex.
\label{claim:toconv}
\end{claim}
\begin{proof}
In the following for clarity let us drop the superscript
for values at times $t$, so that $\gamma_{12}=\gamma_{12}^{(t)}$, 
etc. Without loss of generality, assume that 
$\gamma_{12}\le\gamma_{13}\le\gamma_{23}$.
Since $\mathcal{U}^{(t)}$ is not separable, it holds
$\gamma_{13}<\pi/2$, or in other words $|A_{13}|>0$,
since otherwise opinion 3 would be orthogonal
to both 1 and 2. Furthermore, observe that $P^{(t)}\le 1.2\pi$
implies $\gamma_{12}\le 0.4\pi$. By \Cref{claim:3},
influencing agents 1 and 2
a bounded number of times if necessary, let us also assume from now
on that 
$\gamma_{12}\le 0.01\pi$. 

We will now show that the configuration becomes
strictly convex if, $K$ times in a row, agent $1$ influences agent 
$2$ (for large enough $K$). To that end,
let $\tilde{\vec v}$ be the projection of $\vu_3$ onto the span
of $\vu_1$ and $\vu_2$ and let $\vec v=\tilde{\vec v}/\|\tilde{\vec v}\|$.
For any opinion $\vu$ in the span of $\vu_1$ and $\vu_2$, clearly
it holds $A(\vu,\vu_3)=A(\vu,\tilde{\vec v})$. In particular, we have
$\sign(A(\vu,\vu_3))=\sign(A(\vu,\vec v))$.

Let us consider an easier special case first.
Assume that $\gamma(\vu_1,\vec v)\le 0.05\pi$.
Let us show that in that case the configuration
is already strictly convex. 
Let
$b_2=\sign(A_{12})$ and $b_3=\sign(A_{13})$.
We need to check that the three angles
$\alpha(\vu_1,b_2\vu_2)$, $\alpha(\vu_1,b_3\vu_3)$ and
$\alpha(b_2\vu_2,b_3\vu_3)$ are strictly less than $\pi/2$.

First, $\alpha(\vu_1,b_2\vu_2)=\gamma_{12}\le 0.01\pi$. 
Furthermore, by what we said above
we have $A(\vu_1,\vec v)\ne 0$, which implies
$A_{13}\ne 0$ and $\alpha(\vu_1,b_3\vu_3)=\gamma_{13}<\pi/2$.
Finally, $\alpha(b_2\vu_2,b_3\vec v)\le
\alpha(b_2\vu_2,\vu_1)+\alpha(\vu_1,b_3\vec v)
=\gamma_{12}+\gamma(\vu_1,\vec v)\le 0.06\pi$.
Again, that implies $\alpha(b_2\vu_2,b_3\vu_3)<\pi/2$
and the configuration is strictly convex.

Accordingly, from now on assume $\gamma(\vu_1,\vec v)>0.05\pi$.
Since $\gamma_{13}\le\gamma_{23}$, and again
using $A_{i3}=A(\vu_i,\tilde{\vec v})$ for $i=1,2$, it holds
that $\gamma(\vu_1,\vec v)\le\gamma(\vu_2,\vec v)$. Let us
use the fact that the three vectors $\vu_1,\vu_2,\vec v$
lie on the same plane. Accordingly, there are two possibilities
for their ``orientations'' (as usual up to flipping signs).
\Cref{fig:toconv} can be consulted for
illustration.

 \begin{figure}
 \centering
\begin{tikzpicture}
    \coordinate (O) at (0,0);
    \fill[blue!10!white, draw opacity=0.1]   (0,0) -- (96:3) arc(96:90:3) -- cycle;
    \fill[orange!20!white, draw opacity=0.1]   (0,0) -- (90:3) arc(90:84:3) -- cycle;
    \draw[line width=0.2mm,cyan,dashed] (180:4) -- (0:4);
    \draw[line width=0.2mm,cyan,dashed] (90:4) -- (0,0);
    \draw[line width=0.4mm,red,->] (0,0) -- (0:4) node [black,right] {$v$};
    \draw[line width=0.4mm,red,->] (0,0) -- (96:4) node [black,above] {$u_2\;\;$};
    \draw[line width=0.4mm,red,->] (0,0) -- (84:4) node [black,above] {$\;\;u_1$};
    \node[orange!40!black] at (0.2,3.2) {\small $\epsilon$};
    \node[blue] at (-0.17,3.2) {\small $\epsilon$};
    \draw (-4,0) arc(180:0:4);
\end{tikzpicture}
\caption{Illustration of the situation considered in \Cref{claim:toconv}.}
\label{fig:toconv}
\end{figure}

The two possible cases are
$\gamma_{12}=\gamma(\vu_2,\vec v)-\gamma(\vu_1,\vec v)$
and $\gamma_{12}=\pi-\gamma(\vu_2,\vec v)-\gamma(\vu_1,\vec v)$.
In both of them it is easy to check the crucial
property
\begin{equation}\label{eq:08}
\gamma_{12}\le2\left(\frac{\pi}{2}-\gamma(\vu_1,\vec v)\right)\;.
\end{equation}

Applying \Cref{claim:3} to~\eqref{eq:08} a sufficient number of times
(recall that $\gamma(\vu_1,\vec v)\ge 0.05\pi)$,
there exists $K$ such that after agent 1 influences agent 2
for $K$ times, it holds
$\gamma\left(\vu_1,\vu_2^{(K)}\right)\le\big(\pi/2-\gamma(\vu_1,\vec v)\big)/10$.

Finally, we show that $\vu_1,\vu_2^{(K)},\vu_3$ form a strictly
convex configuration. For that, take $b_i=\sign(A_{1i})$
for $i=1,2,3$. 
Since $\gamma_{12},\gamma_{13}<\pi/2$,
and since by \Cref{cor:A_is_non_decreasing} 
it holds that
$\sign(A_{12}^{(K)})=\sign(A_{12})$, it follows
$\bck{\vu_1,b_2\vu_2^{(K)}},\bck{\vu_1,b_3\vu_3}>0$.
It remains to show $\bck{b_2\vu_2^{(K)},b_3\vu_3}>0$.
As discussed above, this is equivalent to
$\bck{b_2\vu_2^{(K)},b_3\vec v}>0$ or in other words
$\alpha\left(b_2\vu_2^{(K)},b_3\vec v\right)<\pi/2$.

Since $\bck{\vu_1,b_2\vu_2^{(K)}},\bck{\vu_1,b_3\vu_3}>0$,
it follows $\alpha(\vu_1,b_2\vu_2^{(K)})=\gamma(\vu_1,\vu_2^{(K)})$
and $\alpha(\vu_1,b_3\vu_3)=\gamma(\vu_1,\vu_3)$.
As before, the second of those equalities implies
$\alpha(\vu_1,b_3\vec v)=\gamma(\vu_1,\vec v)$. Finally,
we can conclude
\begin{align*}
    \alpha(b_2\vu_2^{(K)},b_3\vec v)
    &\le
    \alpha(b_2\vu_2^{(K)},\vu_1)+\alpha(\vu_1,b_3\vec v)
    =\gamma(\vu_1,\vu_2^{(K)})+\gamma(\vu_1,\vec v)\\
    &\le \frac{1}{10}\left(\frac{\pi}{2}-\gamma(\vu_1,\vec v)\right)
    +\gamma(\vu_1,\vec v)
    <\frac{\pi}{2}\;.\qedhere
\end{align*}
\end{proof}

Let us conclude the proof of \Cref{thm:three-ops}.
\begin{proof}[Proof of \Cref{thm:three-ops}]
Since the initial configuration 
$\mathcal{U}^{(0)}$ is not separable, it holds $P^{(0)}<3\pi/2$.
Applying \Cref{claim:52} for $\delta=3\pi/2-P^{(0)}$,
we get $k$ such that, as long as $P^{(t)}>1.2\pi$, then
there is a sequence of $k$ interactions such that
$P^{(t+k)}\le P^{(t)}-0.05\pi$. Iterating for at most
$6$ times, we get a sequence of $K$
interactions such that $P^{(t+K)}\le 1.2\pi$.
So, at every time $t$, independently of the past,
with probability at least $p_{\min}^K$ it will happen that
$P^{(t+K)}\le 1.2\pi$. That implies that
almost surely $P^{(t)}\le 1.2\pi$ at some time $t$.
Furthermore, by \Cref{lem:non-separable-preserved}, the configuration
remains not separable at all times.

Finally, once $P^{(t)}\le 1.2\pi$, by \Cref{claim:toconv}
at every time step the configuration can become strictly convex
in constant time and constant probability.
Again, this implies that $\mathcal{U}^{(t)}$
eventually becomes strictly convex. And
a strictly convex configuration polarizes
by \Cref{lem:strictly-convex-intro}.
\end{proof}

\subsection{Opinions cannot remain $\epsilon$-active forever}
\label{sec:more_than_three_opinions}
When dealing with more than three opinions, parts of our proof strategy remain the same. We will show that almost surely the configuration becomes strictly convex, and deduce from \Cref{lem:strictly-convex-intro} that the opinions polarize.
As a first step towards showing that strict convexity is inevitable, we prove that for every \(\epsilon\), the opinions become \(\epsilon\)-inactive almost surely, as implied by \Cref{lem:almost_sure_inactivity} below.
This section is dedicated to proving that lemma.
\begin{lemma}[Almost sure inactivity]
    \label{lem:almost_sure_inactivity}
    Let $f$ be stable and \(n, d \geq 2\).
    For every $\epsilon>0$, there exists $K_0$ such that for every
    configuration \(\mathcal{U}^{(0)}\),
    there exists a sequence of $K_0$
    interactions such that $\mathcal{U}^{(K_0)}$
    is $\eps$-inactive.
\end{lemma}

\begin{proof}
    We proceed by a natural
    greedy strategy where agents influence each other
    until all opinions are either very correlated or
    almost orthogonal. First, note the following claim.
    \begin{claim}
        \label{lem:close_in_constant_number_of_steps}
        For every \(\epsilon >0\), there exists \(T\) such that
        for every opinions \(\vu_i, \vu_j\),
        if \(\abs{A_{ij}^{(0)}} \ge\epsilon\) and if the first \(T\) interactions are between agents \(i\) and \(j\), then \(\abs{A_{ij}^{(T)}} \geq 1 - \epsilon\).
    \end{claim}
    \begin{proof}
    The proof follows by applying
    \Cref{claim:contraction_stable} a sufficient number of times.
    \end{proof}

    Let \(\epsilon>0\) and set \(\epsilon_0 := \min( 1/8, \epsilon / 64)^2\) so that \(\epsilon \ge 64 \sqrt{\epsilon_0}\).
    We want to find \(K\) and a sequence of \(K\) interactions after which 
    for every pair of opinions, it holds either
    $\abs{A_{ij}^{(K)}} \leq \epsilon$ or
    $\abs{A_{ij}^{(K)}} \geq 1 - \epsilon$.

    Choose (and rename) greedily a maximal set of
    opinions $\vu_1,\ldots,\vu_k$ such that
    $\abs{A_{ij}^{(0)}}\le \eps_0$ for every $1\le i<j\le k$.
    For every opinion $\vu_j$ for $j>k$, by construction
    there exists at least one $1\le i\le k$ such
    that $\abs{A_{ij}^{(0)}}>\eps_0$. For every $j>k$,
    choose one such $i$ arbitrarily and let agent $j$ be influenced at least $T$ times in a row by agent $1$,
    where $T$ is taken from \Cref{lem:close_in_constant_number_of_steps}
    applied for $\eps_0$.
    Perform this for every agent in sequence so that there
    are $K_0=nT$ interactions in total. Note that 
    $K_0$ indeed depends only on $f,n$, and $\epsilon$.

    We now show that the resulting configuration
    is $\eps$-inactive.
    Over the \(K\) interactions, the opinions \(\vu_1, \hdots, \vu_k\) are not moved, so \(\vu_i^{(0)} = \vu_i^{(K)}\) for \(1 \leq i \leq k\). Furthermore, for every \(j > k\), the opinion \(\vu_j\) is influenced by a unique \(\vu_i\) where \(1 \leq i \leq k\). Therefore, 
    from \Cref{lem:close_in_constant_number_of_steps}, for every \(j > k\) there exists \(1 \leq i \leq k\) such that \(\abs{A_{ij}^{(K)}} \geq 1 - \epsilon_0\). It can be observed that:
    \begin{itemize}
        \item By the choice of \(\vu_1, \hdots, \vu_k\), for every \( 1 \leq i, j \leq k\;,\; \abs{A_{ij}^{(K)}} \leq \epsilon_0 < \epsilon\).
        \item For every \( 1 \leq i \leq k\) and \(j > k\), there exists \(1 \leq l \leq k\) such that \(\abs{A_{jl}^{(K)}} \geq 1 - \epsilon_0\). So, either \(i = l\) and \(\abs{A_{ij}^{(K)}} \geq 1 - \epsilon_0 > 1 - \epsilon\); or \(i \neq l\) and with \(\abs{A_{il}^{(K)}} \leq \epsilon_0\le\sqrt{\eps_0}\) and \(\abs{A_{jl}^{(K)}} \geq 1 - \epsilon_0\) we apply \Cref{lem:transitivity_of_inactivity} and obtain 
        \(\abs{A_{ij}^{(K)}} \leq 8 \sqrt{\epsilon_0} < \epsilon\).
        \item For every \( i,j > k\), there exists \(1 \leq l,m \leq k\) such that \(\abs{A_{il}^{(K)}} \geq 1 - \epsilon_0\) and \(\abs{A_{jm}^{(K)}} \geq 1 - \epsilon_0\).
        If \(l = m\), then from \Cref{lem:transitivity_of_closeness}, \(\abs{A_{ij}^{(K)}} \geq 1 - 4\epsilon_0 > 1 - \epsilon\). If \(l \neq m\), then note that \(\abs{A_{lm}^{(K)}} \leq \epsilon_0\) and consider \Cref{cor:transitivity_with_four_opinions} below which is proved by applying \Cref{lem:transitivity_of_inactivity} twice.
        \begin{claim}
        \label{cor:transitivity_with_four_opinions}
            Let \(d \geq 2\), \(\epsilon\ge 0\), and consider four opinions \(\vu_i, \vu_j, \vu_k, \vu_l\).
            If \(\abs{A_{ij}} \leq \sqrt{\eps}\), \(\abs{A_{ik}} \geq 1 - \epsilon\), and \(\abs{A_{jl}} \geq 1 - \epsilon\), 
            then \(\abs{A_{kl}} \le 64 \sqrt{\epsilon}\).
        \end{claim}
        Applying \Cref{cor:transitivity_with_four_opinions} gives \(\abs{A_{ij}^{(K)}} \leq 64 \sqrt{\epsilon_0} \le \epsilon\).
    \end{itemize}
    
        We conclude that for every \(i,j \in [n]\), \(\abs{A_{ij}^{(K)}} \leq \epsilon\) or \(\abs{A_{ij}^{(K)}} \geq 1 - \epsilon\). Therefore the configuration \(\cU^{(K)}\) is \(\epsilon\)-inactive.
\end{proof}

\section{Polarization in two dimensions: Proof of \Cref{thm:2d-main}}
\label{sec:2d}

The objective of this section is to prove
\Cref{thm:2d-main}.
Let $d=2$, \(f\) be a stable function, \(\mathcal{D}\) a distribution with full support,
and \(\mathcal{U}^{(0)}\) a configuration that is not separable.
Let us first point out that
\Cref{thm:2d-main} follows
from \Cref{lem:opinions_in_a_quadrant}:

\begin{proof}[Proof of \Cref{thm:2d-main}]
From \Cref{lem:non-separable-preserved}, since $\cU^{(0)}$
is not separable, $\cU^{(t)}$
remains not separable for every $t\ge 0$.
By \Cref{lem:opinions_in_a_quadrant},
there exists fixed $K$ such that for every $t$, independently of the past,
$\mathcal{U}^{(t+K)}$ is strictly
convex with probability at least $p_{\min}^{K}>0$, where $p_{\min}$ is
the minimum probability in the interaction distribution $\mathcal{D}$.
It follows that almost surely
there exists $t$ such that $\mathcal{U}^{(t)}$
is strictly convex.
And, from \Cref{lem:strictly-convex-intro}, we conclude that almost surely the sequence \((\cU^{(t)})_t\) polarizes.
\end{proof}

It remains to prove~\Cref{lem:opinions_in_a_quadrant}.
Let $\eps_0=1/256$.
By \Cref{lem:almost_sure_inactivity}, for some  $K_0=K_0(n,d,f)$, 
there is sequence of $K_0$
interactions such that $\mathcal{U}^{(K_0)}$ is $\eps_0$-inactive. Therefore, it is sufficient
to show \Cref{lem:opinions_in_a_quadrant}
only for $\eps_0$-inactive, separable configurations.
(Since for a general configuration we can concatenate
$K_0$ interactions that make it $\eps_0$-inactive
and further $K$ interactions that make it strictly convex.)

For a high level idea of the proof in this case,
let $\eps>0$ and consider a configuration
$\vu_1=(\cos(\pi/2-\eps),\sin(\pi/2-\eps)),
\vu_2=(\cos(\pi/2+\eps),\sin(\pi/2+\eps)),
\vu_3=(1,0)$,
as illustrated in \Cref{fig:toconv} (where \(\vec v\) is \(\vu_3\) in our case).
That is, $\vu_1$ and $\vu_2$ are both almost orthogonal
to $\vu_3$ and on the opposite sides of the $y$-axis.
Such a configuration is not strictly convex.
As $\eps$ decreases, any time agent 1 influences agent 2,
opinion $\vu_2$ will move only by a small amount.
However, it requires only a constant number of interactions
to bring $\vu_2$ onto the other side of the $y$-axis,
regardless of the value of $\eps$. Once this is achieved,
the configuration is strictly convex.
While care is needed to get the details right
in every case, at heart the proof of \Cref{lem:opinions_in_a_quadrant} implements this idea.

\begin{proof}[Proof of \Cref{lem:opinions_in_a_quadrant}]
    For readability, let us drop superscripts at time 0.
    That is, we will write \(\mathcal{U},\vu_i, A_{ij}, \gamma_{ij}\), etc.
    instead of \(\mathcal{U}^{(0)},\vu_i^{(0)}, A_{ij}^{(0)}, \gamma_{ij}^{(0)}\). 
    We prove this lemma by cases depending on how the initial configuration $\mathcal{U}$ looks like.
    As discussed above, we can assume w.l.o.g.~that
    $\cU$ is $\eps_0$-inactive for $\eps_0=\frac{1}{256}$.
    Accordingly, 
    for every \(i, j\) either \(\abs{A_{ij}} \leq \epsilon_0\) or \(\abs{A_{ij}} \geq 1 - \epsilon_0\).
    If \(\abs{A_{ij}} \geq 1 - \epsilon_0\) for every $i,j$,
    configuration \(\mathcal{U}\) is already strictly convex by \Cref{claim:convexity-of-one-cluster}.
    
    \begin{figure}
        \caption{The cases from the proof of \cref{lem:opinions_in_a_quadrant}. For Case \(2\), applying  $\vu_1$ and $\vu_2$ to all other opinions in their respective clusters will make the configuration strictly convex.}
        \centering
        \begin{subfigure}[c]{0.4\textwidth}
            \centering
            \caption{\footnotesize Case 1: \(\forall i,j, \, \abs{A_{ij}} \geq 1 - \epsilon_0\). Up to negation all opinions are close to each other.}
            \begin{tikzpicture}
                \coordinate (O) at (0,0);
                \draw (O) node[circle,inner sep=1pt,fill] {} circle [radius=\myrad];
                
                \fill[cyan!10!white, draw opacity=0.1]   (0,0) -- (20:\myrad) arc(20:0:\myrad) -- cycle;
                \fill[cyan!10!white, draw opacity=0.1]   (0,0) -- (180:\myrad) arc(180:200:\myrad) -- cycle;
                \draw[line width=0.2mm,blue,-] (20:\myrad) -- (200:\myrad);
                \draw[line width=0.2mm,blue,-] (180:\myrad) -- (0:\myrad);

                \draw[line width=0.4mm,red,->] (0,0) -- (190:\myrad) node [black,left] {$u_3$};
                \draw[line width=0.4mm,red,->] (0,0) -- (15:\myrad) node [black,right] {$u_1$};
                \draw[line width=0.4mm,red,->] (0,0) -- (5:\myrad) node [black,right] {$u_2$};

                \draw[line width=0.4mm,white,->] (0,0) -- (250:\myrad) node [white,below] {$u_2$};
                \draw[line width=0.4mm,white,->] (0,0) -- (70:\myrad) node [white,above] {$b_2u_2$};
                \draw[line width=0.4mm,white,->] (0,0) -- (105:\myrad) node [white,above] {$b_4u_4$};

                \draw (O) node[circle,inner sep=1pt,fill] {} circle [radius=\myrad];
            \end{tikzpicture}
            \label{fig:case1}
        \end{subfigure}
        \hspace{2cm}
        \begin{subfigure}[c]{0.3\textwidth}
            \centering
            \caption{\footnotesize Case 2: \(\exists i,j : \abs{A_{ij}} \leq \epsilon_0\). So, every $b_ku_k$ is within the two blue areas.}
            \begin{tikzpicture}
                \coordinate (O) at (0,0);
                \draw (O) node[circle,inner sep=1pt,fill] {} circle [radius=\myrad];
                
                \fill[cyan!10!white, draw opacity=0.1]   (0,0) -- (70:\myrad) arc(70:110:\myrad) -- cycle;
                \draw[line width=0.2mm,blue,-] (0,0) -- (70:\myrad);
                \draw[line width=0.2mm,blue,-] (0,0) -- (110:\myrad);

                \fill[cyan!10!white, draw opacity=0.1]   (0,0) -- (10:\myrad) arc(10:-10:\myrad) -- cycle ;
                
                \draw[line width=0.2mm,blue,-] (0,0) -- (10:\myrad);
                \draw[line width=0.2mm,blue,-] (0,0) -- (350:\myrad);

                \draw[line width=0.4mm,red,->] (0,0) -- (70:\myrad) node [black,above] {$\qquad b_1\vu_1=\vec v_1$};
                \draw[line width=0.4mm,red,->] (0,0) -- (10:\myrad) node [black,right] {$\vec v_2$};
                \draw[line width=0.4mm,red,->] (0,0) -- (-5:\myrad) node [black,right] {$\vec v_j$};
                \draw[line width=0.4mm,red,->] (0,0) -- (105:\myrad) node [black,above] {$\vec v_i$};

                \draw[line width=0.2mm,cyan,dashed] (180:\myrad) -- (0:\myrad);
                \draw[line width=0.2mm,cyan,dashed] (90:\myrad) -- (270:\myrad);

                \draw[line width=0.5mm,white,->] (0,0) -- (250:\myrad) node [white,below] {$u_j$};

                \draw (O) node[circle,inner sep=1pt,fill] {} circle [radius=\myrad];
            \end{tikzpicture}
            \label{fig:case2}
        \end{subfigure}
        \label{fig:allcases}
    \end{figure}

        Therefore, let us turn to the more challenging case
        where there exist $i,j$ such that \(\abs{A_{ij}} \leq \epsilon_0\).
        Recall the notation $\alpha_{ij}=\arccos(\langle \vu_i,\vu_j\rangle)$
        and the
        effective angle 
        $\gamma_{ij}=\min(\alpha_{ij},\pi-\alpha_{ij})$.
        Without loss of generality, assume that \(\vu_1\) and \(\vu_2\) satisfy
        \begin{equation*}
            \abs{A_{12}} = \max_{i,j\,:\,\abs{A_{ij}} \leq \epsilon_0} \abs{A_{ij}} \; .
        \end{equation*}
        Note that this is equivalent to
        \begin{equation}
            \label{eq:min_gamma_12}
            \gamma_{12} = \min_{i,j\,:\,\abs{A_{ij}} \leq \epsilon_0} \gamma_{ij} \; .
        \end{equation}         
        By \Cref{lem:cluster}, the opinions are partitioned into
        two clusters \(\mathcal{C}_1\) and \(\mathcal{C}_2\) with
        \begin{equation*}
            \mathcal{C}_1 = \{i \in [n] : \abs{A_{1i}} \geq 1 - \epsilon_0 \}
            \quad \text{ and } \quad
            \mathcal{C}_2 = \{i \in [n] : \abs{A_{2i}} \geq 1 - \epsilon_0 \} \; ,
        \end{equation*}
        and \(\vu_1\) and \(\vu_2\) are (up to negation) 
        a closest pair of opinions which are not in the same cluster.

        Let us specify a sequence of interactions that will make
        the configuration strictly convex.
        By \Cref{claim:3}, there exists some fixed $T$ such that
        if $\gamma_{ij}\le\pi/4$ and $i$ influences $j$ for $T$ times,
        then $\gamma_{ij}^{(T)}\le \gamma_{ij}/8$.
        Take a sequence of \(K = (n-2) T\) interactions where
        for every $i>2$: if $i\in\mathcal{C}_w$, then the sequence contains $T$ interactions
        where $w$ influences $i$.
        It remains to
        show that after this sequence 
        the configuration \(\mathcal{U}^{(K)}\) is strictly convex.

        For every \(i \in [n]\), define
        \begin{equation*}
            b_i := 
            \begin{cases*}
                \sign{A_{1i}} \; , \text{ if } i \in \mathcal{C}_1  \\
                \sign{(A_{12})}\sign{(A_{2i})} \; , \text{ if } i \in \mathcal{C}_2
            \end{cases*}
            \quad \text{ and } \quad
            \vec v_i^{(t)} := b_i \vu_i^{(t)} \; .
        \end{equation*}
        To show strict convexity at time $K$,
        it is sufficient that for every \(i,j\)
        it holds \(\, \bck{\vec v_i^{(K)}, \vec v_j^{(K)}} > 0\). For that, we consider the angles \[\beta_{ij}^{(t)} = \arccos{\left(\bck{\vec v_i^{(t)}, \vec v_j^{(t)}}\right)}\]
        for $0\le t\le K$. Note that either \(\beta_{ij}^{(t)} = \gamma_{ij}^{(t)}\) or \(\beta_{ij}^{(t)} = \pi - \gamma_{ij}^{(t)}\). The result will follow from the sequence of claims below. 
        
        \begin{claim}\label{cl:6}
            For every \(0 \leq t \leq K\),
            \(\beta_{12}^{(t)} = \beta_{12} = \gamma_{12}\).
        \end{claim}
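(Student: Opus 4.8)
The plan is to observe that the interaction sequence $S$ never updates the opinions of agents $1$ and $2$, so $\vv_1$ and $\vv_2$ stay frozen throughout; the claim then reduces to a one‑line sign computation.

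First I would note that, by the construction of $S$, every interaction in it is of the form ``$b$ is applied to $i$'' with $i > 2$ and $b \in \{1,2\}$; in particular, the agent being influenced always has index strictly larger than $2$. By \Cref{def:model}, such an interaction $(b,i)$ leaves $\vu_k$ unchanged for all $k \neq i$, so $\vu_1^{(t)} = \vu_1^{(0)} = \vu_1$ and $\vu_2^{(t)} = \vu_2^{(0)} = \vu_2$ for every $0 \le t \le K$. Hence $\vv_1^{(t)} = b_1 \vu_1$ and $\vv_2^{(t)} = b_2 \vu_2$ are independent of $t$, and therefore $\beta_{12}^{(t)} = \arccos\left(\bck{\vv_1^{(t)}, \vv_2^{(t)}}\right) = \arccos(b_1 b_2 A_{12}) = \beta_{12}$ for every $t$ in the relevant range.

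Next I would identify $\beta_{12}$ with $\gamma_{12}$. Since $A_{11} = 1 \ge 1 - \eps$, we have $1 \in \mathcal{C}_1$, so $b_1 = \sign(A_{11}) = 1$; and since $A_{22} = 1$ whereas $\abs{A_{12}} \le \eps < 1-\eps$ in Case~3, we have $2 \in \mathcal{C}_2$ (and $2 \notin \mathcal{C}_1$), so $b_2 = \sign(A_{12})\sign(A_{22}) = \sign(A_{12})$. Thus $\bck{\vv_1, \vv_2} = b_1 b_2 A_{12} = \abs{A_{12}}$, giving $\beta_{12} = \arccos(\abs{A_{12}})$, which by \Cref{def:effective-angle} and the monotonicity of $\arccos$ on $[-1,1]$ equals $\gamma_{12} = \min(\alpha_{12}, \pi - \alpha_{12})$.

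The only place requiring any care is the sign bookkeeping --- namely verifying $b_1 = 1$ and $b_2 = \sign(A_{12})$ from the cluster memberships of $1$ and $2$, and checking that these memberships are forced in Case~3 --- so I do not anticipate a genuine obstacle here; everything else follows immediately from the fact that $S$ never touches agents $1$ and $2$.
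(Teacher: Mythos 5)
Your proof is correct and follows essentially the same route as the paper, whose own justification is the one-line observation that agents $1$ and $2$ are never influenced during the sequence $S$, so $\vu_1^{(t)}$ and $\vu_2^{(t)}$ stay fixed. Your additional sign bookkeeping showing $b_1=1$, $b_2=\sign(A_{12})$ and hence $\beta_{12}=\arccos\abs{A_{12}}=\gamma_{12}$ is a correct verification of the part the paper leaves implicit.
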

        This is clear, because no agent influences \(1\) or \(2\) among the \(K\) interactions, so for every \(0 \leq t \leq K\), \(\vu_1^{(t)} = \vu_1\) and \(\vu_2^{(t)} = \vu_2\). Furthermore, by construction
        we have:
        \begin{claim}\label{cl:7}
            Let $w\in\{1,2\}$.
            For every \(i \in \mathcal{C}_w\), \(\gamma_{wi}^{(K)} \le \gamma_{wi}/8\).
        \end{claim}
        \begin{claim}\label{cl:8}
            For every \(t\),
            if \(i \in \mathcal{C}_1\), then \(b_i = \sign{(A_{1i}^{(t)})}\);
            and if \(i \in \mathcal{C}_2\), then \(b_i = \sign{(A_{12}^{(t)})} \sign{(A_{2i}^{(t)})}\).
        \end{claim}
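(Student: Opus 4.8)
The plan is to verify that none of the three sign quantities appearing in the statement ever changes as we run through the sequence $S$ of interactions. The structural fact that makes this work is that in $S$ neither agent $1$ nor agent $2$ is ever the \emph{influenced} agent: every interaction of $S$ has the form $(i,1)$ with $i\in\mathcal{C}_1$, $i>2$, or $(i,2)$ with $i\in\mathcal{C}_2$, $i>2$. Hence $\vu_1^{(t)}=\vu_1$ and $\vu_2^{(t)}=\vu_2$ for all $0\le t\le K$, and in particular $A_{12}^{(t)}=A_{12}$ for all such $t$ (one also checks $A_{12}\neq 0$ in Case 3, so that each $b_i$ is a genuine element of $\{-1,+1\}$).

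First I would dispose of the trivial indices $i\in\{1,2\}$: if $i=1$ then $A_{1i}^{(t)}=1$ for all $t$ and $b_1=1$, while if $i=2$ then $A_{2i}^{(t)}=1$ for all $t$ and $\sign(A_{12}^{(t)})\sign(A_{2i}^{(t)})=\sign(A_{12})=b_2$. Next, fix $i\in\mathcal{C}_1$ with $i>2$. By the Remark following \Cref{def:model}, the correlation $A_{1i}$ is unchanged by any interaction whose influenced agent is neither $1$ nor $i$; hence $A_{1i}$ can only move at a step of $S$ whose influenced agent is $i$, and such a step is necessarily the interaction $(i,1)$. By \Cref{cor:A_is_non_decreasing} (which applies since $f$ is stable), a single such interaction preserves $\sign(A_{1i})$, and since $|A_{1i}^{(0)}|\ge 1-\epsilon>0$ the sign remains well defined and nonzero throughout. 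Therefore $\sign(A_{1i}^{(t)})=\sign(A_{1i}^{(0)})=b_i$ for every $0\le t\le K$. The case $i\in\mathcal{C}_2$, $i>2$ is entirely symmetric: $A_{2i}$ can change only at steps $(i,2)$, each preserving $\sign(A_{2i})$, and $A_{12}$ is constant, so $\sign(A_{12}^{(t)})\sign(A_{2i}^{(t)})=\sign(A_{12})\sign(A_{2i}^{(0)})=b_i$.

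There is no genuine obstacle here; the proof is just careful bookkeeping of which interactions of $S$ can move which pairwise correlation, combined with the sign-preservation of a single stable update from \Cref{cor:A_is_non_decreasing}. The only mild point to keep straight is that $i\in\mathcal{C}_1$ and $i\in\mathcal{C}_2$ are mutually exclusive for $i>2$, so that the definition of $b_i$ is unambiguous; this is precisely the uniqueness of the cluster partition from \Cref{lem:cluster}.
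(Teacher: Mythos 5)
Your proof is correct and follows essentially the same route as the paper's: both arguments observe that agents $1$ and $2$ are never influenced in $S$ (so $A_{12}^{(t)}$ is constant), that each $i\in\mathcal{C}_w$ is only ever influenced by $w$ (so $A_{wi}$ is the only relevant correlation that moves), and that a single stable update preserves $\sign(A_{wi})$ by \Cref{cor:A_is_non_decreasing}. Your version merely spells out the bookkeeping (the trivial indices $i\in\{1,2\}$, nondegeneracy of the signs) that the paper leaves implicit.
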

        The statement of \Cref{cl:8} is true at $t=0$
        by definition. It remains true at subsequent times,
        because $A_{12}^{(t)}$ remains the same, and,
        for every \(i \in \mathcal{C}_w\), \(i\) is influenced only by \(w\). Therefore, for every \(0 \leq t < K\), \(\sign{(A_{wi}^{(t+1)})} = \sign{(A_{wi}^{(t)})}\). 
        \begin{claim}\label{cl:9}
            For every \(0 \leq t \leq K\) and \(i \in \mathcal{C}_w\), \(\beta_{wi}^{(t)} = \gamma_{wi}^{(t)}\).
        \end{claim}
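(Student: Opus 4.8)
The plan is to rewrite both $\gamma_{wi}^{(t)}$ and $\beta_{wi}^{(t)}$ as $\arccos$ of explicit scalar quantities and then to check that these scalars agree, using the sign identification already established in \Cref{cl:8}.

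First I would record two elementary identities. Since $\gamma_{wi}^{(t)} = \min(\alpha_{wi}^{(t)}, \pi - \alpha_{wi}^{(t)})$ with $\alpha_{wi}^{(t)} = \arccos(A_{wi}^{(t)})$, one has $\gamma_{wi}^{(t)} = \arccos(\lvert A_{wi}^{(t)}\rvert)$. On the other hand, by definition and using $\vv_j^{(t)} = b_j \vu_j^{(t)}$, we get $\beta_{wi}^{(t)} = \arccos(\langle \vv_w^{(t)}, \vv_i^{(t)}\rangle) = \arccos(b_w b_i A_{wi}^{(t)})$. Hence it suffices to show $b_w b_i A_{wi}^{(t)} = \lvert A_{wi}^{(t)}\rvert$, equivalently $b_w b_i = \sign(A_{wi}^{(t)})$; here $A_{wi}^{(t)} \neq 0$ because $i, w \in \mathcal{C}_w$ forces $\lvert A_{wi}^{(t)}\rvert \geq 1 - \epsilon$ by \Cref{cor:A_is_non_decreasing}, and the case $i = w$ is immediate.

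Then I would split on $w \in \{1, 2\}$. For $w = 1$: $b_1 = \sign(A_{11}^{(0)}) = 1$, while \Cref{cl:8} gives $b_i = \sign(A_{1i}^{(t)})$, so $b_1 b_i = \sign(A_{1i}^{(t)})$. For $w = 2$: since $2 \in \mathcal{C}_2$ we have $b_2 = \sign(A_{12}^{(0)}) \sign(A_{22}^{(0)}) = \sign(A_{12}^{(0)})$; as neither agent $1$ nor agent $2$ is ever influenced along the sequence $S$, we have $A_{12}^{(t)} = A_{12}^{(0)}$ and hence $b_2 = \sign(A_{12}^{(t)})$; and \Cref{cl:8} gives $b_i = \sign(A_{12}^{(t)}) \sign(A_{2i}^{(t)})$, so $b_2 b_i = \sign(A_{12}^{(t)})^2 \sign(A_{2i}^{(t)}) = \sign(A_{2i}^{(t)})$. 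In both cases $b_w b_i = \sign(A_{wi}^{(t)})$, which is exactly what was needed, and the two displayed identities give $\beta_{wi}^{(t)} = \gamma_{wi}^{(t)}$.

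The only point requiring any care is the sign bookkeeping in the case $w = 2$: one must keep track that the common factor $\sign(A_{12}^{(0)})$ occurs both in $b_2$ and in every $b_i$ with $i \in \mathcal{C}_2$, so that it cancels, and that $A_{12}^{(t)}$ is genuinely constant in $t$. There is no substantive obstacle here: \Cref{cl:9} is essentially a bookkeeping consequence of \Cref{cl:8} together with the fact that correlations within a cluster remain bounded away from $0$ throughout the process.
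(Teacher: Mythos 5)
Your proof is correct and follows essentially the same route as the paper: both use \Cref{cl:8} to identify the signs, observe that $\langle \vv_w^{(t)},\vv_i^{(t)}\rangle = b_w b_i A_{wi}^{(t)} = \lvert A_{wi}^{(t)}\rvert > 0$ (with the $\sign(A_{12})$ factors cancelling in the $w=2$ case), and conclude $\beta_{wi}^{(t)}=\gamma_{wi}^{(t)}$. Writing both angles explicitly as $\arccos$ of equal scalars, rather than arguing $\beta_{wi}^{(t)}\in[0,\pi/2]$, is only a cosmetic difference, and your remark that $A_{wi}^{(t)}\neq 0$ throughout is a worthwhile point the paper leaves implicit.
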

        \begin{proof}
        In short, this follows from the previous claim.
        If \(i \in \mathcal{C}_1\), then
        \begin{equation*}
            \bck{\vec v_1^{(t)}, \vec v_i^{(t)}} = b_1 b_i A_{1i}^{(t)} = \sign{(A_{1i}^{(t)})} \cdot A_{1i}^{(t)} \ge 0 \; .
        \end{equation*}
        Thus, \(\beta_{1i}^{(t)} \le \pi/2\) and \(\beta_{1i}^{(t)} = \gamma_{1i}^{(t)}\).
        Similarly, if \(j \in \mathcal{C}_2\), then 
        \begin{equation*}
            \bck{\vec v_2^{(t)}, \vec v_j^{(t)}} = b_2 b_j A_{2j}^{(t)} = 
            b_2 \sign(A_{12}) \sign(A_{2j}^{(t)}) \cdot A_{2j}^{(t)} \ge 0 \; .
        \end{equation*}
        Thus, \(\beta_{2j}^{(t)} \le\frac{\pi}{2}\) and \(\beta_{2j}^{(t)} = \gamma_{2j}^{(t)}\).
        \end{proof}
        \begin{claim}
            For every \(i,j \in [n]\), \(\beta_{ij}^{(K)} < \pi/2\).
        \end{claim}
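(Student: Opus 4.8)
The plan is to pass to explicit angular coordinates on the circle and to exploit the defining property \eqref{eq:min_gamma_12} of the pair $(1,2)$ --- that $\gamma_{12}$ is the \emph{smallest} effective angle among cross-cluster pairs --- which is precisely what makes a bounded-length sequence $S$ sufficient. Using \Cref{cl:6} I would work on $\mathbb{S}^1=\mathbb{R}/2\pi\mathbb{Z}$, placing $\vv_1^{(t)}$ at angle $0$ and $\vv_2^{(t)}$ at angle $\gamma_{12}$ for all $t\le K$, and write $\theta_k^{(t)}$ for the angular coordinate of $\vv_k^{(t)}$. First I would record that $0<\gamma_{12}<\pi/2$: positivity because $(1,2)$ is cross-cluster, so $|A_{12}^{(0)}|<1$; and $\gamma_{12}<\pi/2$ because $\gamma_{12}=\pi/2$ would by \eqref{eq:min_gamma_12} force $A_{ij}^{(0)}=0$ for every cross-cluster pair, making $\mathcal{U}^{(0)}$ separable along $\mathcal{C}_1,\mathcal{C}_2$, contrary to assumption.

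The heart of the argument is to locate the two clusters at time $0$. By \Cref{cl:9} each $\vv_i^{(0)}$ with $i\in\mathcal{C}_1$ lies at angular distance $\gamma_{1i}^{(0)}\le\arccos(1-\epsilon)<\gamma_{12}$ from $\vv_1^{(0)}$; and if such a point were on the $\vv_2$-side, i.e.\ $\theta_i^{(0)}\in(0,\gamma_{12})$, then $\gamma_{i2}^{(0)}=\gamma_{12}-\theta_i^{(0)}<\gamma_{12}$, contradicting \eqref{eq:min_gamma_12} (here $(i,2)$ is cross-cluster). So all of $\mathcal{C}_1$ sits in $[-A,0]$ with $A:=\max_{i\in\mathcal{C}_1}\gamma_{1i}^{(0)}$, and symmetrically, using \eqref{eq:min_gamma_12} on the pairs $(1,j)$, all of $\mathcal{C}_2$ sits in $[\gamma_{12},\gamma_{12}+B]$ with $B:=\max_{j\in\mathcal{C}_2}\gamma_{2j}^{(0)}$. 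Applying \eqref{eq:min_gamma_12} to the outermost cross-cluster pair, whose angular separation is $\sigma:=\gamma_{12}+A+B$ (and $\sigma<\pi$ since $A,B\le\arccos(1-\epsilon)$ and $\gamma_{12}<\pi/2$), its effective angle $\min(\sigma,\pi-\sigma)$ is $\ge\gamma_{12}$, hence $\pi-\sigma\ge\gamma_{12}$, i.e.
\[
  A+B\;\le\;\pi-2\gamma_{12}\;=\;2\Bigl(\tfrac{\pi}{2}-\gamma_{12}\Bigr).
\]
This spread bound is the crucial point; everything else is bookkeeping.

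Finally I would track the effect of $S$. For $i\in\mathcal{C}_1$ only agent $1$ acts on $i$, and unwinding the update with $\sign(f(A))=\sign(A)$ and \Cref{cl:8} shows $\vv_i^{(t+1)}$ is a \emph{positive} combination of $\vv_i^{(t)}$ and $\vv_1$, so $\theta_i^{(t)}$ moves monotonically toward $0$ without changing sign; combined with \Cref{cl:9,cl:7} this gives $\theta_i^{(K)}\in(-A/8,0]$, and symmetrically $\theta_j^{(K)}\in[\gamma_{12},\gamma_{12}+B/8)$ for $j\in\mathcal{C}_2$, while $\theta_1^{(K)}=0$ and $\theta_2^{(K)}=\gamma_{12}$. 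Hence all $n$ vectors $\vv_k^{(K)}$ lie in a common arc of length
\[
  \gamma_{12}+\frac{A+B}{8}\;\le\;\gamma_{12}+\frac{\pi-2\gamma_{12}}{8}\;=\;\tfrac{3}{4}\gamma_{12}+\tfrac{\pi}{8}\;<\;\tfrac{\pi}{2},
\]
using $\gamma_{12}<\pi/2$; and since any two unit vectors whose angular coordinates lie in a common arc of length $\ell<\pi$ subtend an angle $\le\ell$, we obtain $\beta_{ij}^{(K)}\le\gamma_{12}+(A+B)/8<\pi/2$ for all $i,j$, which is the claim.

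The step I expect to be the real obstacle is the spread bound $A+B\le 2(\pi/2-\gamma_{12})$. A naive triangle inequality $\beta_{ij}^{(K)}\le\gamma_{1i}^{(K)}+\gamma_{12}+\gamma_{2j}^{(K)}$ is worthless, since $\gamma_{12}$ can be arbitrarily close to $\pi/2$ while $\gamma_{1i}^{(0)},\gamma_{2j}^{(0)}$ are only known to be $\le\arccos(1-\epsilon)$, a fixed positive constant, so no fixed contraction factor would suffice. What rescues the argument is that taking $(1,2)$ to be the \emph{closest} cross-cluster pair forces both clusters to lie on the far sides of the arc $\vv_1\vv_2$ and to have total angular width at most twice the gap $\pi/2-\gamma_{12}$, after which contracting each cluster by the fixed factor $8>2$ via \Cref{lem:halving_of_effective_angle} does close the gap. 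Getting the circle geometry exactly right --- in particular verifying that opinions never cross to the wrong side of $\vv_1$ or $\vv_2$ during $S$ --- is where the care is needed.
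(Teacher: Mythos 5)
Your proof is correct and follows essentially the same route as the paper's: both arguments rest on the minimality property \eqref{eq:min_gamma_12} of the pair $(1,2)$ to force each cluster onto the far side of the arc between $\vv_1$ and $\vv_2$ and to bound its angular spread by a multiple of the gap $\tfrac{\pi}{2}-\gamma_{12}$, after which the factor-$8$ contraction from \Cref{cl:7} closes that gap. The only substantive difference is in execution: your explicit coordinatization and single ``outermost pair'' application of \eqref{eq:min_gamma_12} give the sharper spread bound $A+B\le\pi-2\gamma_{12}$, whereas the paper derives $\beta_{1i},\beta_{2j}<4\bigl(\tfrac{\pi}{2}-\beta_{12}\bigr)$ via separate per-agent contradiction arguments --- both versions suffice for the final inequality.
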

        \begin{proof}
        If \(i,j \in \mathcal{C}_w\) for $w=1,2$, then,
        by \Cref{cl:9} and \Cref{cl:7},
        \begin{equation*}
            \beta_{ij}^{(K)}
            \leq \beta_{wi}^{(K)} + \beta_{wj}^{(K)}
            = \gamma_{wi}^{(K)} + \gamma_{wj}^{(K)}
            \leq \frac{\gamma_{wi}}{8} + \frac{\gamma_{wj}}{8}
            \leq \frac{\pi}{8} < \frac{\pi}{2} \; .
        \end{equation*}
        Now suppose \(i \in \mathcal{C}_1\) and \(j \in \mathcal{C}_2\).
        Similarly, and also using \Cref{cl:6}, we have
        \begin{equation*}
            \beta_{ij}^{(K)}
            \leq \beta_{1i}^{(K)} + \beta_{12}^{(K)} + \beta_{2j}^{(K)}
            = \gamma_{1i}^{(K)} + \gamma_{12}^{(K)} + \gamma_{2j}^{(K)}
            \leq  \beta_{12} + \frac{\beta_{1i} + \beta_{2j}}{8} \; .
        \end{equation*}
        Let's show that \(\beta_{2j} / 4 < \pi/2 - \beta_{12}\).
        Suppose for the sake of contradiction that \(\beta_{2j} / 4 \geq \pi/2 - \beta_{12}\).
        See \Cref{fig:case2} for an illustration of the situation.
        
        First, note that we have 
        $\gamma_{2j}=\beta_{2j}\le\arccos(1-\epsilon_0)
        <\arccos(\epsilon_0)\le\beta_{12}=\gamma_{12}\le\pi/2$. In particular,
        given that $d=2$, it holds that 
        \begin{equation*}
            \beta_{1j} = (\beta_{12} + \beta_{2j})
            \quad \text{ or } \quad
            \beta_{1j} = (\beta_{12} - \beta_{2j})
        \end{equation*}
        (since $\beta_{12}+\beta_{2j}<\pi$ and $\beta_{12}-\beta_{2j}>0$ 
        there is no need for taking modulo).
        \begin{itemize}
            \item \(\beta_{1j} = \beta_{12} - \beta_{2j}\) for
            $\beta_{2j}>0$ is impossible, since then
            \(\gamma_{1j} \le \beta_{1j} < \beta_{12} = \gamma_{12}\),
            which contradicts \eqref{eq:min_gamma_12}.
            \item If \(\beta_{1j} = \beta_{12} + \beta_{2j}\), then
            it follows by our assumption by contradiction that
            \(\beta_{1j} \geq 2 \pi - 3 \beta_{12}\) and consequently
            \(\gamma_{1j} \leq \pi - \beta_{1j} \leq 3 \beta_{12} - \pi = 3 \gamma_{12} - \pi < \gamma_{12}\), again contradicting
            \eqref{eq:min_gamma_12}.
            The last inequality
            is strict as $\gamma_{12}=\pi/2$ implies that clusters $\mathcal{C}_1$ and $\mathcal{C}_2$ are orthogonal and
            therefore $\mathcal{U}$ is separable.            
        \end{itemize}
        We conclude that \(\beta_{2j}/4 < \pi/2 - \beta_{12}\), and with a similar argument, \(\beta_{1i}/4 < \pi/2 - \beta_{12}\). These imply that
        $\beta_{ij}^{(K)}\le\beta_{12}+(\beta_{1i}+\beta_{2j})/8
        <\pi/2$.
        This ends the proof of the claim.
        \end{proof}

        We conclude that for every \(i\) and \(j\), \(\beta_{ij}^{(K)} < \frac{\pi}{2}\) which means \(\bck{\vec v_i^{(K)}, \vec v_j^{(K)}} > 0\). Hence, the configuration \(\mathcal{U}^{(K)}\) is strictly convex.
    \end{proof}

\section{Counterexample in higher dimensions:\\Proof of \Cref{thm:counterexample}}
\label{sec:counterexample}

In this section, we prove that the statement of Lemma~\ref{lem:opinions_in_a_quadrant} is false for $d\geq 3$, indicating that a generalized proof of polarization requires further ideas.

\begin{proof}[Proof of \cref{thm:counterexample}]
As before, recall that $A_{ij}^{(t)}:=\langle \vu_i^{(t)}, \vu_j^{(t)} \rangle$.
First, observe that, if three opinions given by $\vu_1^{(t)}$, $\vu_2^{(t)}$, $\vu_3^{(t)}$ are in a strictly convex configuration, then the following holds: If $A_{12}^{(t)}>0$, then we have~sign($A_{13}^{(t)}$)=sign($A_{23}^{(t)}$) 
(because $A_{12}>0$ implies $b_1=b_2$ in the strictly convex definition).

Let us define the configuration satisfying the required conditions.
Pick $\epsilon>0$ small enough such that it satisfies the condition
$\epsilon/2\geq \epsilon^2 \cdot \left(\eta + (2\eta + \eta^2)/2 \right)T (1+\eta)^{2T} $.
Pick three vectors $\vu_1^{(0)}, \allowbreak\vu_2^{(0)}, \vu_3^{(0)}$ in $\mathbb{S}^{d-1}$ such that $A_{12}^{(0)}=A_{13}^{(0)}=\epsilon$, $A_{23}^{(0)}=-\epsilon$ (note that this is possible for small $\eps$ whenever $d \geq 3$).

Since $A_{12}^{(0)}>0$ but $A_{13}^{(0)}, A_{23}^{(0)}$ do not have the same sign, this configuration is not strictly convex (and clearly it is not separable). Furthermore, it will not be strictly convex until one of the signs of the $A_{ij}$ flips. Therefore, it is sufficient to show that none of the signs of the $A_{ij}$ flips in the first $T$ steps. It turns out that
any possible interaction can change each correlation
by at most $O(\eps^2)$. Hence, as $\eps\to 0$, more and
more interactions are required to flip one of the signs.
In the rest of the proof, we develop this argument rigorously.

\medskip

To that end, we investigate the changes in correlations for every possible interaction. For this, let $(i,j,k)$ be some permutation of $(1,2,3)$ and assume that, at time $t$, agent~$i$ is influenced by agent~$j$. We have the following upper bounds on the changes in correlations:
\begin{equation*}
\left|A_{ij}^{(t+1)}\right| 
\overset{\eqref{eq:new-correlation}}{=}
\frac{(1+\eta)\cdot \left|A_{ij}^{(t)}\right|}{\sqrt{1+(2 \eta + \eta^2)\left(A_{ij}^{(t)}\right)^2}}
\leq (1+\eta) \cdot \left|A_{ij}^{(t)}\right| 
\leq (1+\eta) \cdot \max \left\{\left|A_{ij}^{(t)}\right| , \left|A_{ik}^{(t)}\right| , \left|A_{jk}^{(t)}\right| \right\},
\end{equation*}
and
\begin{align*}
\left|A_{ik}^{(t+1)}\right|
{=}\frac{\left| A_{ik}^{(t)}+\eta A_{ij}^{(t)}A_{jk}^{(t)} \right|}{\sqrt{1+(2 \eta + \eta^2)\left(A_{ij}^{(t)}\right)^2}}
&\leq \left|A_{ik}^{(t)}\right|+\eta \left|A_{ij}^{(t)}\right| \cdot \left|A_{jk}^{(t)}\right| \; , \\
&\leq (1+\eta) \cdot \max \left\{\left|A_{ij}^{(t)}\right| , \left|A_{ik}^{(t)}\right| , \left|A_{jk}^{(t)}\right| \right\}. 
\end{align*}
Together, this gives for every $t \geq 0$
\begin{equation}\label{eq:correlation-lower-bound}
\max \left\{\left|A_{ij}^{(t)}\right| , \left|A_{ik}^{(t)}\right| , \left|A_{jk}^{(t)}\right| \right\}
 \leq (1+\eta)^t \cdot \epsilon.
\end{equation}

To derive lower bounds on the changes in correlations, the following inequality will turn out to be useful. For every $x\geq 0$, Taylor's theorem implies that there is some $\xi \in [0,x]$ such that
\begin{equation}\label{eq:Taylor-sqrt}
    \frac{1}{\sqrt{1+x}}=1 - \frac{x}{2} + \frac{3}{4} (1+\xi)^{-5/2} \geq 1 - \frac{x}{2}.
\end{equation}
We obtain the following bounds on the changes in correlation.
By \Cref{cor:A_is_non_decreasing},
$A_{ij}^{(t+1)}$ has the same sign as $A_{ij}^{(t)}$ and
$|A_{ij}^{(t+1)}|\ge |A_{ij}^{(t)}|$.
Moreover, we have
\begin{align*}
A_{ik}^{(t+1)} \cdot \mathrm{sign}\left(A_{ik}^{(t)}\right)
{=}&
\frac{A_{ik}^{(t)}+\eta A_{ij}^{(t)}A_{jk}^{(t)}}{\sqrt{1+(2 \eta + \eta^2)\left(A_{ij}^{(t)}\right)^2}} \cdot \mathrm{sign}\left(A_{ik}^{(t)}\right)
\geq \frac{\left| A_{ik}^{(t)}\right|-\eta \left| A_{ij}^{(t)}A_{jk}^{(t)}\right|}{\sqrt{1+(2 \eta + \eta^2)\left(A_{ij}^{(t)}\right)^2}}\\
\overset{\eqref{eq:Taylor-sqrt}}{\geq}& \left( \left| A_{ik}^{(t)}\right| -\eta \cdot \left| A_{ij}^{(t)}\right|  \cdot \left| A_{jk}^{(t)}\right| \right) \cdot \left(1- \frac{2\eta + \eta^2}{2}\left(A_{ij}^{(t)}\right)^2 \right)\\
&\geq \left| A_{ik}^{(t)}\right| - \eta \cdot \left| A_{ij}^{(t)}\right| \cdot \left| A_{jk}^{(t)}\right|-\frac{2\eta + \eta^2}{2}\left(A_{ij}^{(t)}\right)^2 \\
\overset{\eqref{eq:correlation-lower-bound}}{\geq}& \left| A_{ik}^{(t)}\right|-\left( \eta + \frac{2\eta + \eta^2}{2} \right) (1+\eta)^{2t} \cdot \epsilon^2.
\end{align*}
Considering two cases $A_{ik}^{(t)}>0$ and $A_{ik}^{(t)}<0$,
the calculation above implies that  $A_{ik}^{(t+1)}$ has the same sign as $A_{ik}^{(t)}$ as long as the right hand side is strictly positive.
Moreover, the absolute value of the
correlation decreases by at most $\left( \eta + (2\eta + \eta^2)/2 \right) (1+\eta)^{2t} \cdot \epsilon^2.$
Together, we obtain that, for every $i \neq j$ in $\{1,2,3\}$ and every $t \in \{0, \dots, T\}$, we have
\begin{align*}
\left|A_{ij}^{(t)}-A_{ij}^{(0)}\right|&\le \epsilon^2 \cdot \left( \eta + \frac{2\eta + \eta^2}{2} \right) \cdot \sum\limits_{k=0}^{T-1} (1+\eta)^{2k}
\le \epsilon^2 \cdot \left( \eta + \frac{2\eta + \eta^2}{2} \right) \cdot T\cdot (1+\eta)^{2T}
\le \frac{\epsilon}{2}\;.
\end{align*} 
Since $\left|A_{ij}^{(0)}\right|=\epsilon$,
this implies that the sign of $A_{ij}$ does not flip for the first $T$ steps. In particular, the configuration does not become strictly convex for any sequence of $T$ interactions.
\end{proof}

\section{Polarization for \((d,n,\mathcal{D})\)-stable functions:\\Proof of~\Cref{thm:dn-stable}}
\label{sec:3d-stable}

Our strategy to prove \Cref{thm:dn-stable} is as follows: As a first step, we
prove \Cref{lem:clusters_preserved},
which implies
that the partition into clusters does not change while a configuration is $\epsilon$-inactive.
Recall from the definition of $(d,n,\mathcal{D})$-stable functions
that an $\epsilon$-inactive configuration with at least two clusters 
almost surely becomes $\eps$-active again.
Accordingly, in the second step, we prove that, once a configuration becomes $\epsilon$-active, there exists a constant length sequence of interactions that makes the configuration $\epsilon$-inactive again, but with a strictly
smaller number of clusters. In particular, the number of clusters decreases with some constant probability.

More precisely, we only give such a constant length sequence
if the configuration became active due to $|A_{ij}^{(t)}|>\eps$
for $i,j$ from different clusters
(as opposed to $|A_{ij}^{(t)}|<1-\eps$ for $i,j$ from the same cluster).
Since this indeed occurs with constant probability by the definition of $(d,n,\mathcal{D})$-stability, 
it can also be concluded that the
number of clusters decreases with constant probability.

Putting these facts together,
if a configuration is $\eps$-inactive with $k>1$ clusters, with constant probability
it will become $\eps$-active and
then $\eps$-inactive again with a strictly
smaller number of clusters.
As the number of clusters is bounded
by $\max(d, n)$, almost surely the configuration ends up with only one cluster.
But a configuration with one cluster must be strictly convex, and therefore polarizes
by \Cref{lem:strictly-convex-intro}.

\medskip

Let us proceed with this plan,
starting with the proof of \Cref{lem:clusters_preserved}.
As a preliminary, we give a technical bound which formalizes that, for stable functions, interactions between nearly orthogonal opinions result in only small changes to the opinions.

\begin{claim}
\label{lem:small_old_new_correlation}
    Let \(f\) be a stable update function and $M,c>0$ such that $|f(A)|\le M$ if  $|A|\le c$.
    Then, whenever agent $j$ influences agent $i$ and $|A_{ij}|\le c$, it holds for the
    new opinion $\vu'_i$ that
    \(A(\vu_i', \vu_i) \geq 1 - M\).
\end{claim}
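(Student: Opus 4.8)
The plan is to compute $A(\vu_i',\vu_i)$ explicitly as a function of $A:=A_{ij}$ and $f(A)$, exactly as in the proof of \Cref{claim:1} but with $\vu_i$ in place of $\vu_j$ in the inner product, and then control the resulting expression using the hypothesis $|f(A)|\le M$. Concretely, since $\vu_i'=\vw/\norm{\vw}$ with $\vw=\vu_i+f(A)\vu_j$, expanding gives
\[
A(\vu_i',\vu_i)=\frac{\bck{\vu_i+f(A)\vu_j,\vu_i}}{\norm{\vw}}=\frac{1+A\,f(A)}{\sqrt{1+2A\,f(A)+f(A)^2}}\;.
\]

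Next I would use that $f$ is stable, so $\sign(f(A))=\sign(A)$ and hence $A\,f(A)=|A|\,|f(A)|\ge 0$. Write $x:=A\,f(A)$. The hypothesis $|A_{ij}|\le c$ gives $|f(A)|\le M$, and combined with $|A|\le 1$ this yields $0\le x\le M$ and $f(A)^2\le M^2$. We may assume $M<1$, since otherwise $1-M\le 0\le|A(\vu_i',\vu_i)|$ and there is nothing to prove. Then I would bound the denominator from above by $1+2x+f(A)^2\le 1+2M+M^2=(1+M)^2$ and the numerator from below by $1+x\ge 1$, obtaining
\[
A(\vu_i',\vu_i)\ \ge\ \frac{1+x}{1+M}\ \ge\ \frac{1}{1+M}\ \ge\ 1-M\;,
\]
where the last inequality is the elementary fact $(1-M)(1+M)=1-M^2\le 1$. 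Since $A(\vu_i',\vu_i)\ge 1-M>0$ is in particular positive, the same bound holds for $|A(\vu_i',\vu_i)|$, which is exactly the claim.

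I do not expect a real obstacle here; it is a one-line computation followed by two elementary estimates. The only points that require any attention are invoking stability to guarantee $A\,f(A)\ge 0$ (which is what keeps the numerator at least $1$, so that no cancellation can occur), and checking that the denominator $1+2A\,f(A)+f(A)^2$ is bounded above by $(1+M)^2$ — both of which follow immediately from $|f(A)|\le M$ and $|A|\le 1$.
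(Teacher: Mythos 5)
Your proposal is correct and follows essentially the same route as the paper's proof: the same explicit formula for $A(\vu_i',\vu_i)$, the same use of stability to get $A\,f(A)\ge 0$, and the same chain of bounds reducing to $\frac{1}{1+M}\ge 1-M$. The only (harmless) cosmetic difference is that you bound the denominator by $(1+M)^2$ directly rather than by $(1+|f(A)|)^2$ first.
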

\begin{proof}
    From \(\sign(f(A)) = \sign(A)\), we have \(A f(A) \geq 0\). Suppose $A=A_{ij}$
    satisfies \(\abs{A} \leq c\). Then,
    \begin{equation*}
        A(\vu_i', \vu_i)
        = \frac{1 + A f(A)}{\sqrt{1 + 2 A f(A) + f(A)^2}}
        \geq \frac{1}{1 + \abs{f(A)}}
        \geq \frac{1}{1 + M}
        = 1 - \frac{M}{1 + M}
        \geq 1 - M \; . \qedhere
    \end{equation*}
\end{proof}

\begin{proof}[Proof of \Cref{lem:clusters_preserved}]
    Let \(\epsilon_1 = 1/256\).
    Given that \(f\) is continuous on \([-1, 1]\) and \(f(0) = 0\), it follows that
    for every \(M > 0\) there exists \(c(M) > 0\) such that \(\abs{f(A)} \leq M\) whenever \(\abs{A} \leq c(M)\).
    Take \(M = \left(\epsilon_1/8\right)^2\) 
    and choose \(\epsilon_0 = \min(c(M), M)\).
    Suppose \(\cU^{(t)}\) is \(\epsilon_0\)-inactive.
    Assume that agent \(j\) influences agent \(i\), so only \(\vu_i\) moves from time \(t\) to \(t+1\).
    We want to show that \(\cU^{(t+1)}\) is \(\epsilon_1\)-inactive and is formed by the same clusters as those that compose \(\cU^{(t)}\). We proceed by analysing two cases, depending on
    if $i$ and $j$ were in the same cluster or not.

    Assume that \(i\) and \(j\) were in the same cluster at time \(t\).
    Their interaction does not decrease their absolute correlation,
    that is \(\abs{A_{ij}^{(t+1)}} \geq \abs{A_{ij}^{(t)}} \geq 1 - \epsilon_0\).
    So, for any \(k\) in the same cluster, and from the transitivity of closeness stated in \Cref{lem:transitivity_of_closeness}, we have 
    \(\abs{A_{ik}^{(t+1)}} \geq 1 - 4 \epsilon_0\ge 1-\eps_1>1/2\) because \(\abs{A_{jk}^{(t+1)}} = \abs{A_{jk}^{(t)}} \geq 1 - \epsilon_0\).
    Also, for every \(k\) from another cluster, we have \(\abs{A_{jk}^{(t+1)}} \leq \epsilon_0\). Therefore, \(\abs{A_{ik}^{(t+1)}} \leq 8 \sqrt{\epsilon_0}\le \eps_1<1/2\) from \Cref{lem:transitivity_of_inactivity}.
    Therefore, \(\cU^{(t+1)}\) is \(\epsilon_1\)-inactive 
    with the same clusters as $\cU^{(t)}$.
    
    Now suppose \(i \in C_1\) and \(j \in C_2\) where \(C_1\) and \(C_2\) are different clusters of \(\cU^{(t)}\). 
    As $\eps_0\le c(M)$, from \Cref{lem:small_old_new_correlation}, this implies \(\abs{A(\vu_i^{(t)}, \vu_i^{(t+1)})} \geq 1 - M\).
    For every \(k \in C_1\), \(\abs{A_{ik}^{(t)}} \geq 1 - \epsilon_0\ge 1-M\). Therefore \(\abs{A_{ik}^{(t+1)}} \geq 1 - 4 M\ge 1-\eps_1>1/2\) by \Cref{lem:transitivity_of_closeness}.
    For every \(k \not\in C_1\), \(\abs{A_{ik}^{(t)}} \leq \epsilon_0\le\sqrt{M}\). Therefore \(\abs{A_{ik}^{(t+1)}} \leq 8 \sqrt{M}=\eps_1<1/2\) by \Cref{lem:transitivity_of_inactivity}.
    Again, we conclude that $\cU^{(t+1)}$ is $\eps_1$-inactive
    with unchanged clusters.
\end{proof}

For an $\eps$-inactive configuration, let NC\((\mathcal{U})\) denote the number of clusters in~\(\mathcal{U}\).
In the following lemma, we argue that if a configuration becomes active, then there exists a bounded number of interactions
that strictly decreases the number of clusters.

\begin{lemma}
\label{lem:number_of_clusters_decrease}
    Let $d, n \geq 2$, $f$ be a stable update function and 
    take $\eps_0$ from \Cref{lem:clusters_preserved}.
    For every
    $0<\eps\le \eps_0$ there exists $K_0 = K_0(n, f, \epsilon)$
    such that:
    
    If \(\mathcal{U}^{(t-1)}\) is \(\epsilon\)-inactive
    and there exist two agents \(i, j\) that were in different clusters at time \(t-1\) and such that 
    \(\abs{A_{ij}^{(t)}} > \epsilon\),
    then, there exists a sequence of \(K \leq K_0\) interactions such that \(\mathcal{U}^{(t+K)}\) is \(\epsilon\)-inactive and \(\NC\left(\mathcal{U}^{(t+K)}\right) \leq \NC\left(\mathcal{U}^{(t-1)}\right) - 1\).
\end{lemma}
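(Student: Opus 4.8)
The plan is to exhibit an explicit constant‑length sequence of interventions that fuses the two clusters $C_a\ni i$ and $C_b\ni j$ of $\mathcal{U}^{(t-1)}$ into a single cluster, while leaving every other cluster untouched; this lowers the number of clusters by exactly one and, once the accuracy parameters are chosen small enough, keeps the configuration inactive. \textbf{First, I would record the state at time $t$.} Relabel so that the agent influenced at step $t$ is $i$, and let $C_a,C_b$ be the (distinct) clusters of $\mathcal{U}^{(t-1)}$ containing $i$ and $j$. I claim $\mathcal{U}^{(t)}$ still looks clustered around $i$: there are quantities $\epsilon_1=\epsilon_1(\epsilon,f)$ and $\xi=\xi(\epsilon,f)$, both tending to $0$ as $\epsilon\to0$, with $|A_{ik}^{(t)}|\ge 1-\epsilon_1$ for every $k\in C_a$ and $|A_{ik'}^{(t)}|\le\xi$ for every $k'$ in a cluster $C_c\neq C_a$, while $\epsilon<|A_{ij}^{(t)}|<1-\epsilon$ (in fact $|A_{ij}^{(t)}|$ is small, so $\gamma_{ij}^{(t)}$ is bounded away from both $0$ and $\pi/2$). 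If $i$ was influenced by an agent of $C_a$, this follows from \Cref{cor:A_is_non_decreasing} and \Cref{lem:transitivity_of_closeness}, together with the direct bound $|A_{ik'}^{(t)}|\le|A_{ik'}^{(t-1)}|+\|f\|_\infty|A_{\ell k'}^{(t-1)}|$ read off the update formula \eqref{eq:new-correlation} (whose denominator is $\ge 1$ since $f$ is stable). If $i$ was influenced by an agent $\ell$ outside $C_a$, then $|A_{i\ell}^{(t-1)}|\le\epsilon$, so by \Cref{lem:small_old_new_correlation} $\vu_i$ barely moved and all of its correlations are essentially unchanged.

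\textbf{Second, the fusion procedure.} Fix accuracy parameters $0<\delta_2\ll\delta_1\ll\epsilon$. The procedure runs $T_1$ rounds; each round performs (a) one interaction in which $j$ influences $i$, then (b) for every $k\in C_a\setminus\{i\}$, exactly $T_2$ interactions in which $i$ influences $k$. Step (a) never decreases $|A_{ij}|$, and by \Cref{claim:2} (threshold $\delta=\epsilon$) each such interaction shrinks $1-|A_{ij}|$ by a fixed factor $c(\epsilon)<1$, so choosing $T_1$ with $c(\epsilon)^{T_1}\le\delta_1$ gives $|A_{ij}|\ge 1-\delta_1$ at the end. Step (b) succeeds in bounded time because a single interaction moves $\vu_i$ by an angle at most the universal constant $\Delta_{\max}:=\arctan(\|f\|_\infty)<\pi/2$ (one checks $\langle\vu_i,\vu_i'\rangle\ge(1+\|f\|_\infty^2)^{-1/2}$ using $Af(A)\ge0$); hence right after step (a) one still has $|A_{ik}|\ge\delta_3$ for a universal $\delta_3>0$, and \Cref{claim:2} (threshold $\delta_3$, target $\delta_2$) restores $|A_{ik}|\ge 1-\delta_2$ in $T_2=T_2(\delta_2)$ interactions. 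The total length is $K:=T_1\bigl(1+(n-1)T_2\bigr)=:K_0$, a constant depending only on $\epsilon$, $f$ and $n$ (and the case $C_a=\{i\}$ is even easier, just $T_1$ applications of step (a)). Throughout, no agent outside $C_a$ moves, all signs $\sign(A_{ij})$, $\sign(A_{ik})$ ($k\in C_a$) are invariant (\Cref{cor:A_is_non_decreasing}), and $\vu_i$ stays on the minor geodesic arc from $\vu_i^{(t)}$ to $w:=\sign(A_{ij}^{(t)})\vu_j^{(t)}$, of length $\gamma_{ij}^{(t)}<\pi/2$, with each $\vu_k$ ($k\in C_a\setminus\{i\}$) confined to a small cap around $\vu_i$. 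Writing a point of that arc as $\frac{\sin(\theta-\psi)}{\sin\theta}\vu_i^{(t)}+\frac{\sin\psi}{\sin\theta}w$ with $\theta=\gamma_{ij}^{(t)}$, its two coefficients sum to $\le 1/\cos(\theta/2)\le\sqrt2$, so (since $|A(w,\vu_{k'}^{(t)})|=|A_{jk'}^{(t)}|\le\epsilon$ for $k'$ in a cluster $C_c\neq C_a,C_b$) every $C_a$‑agent keeps correlation $\le\sqrt2\max(\xi,\epsilon)$ with every such $\vu_{k'}$.

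\textbf{Third, reading off the conclusion, and the main obstacle.} After the procedure, inside $C_a\cup C_b$ all correlations are $\ge 1-4\epsilon$: correlations within $C_b$ are unchanged, $|A_{ij}|\ge 1-\delta_1$, $|A_{ik}|\ge 1-\delta_2$ for $k\in C_a$, and \Cref{lem:transitivity_of_closeness} does the rest. The clusters $C_c$ ($c\neq a,b$) are untouched, and every cross‑correlation is $\le\max(\epsilon,\sqrt2\,\xi(\epsilon))$ (for the merged cluster one even has the sharper bound $\epsilon+o(1)$, since at time $t+K$ every old‑$C_a$ agent lies in a tiny cap around $w$). Choosing $\epsilon$ small at the outset so that $4\epsilon$, $\delta_1$, $\delta_2$ and $\sqrt2\,\xi(\epsilon)$ are all well below $1/256$, the configuration $\mathcal{U}^{(t+K)}$ is inactive and, by \Cref{lem:cluster}, its unique cluster partition is $\{C_a\cup C_b\}\cup\{C_c:c\neq a,b\}$, so $\NC(\mathcal{U}^{(t+K)})=\NC(\mathcal{U}^{(t-1)})-1$. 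The crux is keeping $K$ \emph{absolutely} bounded: naively dragging $C_a$ onto $\vu_j$ by moving $i$ a little at each step would require $\Theta(1/\epsilon)$ steps, and this is avoided only by exploiting the geometric contraction of $1-|A_{ij}|$ from \Cref{claim:2}; the companion subtlety — that the pull‑back interactions in step (b) must always start from a correlation bounded away from $0$ — is precisely why one re‑syncs after every single move of $i$, so that $\vu_i$ never drifts by more than the universal angle $\Delta_{\max}<\pi/2$ from the cluster before it is pulled back. A secondary point requiring care is the bookkeeping of the inactivity parameter (that correlations with third clusters do not blow up under the slerp‑type motion), which is handled by the coefficient‑sum bound $\le\sqrt2$ and by taking $\epsilon$ small.
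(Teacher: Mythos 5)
Your fusion procedure for the two clusters $C_a$ and $C_b$ is a reasonable alternative to the paper's (the paper instead keeps a fixed ``hub'' agent $1\in C_1$ immobile, first pulls each $j'\in C_2$ tight onto agent $2$, uses \Cref{lem:transitivity_of_activity} to deduce $|A_{1j'}|\ge\sqrt{\eps'}$ from $|A_{12}|>\eps$, and then pulls $j'$ onto agent $1$; your version drags all of $C_a$ along a geodesic toward $\vu_j$ and re-synchronizes after every step, which also works and is geometrically cleaner). The genuine gap is in your third step: you do not actually establish that $\mathcal{U}^{(t+K)}$ is \emph{$\eps$-inactive}, which is what the lemma asserts and what \Cref{lem:inactive_polarize} needs (there, $t(k+1)$ is defined as the next return to $\eps$-inactivity, with the \emph{same} $\eps$ as in the $(d,n)$-stability definition). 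Your own bounds make the problem visible. Within the merged cluster you only get correlations $\ge 1-4\eps$, which is weaker than the $\ge 1-\eps$ required by \Cref{def:epsilon_activity}; this part is repairable by taking $\delta_1,\delta_2\le\eps^2/16$ as the paper does with its choice $\eps'=\eps^2/16$. The unrepairable part (by tuning constants alone) is the cross-correlation with a third cluster $C_c$: your bound along the arc is $\sqrt2\max(\xi,\eps)>\eps$, and even your ``sharper'' terminal bound is $\eps+O(\sqrt{\delta_1}+\sqrt{\delta_2})>\eps$. So the configuration at time $t+K$ may perfectly well be $\eps$-\emph{active} because some agent of the merged cluster now has correlation in $(\eps,1-\eps)$ with an agent of $C_c$. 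Replacing ``$\eps$-inactive'' by ``$1/256$-inactive'' (which is all your parameter choices give you) breaks the downstream coupling argument.

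The paper's proof confronts exactly this failure mode rather than ruling it out: after merging $C_1$ and $C_2$ it checks whether the result is $\eps$-inactive, and if not, observes that the only possible cause is a correlation exceeding $\eps$ between the merged cluster and some third cluster $C_\ell$ (no agent outside $C_1\cup C_2$ moved), and then \emph{repeats the merge} with $C_1\cup C_2$ and $C_\ell$. Since each round strictly decreases the number of clusters and a one-cluster configuration is automatically $\eps$-inactive, this terminates after at most $n$ rounds with total length $K_0=3n^2T(\eps')$, and the cluster count has dropped by at least one. This iterative merging is the essential structural idea your proposal is missing; without it, or without a genuinely new argument showing the third-cluster correlations stay at or below $\eps$ (which is false in general, as your $\sqrt2$ computation already indicates), the proof does not close.
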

\begin{proof}
    Suppose \(\cU^{(t-1)}\) is \(\epsilon\)-inactive for
    $0<\eps\le \eps_0$.
    From \Cref{lem:clusters_preserved}, \(\mathcal{U}^{(t)}\) is \(1/256\)-inactive, and the clusters are preserved from \(t-1\) to \(t\).
    Now suppose w.l.o.g.~there are two clusters \(C_1\) and \(C_2\) and two agents \(1 \in C_1\), \(2 \in C_2\) such that \(\epsilon < \abs{A_{12}^{(t)}}\le 1/256\).
    Let's find a sequence of \(K > 0\) interactions such that \(\mathcal{U}^{(t+K)}\) is \(\epsilon\)-inactive and NC\((\mathcal{U}^{(t+K)}) \leq \text{NC}(\mathcal{U}^{(t-1)}) - 1\).
    
    Let \(\epsilon' = \epsilon^2 / 16\).
    From \Cref{lem:close_in_constant_number_of_steps}, there exists \(T = T(\epsilon')\) such that for any pair of agents with 
    $\abs{A_{ij}^{(t)}}\ge\epsilon'$, after $T$ interactions between
    them it holds $\abs{A_{ij}^{(t+T)}}\ge 1-\epsilon'$.
        
    In particular, for every \(i \in C_1\),
    it holds \(\abs{A_{1i}^{(t)}} \geq 1 - 1/256>\epsilon'\). 
    If agent \(1\) influences agent \(i\) for \(T\) successive steps, we will obtain \(\abs{A_{1i}^{(t+T)}}\ge 1 - \epsilon'\). We can repeat this for every \(i \in C_1\). Therefore, there exists \(T_1 > 0\) and a sequence of \(T_1\) interactions such that at time \(t+T_1\) we have \(\abs{A_{1i}} \geq 1 - \epsilon'\) for every \(i \in C_1\).

    On the other hand, let $j\in C_2$, in particular \(\abs{A_{2j}^{(t)}} \geq 1 - 1/256\).
    Similarly as before, let agent 2 influence agent $j$ for $T(\eps')$ times,
    after which
    we will have $|A_{2j}|\ge 1-\eps'$.
    At that time, from the transitivity of activity stated in \Cref{lem:transitivity_of_activity} (and by noting that \(\epsilon^2 = 16 \epsilon'\)), we also have \(\abs{A_{1j}}\ge \sqrt{\epsilon'} > \epsilon'\).
    Accordingly, we now let agent \(1\) influence agent \(j\) further $T(\eps')$ times, which makes agents \(1\) and \(j\) close, in particular $|A_{1j}|\ge 1-\eps'$.
    
    Repeating this procedure for every $j\in C_2$, 
    there exists \(T_2 > 0\) and a sequence of \(T_2\) interactions such that at time \(t+T_1+T_2\) we have
    \(\abs{A_{1i}} \geq 1 - \epsilon'\) for every \(i \in C_1\cup C_2\).
    Let \(K_1 = T_1 + T_2\). We conclude that for every \(i \in C_1\cup C_2\), we have 
    \(\abs{A_{1i}^{(t+K_1)}} \geq 1 - \epsilon'\). From Lemma \ref{lem:transitivity_of_closeness}, we have \(\abs{A_{ij}^{(t+K_1)}} \geq 1 - 4 \epsilon' = 1 - \epsilon^2/4 \geq 1 - \epsilon\), for every \(i, j \in C_1 \cup C_2\).

    If the configuration \(\mathcal{U}^{(t+K_1)}\) is $\eps$-inactive, we are done. Indeed, all
    opinions from $C_1\cup C_2$ are now in one cluster, and opinions in other clusters did not move,
    so   
    we must have reduced the number of clusters by (at least) one.
    
    On the other hand, if $\mathcal{U}^{(t+K_1)}$ is \(\epsilon\)-active, recall that $\mathcal{U}^{(t-1)}$ was $\eps$-inactive.
    Since then, all interactions moved only opinions belonging to $C_1$
    or $C_2$. Therefore, if $\cU^{(t+K_1)}$ is $\eps$-active, it must be
    because there is~$i_1\in C_1\cup C_2,i_\ell\in C_\ell$ such that $|A_{i_1i_\ell}^{(t+K_1)}|>\eps$
    and $\ell\ge 3$.

    But now we can repeat the same argument with respect to clusters
    $C_1\cup C_2$ and $C_\ell$. (In particular, note that at time $t+K_1$ we have
    $|A_{i_1 i}|\ge 1-\eps$ for every $i\in C_1\cup C_2$ and
    $|A_{i_\ell i}|\ge 1-\eps$ for every $i$ in $C_\ell$.)
    Therefore, there exists $K_2$ and a sequence of interactions
    (all of them moving opinions only in $C_1,C_2$ and $C_\ell$) such that at time $t+K_1+K_2$ we have $|A_{ij}^{(t+K_1+K_2)}|\ge 1-\eps$ for every $i,j\in C_1\cup C_2\cup
    C_\ell$.

    Repeating this procedure at most $n$ times if necessary, we finally arrive
    at a sequence of $K$ interactions such that $\cU^{(t+K)}$ is $\eps$-inactive.
    By construction, it is clear that this configuration has
    a strictly smaller number of clusters than $\cU^{(t-1)}$. In particular,
    at least two clusters of $\cU^{(t-1)}$ have been merged into one.
    By noting that for every \(i\), \(K_i \leq 3 n T(\epsilon')\), and \(i \leq n\), we can choose \(K_0 = 3 n^2 T(\epsilon')\), and the claim is verified.
\end{proof}

We are now ready to prove \Cref{thm:dn-stable}.
Note that the preceding results required only
that $f$ is stable. Only this proof
uses the stronger assumption that $f$ is $(d,n,\mathcal{D})$-stable.

\begin{proof}[Proof of \Cref{thm:dn-stable}.]
    Recall \(\epsilon\) from the definition of $(d,n,\mathcal{D})$-stability.
    Let $\cU^{(0)}$ be an initial configuration which is not
    separable (and hence remains not separable for the
    rest of time).
    If it is \(\epsilon\)-active, then from \Cref{lem:almost_sure_inactivity}, a.s.~there exists \(t_1\) such that \(\cU^{(t_1)}\) is \(\epsilon\)-inactive.
    Therefore, let us assume w.l.o.g.~that
    \(\mathcal{U}^{(0)}\) is also \(\epsilon\)-inactive. 
    
    Let $t(0)=0$ and $\NC_0$ be the number of clusters of configuration
    $\mathcal{U}^{(0)}$. If $\NC_0=1$, then by \Cref{claim:convexity-of-one-cluster} the configuration
    is strictly convex and from \Cref{lem:strictly-convex-intro} \(\cU^{(t)}\) polarizes almost surely.

    Suppose $\NC_0>1$. The function \(f\) is \((d,n,\mathcal{D})\)-stable.
    Therefore, almost surely there exists the earliest \(t\) such that \(\mathcal{U}^{(t)}\) is \(\epsilon\)-active.
    From \Cref{lem:almost_sure_inactivity}, almost surely there exists \(t(1) > t\) such that
    the configuration \(\mathcal{U}^{(t(1))}\) first returns to being \(\epsilon\)-inactive.
    
    Repeating this argument, we let
    \(t(k)\) as the first \(t > t(k-1)\) such that \(\cU^{(t-1)}\) is \(\epsilon\)-active and \(\cU^{(t)}\) is \(\epsilon\)-inactive. Then, we let $\NC_k$ to be the number of clusters of the configuration at time $t(k)$.    
    The sequence $t(k)$ is either infinite or terminates at the first $k$ such that
    $\NC_k=1$. In the latter case, we extend the sequence $(\NC_k)$ so that $\NC_s=1$ for every
    $s>k$.

    As mentioned, if $k$ is the first index such that $\NC_k=1$, then the configuration
    at time $t(k)$ is strictly convex and $\mathcal{U}^{(t)}$ polarizes. Accordingly,
    we will be done if we prove that almost surely $\NC_k=1$ for some~$k$.

    To that end, condition on some history of configurations up to time $t(k)$ such that
    $\NC_k>1$. Let $t'>t(k)$ be the first later time such that
    the configuration becomes $\eps$-active.
    By \Cref{lem:clusters_preserved}, the configurations at times $t'-1$ and $t'$
    have the same clusters as at time $t(k)$. Furthermore, by the definition
    of $(d,n,\mathcal{D})$-stability, with probability at least $p>0$, at time $t'$ there exist
    two agents from different clusters with correlation exceeding $\eps$. By
    \Cref{lem:number_of_clusters_decrease}, in that case there exists a sequence of at most
    $K_0$ interactions such that afterwards the configuration becomes
    $\eps$-inactive with a strictly smaller number of clusters. In other words,
    after those interactions it will hold $\NC_{k+1}<\NC_k$.

    Let $w>1$.
    Averaging over all histories such that $\NC_k=w$, we conclude that
    $\Pr[\NC_{k+1}\le w-1\;\vert\;\NC_k=w]\ge p\cdot p_{\min}^{K_0}$,
    where $p_{\min}>0$ is the smallest probability in the interaction
    distribution $\mathcal{D}$.
    
    To show that almost surely there exists \(k\) such that \(\text{NC}_k = 1\), we couple the random process \(\text{NC}_k\) with the following Markov chain.
    Let \(X_k \in \{1,\ldots,n\}\) for \(k \in \mathbb{N}\). Let $q=p\cdot p_{\min}^{K_0}$. We define
    \begin{align*}
        &X_0 = n \; , \\
        &\mathbb{P}\left[X_{k+1} = 1 | X_k = 1 \right] = 1 \; , \\
        &\forall m > 1, \; 
        \begin{cases*}
            \mathbb{P}\left[X_{k+1} = m-1 \, | \, X_k = m \right] = q \\
            \mathbb{P}\left[X_{k+1} = n \, | \, X_k = m \right] = 1 - q
        \end{cases*} \; .
    \end{align*}
    Let's show that there exists a joint distribution \((\text{NC}_k, X_k)\) such that almost surely \(X_k \geq \mathrm{NC}_k\) for every $k$.
    We proceed by induction on $k$.
    For $k=0$ indeed it holds with certainty
    $\NC_0\le n=X_0$.
    
    Now suppose inductively that the joint distribution \((\text{NC}_{s}, X_{s})_{s\le k}\) is already defined 
    up to some $k$.  
    Condition on some \(\text{NC}_k = w\) and \(X_k = m\).
    By induction, \(X_k \geq \text{NC}_k\) almost surely, in particular $w\le m$.

    If $w=m=1$, by definition we have almost surely $X_{k+1}=1=\NC_{k+1}$.

    Conversely, let $m>1$.
    By the definition of $X_k$, conditioned on $X_k=m$, we need
    \(X_{k+1} = m - 1\) with probability \(q\) and \(X_{k+1} = n\) with probability \(1-q\).
    But as stated above, $\Pr[\NC_{k+1}\le w-1\;\vert\;\NC_k=w]\ge q$, so indeed there
    exists a joint distribution such that if $X_{k+1}=m-1$, then $\NC_{k+1}\le w-1\le m-1$
    almost surely. Otherwise, it holds $\NC_{k+1}\le n=X_{k+1}$.    
    We conclude that almost surely for every \(k\), \(\text{NC}_{k} \leq X_{k}\).

    Since $1$ is the only recurrent state in the Markov chain $(X_k)$, almost surely
    there exists $k$ such that $X_k=1$.
    (This is because the Markov chain has just one ``sink'' state where it will
    a.s.~get stuck, see, e.g., (1.5.6) in~\cite{rosenthal2019first}.)
    By the coupling property $X_k\ge \NC_k$, that
    implies $\NC_k=1$ almost surely, which, as explained above, is sufficient for polarization.
\end{proof}

\section{Conclusion}\label{sec:conclusion}
In this paper we studied the phenomenon of polarization in a geometric opinion exchange framework.
We have shown that when we consider opinions as vectors and an interaction between opinions according to a geometric update rule
depending on their overall correlation, then certain update rules obeying biased assimilation will cause polarization (issue radicalization and issue alignment) in a fully connected network. There remain several interesting
avenues for further research.

For example, do there exist
interesting update rules for $d\ge 3$ which do
not always converge to a polarized configuration?
Similarly, are there natural modifications to the model that could
exhibit a more varied behavior?

Another natural research direction is the analysis
of non fully connected networks. 
This seems especially interesting as it could
connect the interaction of the geometry
of interacting opinions (as studied in this paper)
with the geometry of connections between agents.

Finally, we also leave
unaddressed the subjects of the speed of convergence to polarization and which properties affect it,
and the sizes of the two polarized groups for update functions which are not odd.

\appendix
\section{Deferred proofs}
\label{appendix_A}

\begin{proof}[\textbf{Proof of \Cref{lem:transitivity_of_inactivity}}]
    For the first statement, assume that $\eps<1$ (otherwise the statement is trivial)
    and that $A_{ik}\ge 1-\eps$. The other case $A_{ik}\le-(1-\eps)$ follows
    by an easy reduction as in the proof of \Cref{lem:transitivity_of_closeness}.

    By the triangle inequality,
    \begin{equation*}
        \Big|
            \|\vu_j-\vu_k\|-\|\vu_j-\vu_i\|
        \Big|\le\|\vu_i-\vu_k\|=\sqrt{2(1-A_{ik})}\le\sqrt{2\eps}\;.
    \end{equation*}
    On the other hand,
    \begin{equation*}
    \frac{1}{\sqrt{2}}\Big|\|\vu_i-\vu_j\|-\sqrt{2}\Big|
    =\Big|\sqrt{1-A_{ij}}-1\Big|
    \le\left(1-\sqrt{1-|A_{ij}|}\right)
    \le|A_{ij}|\le\sqrt{\eps}\;.
    \end{equation*}
    Putting the two bounds together, 
    $\sqrt{2}-2\sqrt{2\eps}\le\|\vu_j-\vu_k\|\le\sqrt{2}+2\sqrt{2\eps}$,
    which implies $2-16\sqrt{\eps}\le\|\vu_j-\vu_k\|^2\le 2+16\sqrt{\eps}$
    and $|A_{jk}|\le 8\sqrt{\eps}$. The first statement is thus proved.

    \smallskip

    For the second statement, 
    consider an orthonormal coordinate system where \(\vu_i\) has coordinates \((1, 0)\). 
    Let \(x_1, x_2, y_1, y_2 \in \mathbb{R}\) be such that \(\vu_j\) has coordinates 
    \((x_1, y_1)\), and \(\vu_k\) has coordinates \((x_2, y_2)\).

    Now, suppose \(\abs{A_{ij}} \leq \epsilon\) and \(\abs{A_{ik}} \leq \epsilon\). Let's show that \(\abs{A_{jk}} \geq 1 - 2 \epsilon^2\).
    We want to show that for every \(\epsilon\ge 0\), if \(\abs{x_1} \leq \epsilon\) and \(\abs{x_2} \leq \epsilon\), then \(\abs{x_1 x_2 + y_1 y_2} \geq 1 - 2 \epsilon^2\).
    Note that $y_1^2=1-x_1^2\ge 1-\eps^2$ and similarly $y_2^2\ge 1-\eps^2$. 
    In particular, $|y_1y_2|\ge 1-\epsilon^2$
    and $|x_1x_2|\le\epsilon^2$.
    That gives
    \begin{equation*}
        \abs{A_{jk}}
        \geq \abs{y_1 y_2} - \abs{x_1 x_2}
        \geq 1 - \epsilon^2 - \epsilon^2
        = 1 - 2 \epsilon^2 \; . \qedhere
    \end{equation*}
\end{proof}

\begin{proof}[\textbf{Proof of \Cref{lem:transitivity_of_activity}}]
    As before, suppose $\eps_1,\eps_2\le 1$, otherwise the conclusion
    is trivial.
    Suppose also \(A_{ij} \geq \sqrt{\epsilon_1}\) and \(A_{ik} \geq 1 - \epsilon_2\)
    (otherwise we can flip $\vu_j$ and/or $\vu_k$ as necessary, which does
    not change the absolute values $|A_{ij}|,|A_{ik}|,|A_{jk}|$).    
    So, \(1 - A_{ij} \leq 1 - \sqrt{\epsilon_1} \le 1\) and \(1 - A_{ik} \leq \epsilon_2\).
    We have
    \begin{align*}
        A_{jk}
        &= 1 - \frac{\norm{\vu_j - \vu_k}^2}{2} \; , \\
        &\geq 1 - \frac{\left(\norm{\vu_j - \vu_i} + \norm{\vu_i - \vu_k}\right)^2}{2} \; , \\
        &= 1 - \frac{1}{2} \left(\sqrt{2 (1 - A_{ij})} + \sqrt{2(1 - A_{ik})}\right)^2 \; , \\
        &\geq 1 - \left(1 - \sqrt{\epsilon_1} + \epsilon_2 + 2 \sqrt{(1 - \sqrt{\epsilon_1}) \epsilon_2} \right) \; , \\
        &\geq \sqrt{\epsilon_1} - \epsilon_2 - 2 \sqrt{\epsilon_2} \; , \\
        &\geq \sqrt{\epsilon_1} - 3 \sqrt{\epsilon_2} \; . \qedhere
    \end{align*}
\end{proof}
 
\begin{proof}[\textbf{Proof of \Cref{cl:contraction-active}}]
        \(f\) is an active function, so we let \(m > 0\) denote the infimum of \(\abs{f(A)}\) over \([-1, 1]\).
        Recall from \Cref{claim:new_correlation} that the new correlation is given by
        \begin{equation}
            A' = \frac{A + f(A)}{\sqrt{1 + 2 A f(A) + f(A)^2}} \; .
            \label{eq:51}
        \end{equation}
        Let's consider the cases \(\abs{A} > \abs{A_0}\) and \(\abs{A} \leq \abs{A_0}\) separately.
        \paragraph{Case \(\abs{A} > \abs{A_0}\):} We know that \(-1 \leq A \leq 1\) and \(A f(A) \leq \abs{f(A)}\). Therefore         
        \(1 + 2 A f(A) + f(A)^2 \leq \left(1 + \abs{f(A)}\right)^2\). Furthermore, given that \(\abs{A} > \abs{A_0}\), it follows from the definition of an active function that \(\sign(A) = \sign(f(A))\).
        Hence,
        \begin{equation*}
            \abs{A'}
            \geq \frac{\abs{A + f(A)}}{1 + \abs{f(A)}}
            = \frac{\abs{A} + \abs{f(A)}}{1 + \abs{f(A)}}
            = 1- \frac{1-|A|}{1 + |f(A)|} \; .
        \end{equation*}
        From \eqref{eq:51}, \(\sign(A) = \sign(f(A))\) implies \(\sign(A') = \sign(A)\). Thus, proving the claim in this case is equivalent to proving that \(1 - \abs{A'} \leq \beta (1 - \abs{A})\).
        Finally, given that for every \(A\), \(\abs{f(A)} \geq m\), we obtain
        \begin{equation*}
            \abs{A'} 
            \geq 1- \frac{1-|A|}{1 + |f(A)|}
            \geq 1- \frac{1-|A|}{1 + m} \; .
        \end{equation*}
        By letting \(\beta \coloneq 1/(1+m) < 1\), it follows that \(1 - \abs{A'} \leq \beta(1 - \abs{A})\).
        This also implies that \(f\) is \((c, \beta)\)-contractive for \(c = \abs{A_0}\).
    
        \paragraph{Case \(\abs{A} \leq \abs{A_0}\):} Without loss of generality
        suppose \(A_0 \geq 0\) (the case \(A_0 < 0\) is symmetric).
        So, \(\abs{A} \leq \abs{A_0}\) implies \(-A_0 \leq A \leq A_0\). But we need not consider the case \(A = A_0\) as it is not covered by the claim. 
        The argument in this case proceeds as follows. We let \(M \coloneq (A-m)/\sqrt{1-2Am+m^2}\) and we show that \(A' \leq M\). Then we will show that there exists \(\delta > 0\) such that for every \(A \in [-A_0, A_0]\), \(A - \delta \geq M\).
        Indeed that would imply that \(A' \leq A - \delta\) when \(-A_0 \leq A < A_0\).
        By taking \(\beta = (A_0 + 1 - \delta)/(A_0 + 1)\), it would follow that \(A' + 1 \leq A + 1 - \delta < \beta(A + 1)\) because
        \begin{equation*}
            \beta (A + 1) =  \left(1 - \frac{\delta}{A_0 + 1}\right) (A + 1) = A + 1 - \delta \frac{A + 1}{A_0 + 1} > A + 1 - \delta \; ,
        \end{equation*}
        which concludes the proof.
        
        Let's then show that \(A' \leq M\).     
        To that end, fix $-1< A< 1$ and
        let us show that \(A'\) is an increasing function of \(f(A)\).
        Define \(S \coloneq S(x) \coloneq 1 + 2 A x + x^2\).
        We want to show that \(g(x) = (A + x)/\sqrt{S}\) is increasing
        for $x\in\mathbb{R}$.
        Note that \(S > (A+x)^2\ge 0\).
        The result follows directly by considering the derivative of \(g\):
        \begin{equation*}
            g'(x) = \frac{1}{\sqrt{S}} -\frac{(A + x)^2}{S \sqrt{S}}
            = \frac{1}{\sqrt{S}} \left(1 - \frac{(A + x)^2}{S} \right)
            > 0 \; .
        \end{equation*}
        We know that when \(-A_0\le A < A_0\), \(f(A) < 0\), so \(f(A) \leq -m\). Given that \(A'\) increases with \(f(A)\) we get
        \begin{equation*}
            A' \leq \frac{A - m}{\sqrt{1 - 2 A m + m^2}} = M \; .
        \end{equation*}
        We will now show that $ M = (A-m)/\sqrt{1-2Am+m^2}<A$ for $-A_0\le A\le A_0$.
        Suppose for the sake of contradiction that \((A - m)\sqrt{1 - 2 A m + m^2} \geq A\).
        This implies
        \begin{equation}
            \label{eq:A-m}
            A - m \geq A \sqrt{1 - 2 A m + m^2} \; .
        \end{equation}
        If \(-A_0 \leq A \le 0\), then
        \((A - m)^2 \leq A^2 ( 1 - 2Am + m^2)\) and so \(m^2 - 2 A m \leq A^2 (m^2 - 2Am)\). Given that \(m^2 - 2 A m > 0\), we obtain the contradiction \(A^2 \geq 1\).
        Similarly, if \(0 < A \le A_0\), then
        \((A - m)^2 \geq A^2 (1 - 2Am + m^2)\) and so \(m^2 - 2 A m \geq A^2 (m^2 - 2Am)\). In this case \(m^2 - 2 A m < 0\) because \(A - m > 0\) from \eqref{eq:A-m}. The contradiction \(A^2 \geq 1\) follows again.
        We conclude that
        \(A - (A - m)/\sqrt{1 - 2 A m + m^2}\) is a positive continuous function of \(A\) on the closed interval \([-A_0, A_0]\). Therefore, it has a minimum \(\delta > 0\) on that interval.
        We conclude that
        \(M \leq A - \delta\) when \(-A_0 \leq A < A_0\). The claim follows.
\end{proof}

\begin{proof}[\textbf{Proof of \Cref{claim:3}}]
    For further use, we note that
    \begin{subequations}
    \begin{equation}
        \label{eq:cosine_upper}
        \forall x, \; 1 - \cos x \leq \frac{1}{2} x^2 \; .
    \end{equation}
    \begin{equation}
        \label{eq:cosine_lower}
        \exists c > 0 \, : \, \forall 0 \le x \le c, \; 1 - \cos x \geq \frac{x^2}{2} - x^4 \; .
    \end{equation}
    \end{subequations}

    First let us prove the bound on $\gamma'$ for
    $\gamma\le\gamma_0$, where we choose
    $\gamma_0>0$ small enough.
    Let \( \epsilon:=1-\abs{A}\), \( \epsilon':=1-\abs{A'}\). By Claim \ref{claim:contraction_stable}, there exists \(0 < c < 1\) such that \(\epsilon' \le c \epsilon\). Note that \(\gamma = \arccos(|A|)\) and \(\gamma' = \arccos(|A'|)\).
    
    From the bounds on cosine (Eq. \ref{eq:cosine_upper}, \ref{eq:cosine_lower}), we have
    \begin{equation*}
        \epsilon = 1 - \cos{\gamma} \leq \frac{1}{2}\gamma^2
        \quad \text{ and } \quad
        \epsilon' = 1 - \cos{\gamma'} \geq \frac{1}{2}(\gamma')^2 - (\gamma')^4 \; .
    \end{equation*}
    Furthermore, for every \(\delta > 0\), there
    exists $x_0$ such that if \(x < x_0\), then \(x^4 < \delta \frac{1}{2}x^2\).
    Therefore, if \(\gamma' \le \gamma\le x_0\), then 
    \begin{equation*}
        \epsilon' \ge \frac{(1 - \delta)}{2}(\gamma')^2 \; .    
    \end{equation*}
    We then have
    \begin{equation*}
        \frac{(1 - \delta)}{2}(\gamma')^2 \le \epsilon' \le c \epsilon \leq \frac{c}{2}\gamma^2 \; , 
    \end{equation*}
    and so \(\gamma' < \sqrt{c/(1 - \delta)} \gamma\). By choosing \(\delta < 1 - c\), and setting \(c_0 = \sqrt{c/(1 - \delta)} < 1\), we conclude
    the case $\gamma\le\gamma_0$.

    Finally, we also need to obtain the bound for 
    \(\gamma \in [\gamma_0, \frac{\pi}{2}-\delta]\). 
    First, assume $\gamma=\alpha$.
    Note that \Cref{cor:A_is_non_decreasing} implies that \(\gamma' < \gamma\) when \(\gamma \in (0, \frac{\pi}{2})\).
    Furthermore, given that \(A'\) is a continuous function of \(A\), we have that \(\gamma{'}/\gamma<1\) is a continuous function of \(\gamma\in [\gamma_0,\pi/2-\delta]\).
    So, there exists \(0 < M < 1\) such that if \(\gamma \in [\gamma_0, \frac{\pi}{2}-\delta]\), then \(\gamma'/\gamma \leq M\).
    If $\gamma=\pi-\alpha$, the argument is entirely
    symmetric.
    
    We conclude that over the whole interval \([0, \frac{\pi}{2}-\delta]\) it holds \(\gamma' \leq \max(c_0, M) \gamma\).
\end{proof}

\paragraph{Acknowledgements} This research project was initiated as a part of the DAAD activity
in Dortmund organized by Amin-Coja Oghlan and Nicola Kistler and their research groups.
We are grateful to the referees
for their extensive and helpful comments.

\printbibliography

\end{document}